\setlist{
%  noitemsep,
  listparindent=3em,
%  labelsep=2em, 
  itemindent=\parindent,
  parsep=0pt,
  leftmargin=\parindent
}
\newlist{nested}{enumerate}{5}
\setlist[nested]{
  nosep,    
  noitemsep,
%  align=Case,
  listparindent=2\parindent,
%  labelsep=2em, 
%  itemindent=3em,
    leftmargin=2\parindent,
  parsep=0pt
}
\newcommand{\old}[1]{}
\mathchardef\mhyphen="2D % Define a "math hyphen"
\newtheorem{theorem}{Theorem}
\newtheorem{lemma}[theorem]{Lemma}
\newtheorem{cor}[theorem]{Corollary}
\newtheorem{obs}[theorem]{Observation}
\newtheorem{dfn}[theorem]{Definition}
\newtheoremstyle{case}{}{}{}{}{}{:}{ }{}
\theoremstyle{case}
\theoremstyle{definition}
\def\G{{\mathcal G}}
\def\H{{\mathcal H}}
\def\R{{\mathcal R}}
\def\S{{\mathcal S}}
\newcommand{\ie} [1] {\textit{i.e.,} #1}
\def\etal{{\it et~al.}\,}
\begin{document}

\title{Equivalence Relations for Computing Permutation Polynomials} %$\leq$ 10 over GF(q), where $q \leq$ 97.} 
\author{Sergey Bereg\thanks{
Department of Computer Science,
University of Texas at Dallas,
Box 830688,
Richardson, TX 75083
USA. Research of the first author is supported in part by NSF award CCF-1718994. }
\and Brian Malouf$^*$
\and Linda Morales$^*$
\and Thomas Stanley$^*$
\and I. Hal Sudborough$^*$
\and Alexander Wong$^*$
}

\maketitle

\begin{abstract}
We present a new technique for computing permutation polynomials based on equivalence relations. 
The equivalence relations are defined by expanded normalization operations and new functions that map permutation polynomials (PPs) to other PPs. 
Our expanded normalization applies to almost all PPs, including when the characteristic of the finite field divides the degree of the polynomial.
The equivalence relations make it possible to reduce the size of the space, when doing an exhaustive search.
As a result, we have been able to to compute almost all permutation polynomials of degree $d$ at most 10 over $GF(q)$, where $q$ is at most 97.
We have also been able to compute nPPs of degrees 11 and 12 in a few cases. The techniques apply to arbitrary $q$ and $d$.
In addition, the equivalence relations allow the set all PPs for a given degree and a given field $GF(q)$ to be succinctly described by their representative nPPs. 
We give several tables at the end of the paper 
listing the representative nPPs (\ie the equivalence classes) for several values of $q$ and $d$.  
We also give several new lower bounds for $M(n,D)$, the maximum number of permutations on $n$ symbols with pairwise Hamming distance $D$, mostly derived from our results on PPs.
\end{abstract}

%--------------------------------------------
\section{Introduction}
\label{sec:intro}

Let $GF(q)$ denote the finite field over $q=p^m$ elements, where $p$ is prime and $m \ge 1$. The prime $p$ is called the {\em characteristic} of the field.
A polynomial $P(x)$ over $GF(q)$ is a {\em permutation  polynomial} ($PP$) if it permutes the elements of $GF(q)$. Let $N_d(q)$ be the number of PPs of degree $d$ over $GF(q)$. 
Lidl and Mullen \cite{lidl88,lidl93} posed the problem of computing $N_d(q)$ and they  
found the following boundary conditions for $N_d(q)$:
\begin{align*}
N_1(q)&=q(q-1)\\
N_d(q)&=0 \text{ if } d|(q-1) \text{ and } d>1\\
\sum_{d=1}^{q-2} N_d(q)&=q!
\end{align*}

Dickson \cite{dickson1897} characterized all PPs of degree up to 6. Hou \cite{Hou15} gave a survey of recent results about PPs.
Chu, Colbourn, and Dukes \cite{chu2004}, using a table of all PPs of degree at most five given by Lidl and Mullen \cite {lidl88}, counted the number of different PPs of degree at most 5.  
Shallue and Wanless \cite{Shallue-Wanless-pp-13} described those of degree 6. 
Li, Chandler, and Xiang \cite{li10} described PPs of degree 6 and 7 over a field of characteristic 2. %See \cite{chu2004, li10, sw-pp-13}. 

An exceptional PP is a PP $P(x)$ for $GF(q)$, which, for infinitely many $m$, is also a PP for $GF(q^m)$. 
Fan \cite{Fan19a} obtained a classification of all permutation polynomials of degree 7 over $GF(q)$ for any odd prime power $q>7$ up to linear transformations and proved that there are no non-exceptional PPs of degree 7 for finite fields of order $q$ when $q>49$. 
Fan also \cite{Fan19b} obtained a classification of all permutation polynomials of degree 8 over $GF(q)$ for any odd prime power $q>8$ up to linear transformations and proved that there are no PPs of degree 8 for finite fields of order $q$ when $q>31$. 
In addition, Fan \cite{Fan19} described all PPs of degree 8 over finite fields of characteristic 2 and proved that non-exceptional PP's of degree 8 do not exist over GF($2^r$) if $r > 6$.

It is an interesting problem to enumerate all permutation polynomials of somewhat large degree.
Let $P(x)=a_dx^d+ a_{d-1}x^{d-1}+ \dots  + a_1x+ a_0$ be a degree $d$ permutation polynomial over $GF(q)$.
A brute force search for degree $d$ permutation polynomials over $GF(q)$ 
would require $O(dq^{d+2})$ time as there are $d+1$ coefficients, with $q$ choices for each one, and for each of these possibilities, one needs to examine the list of $q$ values formed by the $d$ terms of the polynomial on each element of $GF(q)$ to see if the result is a permutation. 
One way to make the search more efficient is to look for all {\em normalized} PPs (denoted by {\em nPPs}), which fixes certain coefficients. 
In normalization,
$a_d=1$, $a_0=0$, and when the field characteristic $p$ does not divide $d$, we have $a_{d-1}=0$ \cite{lidl88}. 
It is known that any PP can be transformed into an nPP by certain algebraic operations, which we will describe shortly.
In this paper, we use the names {\em c-normalization} to denote the case where $p \nmid d$, {\em m-normalization} when $p>2 \mid d$, and {\em b-normalization} when $p=2 \mid d$.
Properties of the latter two types of normalization will be investigated in the next section.
In all three cases of normalization, searching for nPPs takes $O(dq^{d-1})$ time as three coefficients are fixed. 

We define transformations that map PPs to PPs, and derive new equivalence relations based these functions, which yield a more succinct  classification of PPs.
Using these functions, we are able to fix a fourth coefficient, making the search for nPPs take $O(dq^{d-2})$ time.
Our new equivalence relations are defined for arbitrary $q$ and $d$.
We provide results for PPs of somewhat large $q$ and $d$.
Specifically, we characterize and count PPs of degree at most 9 for finite fields for primes and prime powers up to $q=97$,
PPs of degree 10 for primes and prime powers up to $q=73$, and PPs of degree 11 for primes and prime powers up to 32. 
This is a near complete accounting of PPs up to the stated limits for $q$ and $d$. 

Let $n$ be a positive integer and let $\sigma$ and $\pi$ be permutations over $n$ symbols. 
A {\em permutation array (PA)} is a set of permutations on $n$ symbols. 
$\sigma$ and $\pi$  have {\em Hamming distance at least D}, denoted by $hd(\sigma,\pi) \ge D$, if $\sigma(x) \ne \pi(x)$ in at least $D$ different positions $x$. 
A PA $A$ has Hamming distance $D$, denoted by $hd(A) \ge D$, if every pair of permutations in $A$ has Hamming distance at least $D$.  
The maximum number of permutations in a PA $A$ on $n$ symbols with $hd(A) \ge D$ is denoted by $M(n,D)$. 

We are interested in an application of permutation polynomials to permutation arrays with large Hamming distance. 
Much recent work has focused on computing large permutation arrays with a given lower bound for their pairwise Hamming distance \cite{bls-18,bmmms-19,bmms-kp-17,bmms17,bmms-19,bms17,chu2004,colbourn2004,gao13,jani15,jani18,Nguyen,smith2012new,wang17,Neri19}. 
Chu \etal \cite{chu2004} showed that PPs can be used for lower bounds on $M(n,D)$. 
Let $P(x)$ and $Q(x)$ be two degree $d$ PPs over  $GF(q)$.
Basically, $P(x)$ and $Q(x)$ can agree in at most $d$ points, because for every set of $d$ points there is one and only one polynomial of degree $d$ that passes through them. 
So, the corresponding permutations must disagree in at least $q-d$ positions. 
That is, the permutations have Hamming distance at least $q-d$. 
So, it follows that, if $\sum_{k=1}^d {N_k(q)} \ge T$, then $M(q,q-d) \ge T$, for some positive integer $T$.
In addition, Chu, \etal \cite{chu2004} showed that when $q$ is a prime power:
\begin{itemize}
    \item
        $M(q,q-d)\ge \sum_{k=1}^d N_k(q).$  
    \item
        If $q=2^k\not\equiv 1\pmod 3$, then $M(q,q-3)\ge (q+2)q(q-1) \text{ and } M(q,q-4)\ge \frac 13 q(q-1)(q^2+3q+8)$.
    \item
        If  $q\not\equiv 2\pmod 3$, then $M(q,q-2)\ge q^2$. 
\end{itemize}
R. Sobhani, \etal \cite{Sobhani2019} computed some values of $N_d(q)$ and used these to give lower bounds for some values of $M(n,D)$. Bereg, \etal \cite{bmms-19} give a table with several new lower bounds for $M(n,D)$ for $n \leq 550$ and a table for $M(n,n-1)$ for prime powers $n \leq 600$.

The paper is organized as follows. In Section \ref{sec:Normalized-PPs}, we discuss normalization of permutation polynomials.
In Sections \ref{subsec:m-normalization} and \ref{subsec:b-normalization}, we extend the concept of normalization by defining m-normalization and b-normalization, to include cases when the degree of the polynomial is a multiple of the field characteristic $p$.
These new forms of normalization apply when $p \mid d$, to which 
classic c-normalization does not apply.
In Section \ref{subsec:m-normalization}, we show that for all $d$ such that $p \mid d$ and $p \ne 2$,
one can limit a search for m-normalized PPs, where $a_d=1$, $a_0 = 0$, and either $a_{d-1} = 0$ or $a_{d-2} = 0$. 
So, again such a search takes $O(dq^{d-1})$ time, as three coefficients are fixed. 
In Section \ref{subsec:b-normalization}, we show that when $p \mid d$, $p=2$ and $d$ is in an interval $2^i \leq d \leq 2^{i+1}$, for some $i$, 
then one can limit a search for b-normalized PP’s, where $a_d=1$, $a_0=0$, and either $a_{e-2}=0$ or $a_{e-1}=0$, for a specified position $e$.
Again, a such search takes $O(dq^{d-1})$ time.

In Section \ref{sec:F-and-G-maps}, we consider other transformations that map nPPs to nPPs, allowing an order of magnitude optimization of the search, as mentioned earlier. 
We call these transformations the $F \mhyphen map$ and the $G \mhyphen map$. 
In Section \ref{sec:F-G equivalence classes}, we use the $F$-map and the $G$-map
to define more inclusive equivalence relations on PPs with larger equivalence classes,
which permit a  more succinct description of all PPs of degree $d$ for a field $GF(q)$.
As the $F$-map allows us to fix another coefficient in a search for nPPs, the time complexity is reduced to $O(dq^{d-2})$.
In Section \ref{sec:algorithms}, we describe algorithms that implement the search for nPPs using equivalence relations.

Finally, in Section \ref{sec:results}, we present our numeric results.
We provide a table listing specific values for $N_{11}(q)$.
We present several tables that list the new categorizations of PPs (using equivalence relations), as well as the number of PPs in each category, for $q$ up to 97 and degrees up to 11.
We summarize all of these results in a table that lists the number of nPPs, equivalence classes, and total number of PPs for $q \le 97$ and degree $d, 6 \le d \le 10$. 
We also include a table giving new lower bounds for $M(n,D)$.

{\em Notation}. 
In this paper, we use $p$ to denote a prime and $q=p^m$ to denote a power of the prime $p$ for some $m \geq 1$.
We use $d$ ($2 \le d<q$) to denote the degree of a PP  $P(x)= a_{d}x^{d} + a_{d-1}x^{d-1} + \dots + a_1x + a_0$ over the finite field $GF(p^m)=GF(q)$ with field characteristic $p$.
Throughout the paper and in the tables at the end we use the following notation for the elements of $GF(q)$. 
$GF(q)$ has $q-1$ non-zero elements, all of which can be listed as $t^0, t^1, \dots , t^{q-2}$, where $t \neq 0$ represents a generator of the multiplicative group of nonzero elements of $GF(q)$.
We use the notation $t^0 = 1, t^1 = 2, \dots , t^{q-2} = q-1$. 
Lidl and Niederreiter \cite{LidlNiederreiter94} give this as the second choice of notations to describe the elements of a finite field, with the first choice for $GF(p^m)$ being by degree m polynomials with coefficients from $GF(p)$. 
PPs can easily be converted from one notation to the other.
As a primitive polynomial is needed to do the appropriate arithmetic, we give explicit primitive polynomials for our results.

%--------------------------------------------
\section{Normalized permutation polynomials and equivalence relations} 
\label{sec:Normalized-PPs}

As described in the introduction, a c-normalized nPP is a PP
where $a_d=1$, $a_{d-1}=0$, and $a_0=0$.
This form of normalization
can be achieved when $p \nmid d$. 
We introduce two new definitions for normalization for the case $p \mid d$, which we call {\em m-normalization} and {\em b-normalization}.
Table \ref{tbl-NormTypes} summarizes the three types of normalization.

\begin{table}[h!tb]
\centering
\vspace*{4mm}
\begin{tabular}{|c|c|l|}
\hline
{\bfseries  normalization type} & {\bfseries degree restriction} & {\bfseries nPP properties} \\
\hline
{\em c-normalization} & $p \nmid d$ & monic, $P(0)=0$, $a_{d-1}=0$ \\
{\em m-normalization} & $p \mid d$ and $p > 2$ &monic, $P(0)=0$, either $a_{d-1}=0$ or $a_{d-2}=0$  \\
{\em b-normalization} & $p \mid d$ and $p = 2$ & monic, $P(0)=0$, if $2^i \le d \le 2^{i+1}-3$ for some $i$\\
 & & then either $a_r=0$ or $a_{r-1}=0$, where $r=2^i-1$ \\
\hline
\end{tabular}
\caption{Types of normalization for PPs, $P(x)= a_{d}x^{d} + a_{d-1}x^{d-1} + \dots + a_1x + a_0$, of degree $d$ with field characteristic $p$.}
\label{tbl-NormTypes}
\end{table}

Here are some examples of normalized PPs:
\begin{itemize}
    \item The degree-9 PP $x^9+2x^7+3x^5$ over $GF(5^2)$ is c-normalized,  as $a_9=1$, $a_8=0$ and $a_0=0$.
    \item The degree-6 PP is $x^6+x^5+x^3+5x^2+5x$ over $GF(3^2)$ is m-normalized, as $a_6=1$, $a_4=0$ and $a_0=0$.
    \item The degree-10 PP
    $x^{10}+x^9+x^7+26x^5+30x^4+21x^2+31x$ over $GF(2^5)$ is b-normalized, as
    $a_{10}=1$, $a_6 =0$ and $a_0=0$ (with $i=3$ and $r=2^i-1=7$).
\end{itemize}

The normalization definitions of nPPs allow nearly all PPs of degree $d$ to be converted to nPPs of degree $d$ by applying certain algebraic operations, called {\em normalization operations}, in some order.  
The  normalization operations \cite{lidl88, Shallue-Wanless-pp-13} on a PP $P(x)$ are
\begin{itemize}
    \item multiplication by a constant, \ie $aP(x)$, for some nonzero constant $a$ in $GF(q)$,
    \item addition to the variable, \ie $P(x+b)$, for some constant $b$ in $GF(q)$.
    \item addition of a constant, \ie $P(x)+c$, for some constant $c$ in $GF(q)$, and
\end{itemize}
\noindent We include an  extended normalization operation, namely
\begin{itemize}
    \item multiplication of the variable, \ie $P(sx)$, for some nonzero constant $s$.
\end{itemize}
\noindent Extended normalization will come into play in Section \ref{sec:F-G equivalence classes}. It is customary in the literature to refer to the  normalization operations applied to a PP $P(x)$ by $aP(x+b)+c$, where $a$, $b$, and $c$ are elements of the finite field, with $a$ being nonzero. References to extended normalization applied to a PP $P(x)$, denoted by $aP(sx+b)+c$, where $a, b, c$, and $s$ are elements of the finite field, with both $a$ and $s$ are nonzero, appear, for example, in \cite{Fan19a,Fan19, Fan19b} and have been called linear transformations.

Notice that each operation has an inverse. 
That is, if $e$ is the multiplicative inverse of $a$ in $GF(q)$, then $e(a(P(x))=P(x)$. 
If $h$ is the additive inverse of $b$ in $GF(q)$, then $P((x+b)+h)=P(x)$, etc. 
Let $aP(x+b)+c$ denote the result of performing normalization operations in any order. 
Note that the second operation, addition of a constant $b$ to the variable $x$, when $P(x)$ is a PP, has the effect of permuting the order of its values. 
Note also that, if one adds $c$ and then multiplies by $a$, the result is $aP(x)+ac$. 
Whereas, if we multiply by $a$ and then add $c$, the result is $aP(x)+c$. 
Since we consider adding all such constants, and $a$ times $c$ is a constant in $GF(q)$, the set of such PPs is the same regardless of the order of operations.

We define two equivalence relations, $\R$ and $\R_E$, on PPs of degree $d$, for some $d>1$, as follows.

\begin{dfn} 
\label{def:normalization-relations}
Let $P(x)$ and $Q(x)$ be PPs of degree d, for some d. If $P(x)$ can be converted into $Q(x)$ by some sequence of normalization operations, then $P(x)$ and $Q(x)$ are related by $\R$.
If $P(x)$ can be converted into $Q(x)$ by some sequence of normalization operations including the {\em extended} normalization operation, then $P(x)$ into $Q(x)$ are related by $\R_E$.
\end{dfn}

It is easy to see that $\R$ and $\R_E$ are equivalence relations. 
Observe that any PP $P(x)$ can be transformed to itself by the empty sequence of normalization operations. If $P(x)$ can be transformed into $Q(x)$ by some sequence, then $Q(x)$ can be transformed into $P(x)$ by the inverse of each step in the reversed sequence. 
Finally, if $P(x)$ can be transformed into $Q(x)$, and $Q(x)$ can be transformed into $T(x)$, then $P(x)$ can be transformed into $T(x)$ by the composition of the two sequences.

An nPP is a representative of an equivalence class defined by $\R$.
We can make a more efficient search algorithm by searching for nPPs and the equivalence class they represent, rather than searching directly for PPs. In order to use equivalence classes to count PPs, we need to explore specific properties of normalization. %such as whether each PP maps to a single nPP, and the size of the resulting equivalence classes.

\subsection{m-normalization}
\label{subsec:m-normalization}

If the degree $d$ of a PP $P(x)$ is a multiple of the field characteristic $p$, 
then there is no constant $b$ which when added to the variable $x$ will make the coefficient $a_{d-1}$ of $P(x+b)$ equal to $0$. 
This is due to the fact that in this case, the expansion of $(x+b)^d$, does not have a nonzero term involving $x^{d-1}$. 
So, when $p \mid d$, we cannot necessarily achieve c-normalization. 
However, if $a_{d-1} \ne 0$, then there is a constant $b$ such that $P(x+b)$ has $a_{d-2} = 0$.
That is, we can achieve m-normalization. 
This is similar to Fan's idea in Observation 8 \cite{Fan19a}. Recall from Table \ref{tbl-NormTypes} that m-normalization applies when   $p\neq 2$ and $d$ is a multiple of $p$. 
In this section, we show that  m-normalization can be achieved for any PP satisfying these conditions.

As stated in Section \ref{sec:Normalized-PPs}, $P(x)$ is {\em m-normalized} when $a_{d}=1, ~ a_{0}=0$, and either $a_{d-1}=0$ or $a_{d-2}=0$. 
We show that any PP of degree $d$ over $GF(p^m)$ where $p \mid d$ and $p \neq 2 $ can be m-normalized. 
By the definition of characteristic of a finite field, for all $k \in GF(q)$, $ 
\underbrace {k+k+\dots k}_{\text{$d$ times}}=0$. 
Furthermore,
for all $c \in GF(q)$, there exists a $k\in GF(q)$ such that 
$\underbrace{k+k+\dots k}_{\text{$d-1$ times}}=c$.
To see this let  $\underbrace{1+1+\dots 1}_{\text{$d-1$ times}}=s$. Note that $s \ne 0$, since $\underbrace{1+1+\dots 1}_{\text{d times}} = 0$. Since $\underbrace{k+k+\dots k}_{\text{d-1 times}}=ks$, one wants $k(s)=c$ or, equivalently, $k=c/s
$ where the arithmetic is in $GF(q)$.

%----------------
\begin{theorem}
\label{th:m-normalization}
Any PP P(x) where the degree d is a multiple of the field characteristic p, can be transformed to an m-normalized PP $Q(x)$ by the normalization operations.
\end{theorem}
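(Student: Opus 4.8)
The plan is to take an arbitrary PP $P(x) = a_d x^d + a_{d-1}x^{d-1} + \dots + a_1 x + a_0$ of degree $d$ with $p \mid d$ and $p \neq 2$, and apply the normalization operations one at a time, being careful about the order. First I would handle the easy coefficients: multiplying $P(x)$ by $a_d^{-1}$ makes the polynomial monic, and adding the constant $-a_0$ (after this scaling, or equivalently composing appropriately) forces the constant term to be $0$. The only delicate requirement is the third condition: either $a_{d-1} = 0$ or $a_{d-2} = 0$. Since $p \mid d$, we cannot in general kill $a_{d-1}$ by a substitution $x \mapsto x+b$, because the $x^{d-1}$ coefficient of $(x+b)^d$ is $\binom{d}{1}b = db = 0$ in characteristic $p$. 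So I split into two cases.

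If $a_{d-1} = 0$ already, we are done after the monic/constant-term adjustments (those two operations do not disturb $a_{d-1}$: scaling multiplies it by a nonzero constant, and adding a constant to $P$ only changes $a_0$). If $a_{d-1} \neq 0$, the plan is to choose $b$ so that the substitution $x \mapsto x + b$ zeroes out $a_{d-2}$. The key computation is to track the $x^{d-2}$ coefficient of $P(x+b)$. Expanding, the contributions to $x^{d-2}$ come from the $x^d$ term (giving $\binom{d}{2}b^2 = \frac{d(d-1)}{2}b^2$, and here I need $p \neq 2$ so that $2$ is invertible and $\binom{d}{2}$ makes sense as written — but note $\binom{d}{2} = \frac{d(d-1)}{2}$ and $d \equiv 0 \pmod p$, so this term actually vanishes!), from the $x^{d-1}$ term (giving $\binom{d-1}{1} a_{d-1} b = (d-1)a_{d-1}b$), and from the original $x^{d-2}$ term ($a_{d-2}$). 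So the new $x^{d-2}$ coefficient is $(d-1)a_{d-1} b + a_{d-2}$ — wait, I should double-check whether $(d-1)$ is invertible: since $d \equiv 0 \pmod p$, we have $d - 1 \equiv -1 \pmod p$, which is nonzero, so $(d-1)$ is a unit in $GF(q)$. Hence I set $b = -a_{d-2}\big/\big((d-1)a_{d-1}\big)$, which is well-defined precisely because $a_{d-1} \neq 0$ and $d - 1 \not\equiv 0$. This is essentially the content of the "for all $c$ there exists $k$" remark preceding the theorem statement, reformulated: the map $b \mapsto (d-1)a_{d-1}b$ is a bijection on $GF(q)$.

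After performing this substitution, I then re-scale to make the polynomial monic again (the substitution preserves monic-ness actually, since the leading term $(x+b)^d$ has leading coefficient $1$) and add a constant to restore $a_0 = 0$; crucially, the substitution $x \mapsto x+b$ does not reintroduce a nonzero $x^{d-1}$ term (that coefficient stays $0$ since $\binom{d}{1}b = db = 0$), and the subsequent "add a constant" operation touches only $a_0$, while a final scaling would multiply every coefficient by the same nonzero constant, preserving both $a_{d-1} = 0$-or-$a_{d-2}=0$ and the monic condition if we scale last — so I would order the operations as: substitute $x \mapsto x+b$ first, then scale to monic, then add a constant to zero $a_0$. The resulting $Q(x)$ is then m-normalized, and since each normalization operation maps PPs to PPs, $Q(x)$ is a PP of degree $d$.

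I do not anticipate a serious obstacle here; the main subtlety — and the place where the hypothesis $p \neq 2$ is used — is ensuring the relevant coefficients ($d - 1$ in the worst case, and the implicit "divide by $2$" one might naively expect) are invertible, and verifying that the $\binom{d}{2}b^2$ contribution to $x^{d-2}$ vanishes because $p \mid d$, so that the equation for $b$ is genuinely linear in $b$ rather than quadratic. I would also state explicitly that the order of operations does not matter for the final equivalence class (as already argued in the paragraph preceding Definition~\ref{def:normalization-relations}), so choosing a convenient order is legitimate. The one case not covered — and presumably the reason the abstract says "almost all" — is when $a_{d-1} \neq 0$ but we also cannot arrange $a_{d-2} = 0$; but the computation above shows this never happens when $a_{d-1} \neq 0$, so m-normalization succeeds for every such PP, which is exactly the claim.
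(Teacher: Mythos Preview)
Your proposal is correct and follows essentially the same approach as the paper's proof: both case-split on whether $a_{d-1}=0$, and in the nontrivial case track the $x^{d-2}$ coefficient of $P(x+b)$, observe that the contribution from $(x+b)^d$ vanishes (the paper phrases this as ``$(x+b)^d$ has nonzero coefficients only in degrees divisible by $p$,'' you compute $\binom{d}{2}\equiv 0$ directly), and then solve the resulting linear equation $(d-1)a_{d-1}b + a_{d-2}=0$ for $b$ using that $d-1\equiv -1\not\equiv 0\pmod p$. Your treatment of the order of operations and the role of the hypothesis $p\neq 2$ is, if anything, slightly more explicit than the paper's.
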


\begin{proof}
Let $P(x)= a_{d}x^{d} + a_{d-1}x^{d-1} + \dots + a_1x + a_0$, and for some  $a,b,c \in GF(q)$ with $a \neq 0$, let
\begin{align*}
    Q(x) &=aP(x+b)+c \\
    &= aa_d(x+b)^d + aa_{d-1}(x+b)^{d-1} + aa_{d-2}(x+b)^{d-2} + \dots + aa_1(x+b) + aa_0 +c \\
    &= b_dx^d + b_{d-1}x^{d-1} + b_{d-2}x^{d-2} + \dots + b_1x + b_0, 
\end{align*}
Observe that the degree $d$ term of $Q(x)$ has the coefficient $b_d=aa_{d}$.  
If we choose $a$ to be the multiplicative inverse of $a_{d}$, then the degree $d$ coefficient of $Q(x)$ will be 1.

If $a_{d-1}=0$, then $b_{d-1}=0$  and the desired property is true. 
So suppose that $a_{d-1} \neq 0$, and consider $b_{d-2}$ in $Q(x)$. 
Since $d$ is a multiple of $p$, the expansion of $a_{d}(x+b)^{d}$ will derive nonzero coefficients only for terms whose degrees are multiples of $p$.  
Since $p>2$, this means that $(d-2) \nmid p$, so $a_{d}(x+b)^{d}$ will have a coefficient of 0 for the degree $d-2$ term.
Hence $b_{d-2}$ is calculated solely by the expansion of $aa_{d-1}(x+b)^{d-1}+aa_{d-2}(x+b)^{d-2}$. 

The expansion of $aa_{d-1}(x+b)^{d-1}$ will produce a term of degree $d-2$ with coefficient $aa_{d-1}b'$ where $b'=\sum_{1}^{d-1}b$.  
The expansion of $aa_{d-2}(x+b)^{d-2}$ will produce a term of degree $d-2$ with coefficient $b_{d-2}=aa_{d-2}$.  
Therefore the coefficient of $x^{d-2}$ in $Q(x)$ is $b_{d-2}=aa_{d-1}b'+aa_{d-2}=a(a_{d-1}b'+a_{d-2})$.  
And by algebra, $b_{d-2}$ is zero if $a_{d-1}b'+a_{d-2}=0$.  
Since $a_{d-1}\neq 0$ and $d-1$ is not a multiple of $p$, we can choose $b$ such that $b'$ is the additive inverse of $a_{d-2}/a_{d-1}$, making $b_{d-2}=0$ in $Q(x)$.
So in $Q(x),~ b_d=1, b_0=0,$ and either  $b_{d-1}=0$ or $b_{d-2}=0$. 

If we choose $c$ to be the additive inverse of the constant term of $aP(x+b)$, then the constant term becomes zero. So, we achieve m-normalization.

\end{proof}

\subsection {b-normalization}
\label{subsec:b-normalization}

In Section \ref{subsec:m-normalization}, we considered PPs whose degree $d$ is a multiple of the field characteristic $p$ and $p>2$. 
We showed that under those conditions, {\em m-normalization} can be achieved.
In this section, we consider the remaining case, namely, $p \mid d$ and $p=2$, 
and show that, {\em b-normalization} can be achieved except when $d=2^i-2$, for some $i \geq 2$. 

We say that the integer interval $[r,s]$ has a $[t,u]$ gap, if for all $d \in [r,s]$, the expansion of $(x+b)^d$, does not include any nonzero $x^e$ monomials, where $e \in [t,u]$.  For example, the integer interval $[8,13]$ has a $[6,7]$ gap as seen by:

$(x+b)^8 = x^8 + b^8$

$(x+b)^9 = x^9 + bx^8 + b^8x + b^9$

$(x+b)^{10} = x^{10} + b^2x^8 + b^8x^2 + b^{10}$

$(x+b)^{11} = x^{11} + bx^{10} + b^2x^9 + b^3x^8 + b^8x^3 + b^9x^2 + b^{10}x + b^{11}$

$(x+b)^{12} = x^{12} + b^4x^8 + b^8x^4 + b^{12}$

$(x+b)^{13} = x^{13} + bx^{12} + b^4x^9 + b^5x^8 + b^8x^5 + b^9x^4 + b^{12}x + b^{13}$.

\noindent That is, in each of the exhibited expansions there are no $x^6$ or $x^7$ terms.

We use this observation in the proofs below regarding  b-normalization.
In Lemma \ref{lemma:gap}, we show that in the expansion of $(x+b)^d$ when $d \in [2^i,2^{i+1}-3]$, the coefficients of the terms $x^{2^i-2}$ and $x^{2^i-1}$ term are always zero.
That is, the interval $[2^i,2^{i+1}-3]$ has a $[2^i-2,2^i-1]$ gap. 
In Lemma \ref{lemma:b-normalization} we show that  any PP $P(x)$ of characteristic $2$ such that $d$ is a multiple of $2$, has a related PP $P(x+b)$ for which certain terms in its expansion always have a coefficient equal to zero. 
We use Lucas' Theorem (stated below for the reader's convenience) in the proof of Lemma \ref{lemma:gap}.

\begin{theorem}{[Lucas' Theorem\cite{Cameron95} ]}\label{thm:Lucas}
Let $p$ be prime, and let $m=m_0+m_1p+ \dots, +m_dp^d$ and $n=n_0+n_1p+ \dots, +n_dp^d$, where $0 \le m_i,n_i<p$ for $i=0,1,\dots,d$. Then
\[ \binom mn \equiv \prod_{i=0}^{d}{\binom {m_i}{n_i}\pmod{p} }\].
\end{theorem}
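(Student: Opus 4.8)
The plan is to prove Lucas' Theorem by working in the polynomial ring $\F_p[x]$ and comparing, modulo $p$, the coefficient of $x^n$ in $(1+x)^m$ computed in two different ways. The cornerstone is the ``freshman's dream'' identity: since $\binom{p}{k}\equiv 0\pmod p$ for $0<k<p$, we have $(1+x)^p\equiv 1+x^p\pmod p$, and by an immediate induction on $i$, $(1+x)^{p^i}\equiv 1+x^{p^i}\pmod p$ for every $i\ge 0$.

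First I would use the base-$p$ expansion $m=\sum_{i=0}^d m_i p^i$ to write $(1+x)^m=\prod_{i=0}^d\bigl((1+x)^{p^i}\bigr)^{m_i}\equiv\prod_{i=0}^d (1+x^{p^i})^{m_i}\pmod p$, and then expand each factor by the ordinary binomial theorem as $(1+x^{p^i})^{m_i}=\sum_{j=0}^{m_i}\binom{m_i}{j}x^{jp^i}$.

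Next I would read off the coefficient of $x^n$ from the product $\prod_{i=0}^d\sum_{j}\binom{m_i}{j}x^{jp^i}$. A monomial $x^n$ arises precisely by choosing, in the $i$-th factor, a term $x^{j_i p^i}$ with $0\le j_i\le m_i<p$ and $\sum_i j_i p^i=n$. Since each $j_i$ lies in $\{0,\dots,p-1\}$, the tuple $(j_0,\dots,j_d)$ is a string of base-$p$ digits of $n$, so by uniqueness of the base-$p$ representation we must have $j_i=n_i$ for all $i$ (and if some $n_i>m_i$ there is no valid choice, which matches $\binom{m_i}{n_i}=0$). Hence the coefficient of $x^n$ on the right-hand side is $\prod_{i=0}^d\binom{m_i}{n_i}$, while on the left-hand side it is $\binom{m}{n}$ by the binomial theorem; equating the two coefficients in $\F_p$ gives the claimed congruence.

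I do not expect a serious obstacle: the only steps needing care are the one-line verification of the freshman's dream lemma via the divisibility of $\binom{p}{k}$, and the appeal to uniqueness of base-$p$ digits when extracting the coefficient from the product. An alternative would be a direct combinatorial argument that partitions the $m$ items into $p$-ary blocks and counts $n$-element subsets block by block, but the generating-function proof above is shorter and entirely self-contained.
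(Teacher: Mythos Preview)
Your proof is correct and is the standard generating-function argument for Lucas' Theorem. However, there is nothing to compare against: the paper does not prove this statement. It is quoted as a known result with a citation to \cite{Cameron95}, and is used as a tool in the proof of Lemma~\ref{lemma:gap}. So your write-up goes beyond what the paper provides.
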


\begin{lemma}{[Gap Lemma]}\label{lemma:gap}
 For all $i > 1$, the expansion of $(x+b)^d$, for $d \in [2^i,2^{i+1}-3]$, has a $[2^i-2, 2^i-1]$ gap.
\end{lemma}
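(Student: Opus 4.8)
The plan is to reduce the statement to a condition on binary digits and then invoke Lucas' Theorem (Theorem~\ref{thm:Lucas}) with $p=2$. Recall that over a field of characteristic $2$ the coefficient of $x^e$ in $(x+b)^d$ is $\binom{d}{e}b^{d-e}$, and $b^{d-e}\neq 0$ whenever $b\neq 0$ (the case $b=0$ being trivial). Hence the monomial $x^e$ is \emph{absent} from the expansion of $(x+b)^d$ precisely when $\binom{d}{e}$ is even, so it suffices to show that $\binom{d}{2^i-2}$ and $\binom{d}{2^i-1}$ are even for every $d$ with $2^i\le d\le 2^{i+1}-3$.

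First I would apply Lucas' Theorem. Writing $d=\sum_j d_j2^j$ and $e=\sum_j e_j2^j$ in binary, Lucas gives $\binom{d}{e}\equiv\prod_j\binom{d_j}{e_j}\pmod 2$, which is odd if and only if $\binom{d_j}{e_j}=1$ for every $j$, i.e. if and only if $e_j\le d_j$ for all $j$ --- equivalently, every bit set in $e$ is also set in $d$ (call $e$ a \emph{submask} of $d$). Thus $x^e$ occurs in $(x+b)^d$ exactly when $e$ is a submask of $d$, and it remains to check that $2^i-2$ and $2^i-1$ are not submasks of any $d$ in the given range.

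Next I would describe the binary shape of the relevant integers. Since $2^i\le d\le 2^{i+1}-3$, write $d=2^i+r$ with $0\le r\le 2^i-3$; then bit $i$ of $d$ equals $1$ and bits $0,\dots,i-1$ of $d$ are exactly the bits of $r$. Now $e=2^i-1$ has bits $0,\dots,i-1$ all equal to $1$ (and all higher bits $0$), so $e$ is a submask of $d$ only if bits $0,\dots,i-1$ of $r$ are all $1$, i.e. $r=2^i-1$, contradicting $r\le 2^i-3$. Likewise $e=2^i-2$ has bits $1,\dots,i-1$ equal to $1$ and bit $0$ equal to $0$, so $e$ a submask of $d$ forces bits $1,\dots,i-1$ of $r$ to be all $1$, whence $r\ge 2^i-2$, again contradicting $r\le 2^i-3$. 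Therefore neither $2^i-2$ nor $2^i-1$ is a submask of $d$, both binomial coefficients are even, and the claimed $[2^i-2,2^i-1]$ gap holds.

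The argument is essentially bookkeeping once Lucas' Theorem is available; the only delicate point --- and the reason the hypothesis reads $d\le 2^{i+1}-3$ rather than $d\le 2^{i+1}-1$ --- is that the residue $r=d-2^i$ must be small enough to avoid both $2^i-1$ and $2^i-2$, which is exactly what makes the submask condition fail for those two exponents. I would also remark that the restriction $i>1$ serves only to keep the interval $[2^i,2^{i+1}-3]$ nonempty and the exponents nonnegative; the case $i=1$ is vacuous.
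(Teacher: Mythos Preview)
Your argument is correct and follows essentially the same route as the paper's proof: both reduce the question to the parity of $\binom{d}{e}$ for $e\in\{2^i-2,2^i-1\}$ and then invoke Lucas' Theorem by comparing binary digits. Your presentation is in fact slightly cleaner---you separate the two cases $e=2^i-1$ and $e=2^i-2$ and phrase the Lucas condition as the ``submask'' criterion, which makes transparent exactly why the bound $d\le 2^{i+1}-3$ (rather than $2^{i+1}-1$) is needed; the paper argues both cases at once by locating a single bit position where $d$ has a $0$ and $k'$ has a $1$.
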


\begin{proof}
Consider the expansion $(x+b)^d=\sum_{k=0}^d{\binom dk x^{d-k}b^k}$.
Let  $k \in \{d-(2^i-2),d-(2^i-1)\} $.
Clearly, $ b^k$ is not zero, so our job is to show that the expression $ \binom dk$ is zero. 
Let $k'=d-k$, that is $k' \in \{2^i-2,2^i-1\} $.
By a well-known identity, we have $\binom d{k} =\binom d{d-k} =\binom d{k'}$. 
Represent $d$ and $k'$ by their base-2 $(i+1)$-tuples 
$\delta=(\delta_{i},\delta_{i-1}, \dots,\delta_{2},\delta_{1},\delta_{0})$ and $\kappa=(\kappa_{i},\kappa_{i-1}, \dots,\kappa_{2},\kappa_{1},\kappa_{0})$, respectively, where for all $i,~\delta_i,\kappa_i \in \{0,1\}$. 
Observe that at least one $\delta_j~(0 \le j \le i-2)$ must be 0 because $2^i \le d \le 2^{i+1}-3$.
Observe also that $\kappa_i=1$ for all $i>0$.
Hence there is a $j$ such that $\delta_j=0$ and $\kappa_j=1$, so by Lucas' Theorem, $\binom d{k'}=0=\binom d{k}$.  
It follows that $(x+b)^d$, for $d \in [2^i,2^{i+1}-3]$, has a $[2^i-2, 2^i-1]$ gap.
\end{proof}

\begin{lemma} \label{lemma:b-normalization}
Let $i > 1$.  Let $d \in [2^i, 2^{i+1}-3]$ be even. For any PP $P(x)$ over $GF(2^m)$, where $m>2$, 
there is a constant $b$ in $GF(2^m)$ such that in the 
PP $P(x+b)$, either the $x^{2^i-1}$ term or the $x^{2^i-2}$ term is zero. 

\end{lemma}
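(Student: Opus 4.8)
The plan is to mirror the structure of the proof of Theorem~\ref{th:m-normalization}, but using the Gap Lemma in place of the parity argument that worked when $p > 2$. Write $r = 2^i - 1$, so that the two target positions are $r$ and $r-1 = 2^i - 2$. Given a PP $P(x) = a_d x^d + \dots + a_1 x + a_0$, consider $P(x+b)$ and expand to get coefficients $b_k$ as functions of $b$. The key structural fact, supplied by Lemma~\ref{lemma:gap}, is that the top term $a_d(x+b)^d$ contributes \emph{nothing} to the $x^{r}$ or $x^{r-1}$ coefficients, since $d \in [2^i, 2^{i+1}-3]$ gives a $[2^i-2, 2^i-1]$ gap. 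So the coefficients $b_{r}$ and $b_{r-1}$ in $P(x+b)$ are determined entirely by the expansion of $\sum_{j < d} a_j (x+b)^j$, i.e.\ by the lower-degree part of $P$.

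First I would dispose of the easy case: if the coefficient $a_{r}$ of $x^{r}$ in $P(x)$ is already zero, I still need to check that no higher term $a_j(x+b)^j$ with $j > r$ can reintroduce an $x^{r}$ monomial. Here is where I have to be a little careful, because Lemma~\ref{lemma:gap} only kills the contribution of the \emph{degree-$d$} term, not of intermediate terms $a_j(x+b)^j$ with $r < j < d$. However, such a $j$ lies in the same dyadic block only if $j \le 2^{i+1}-3$, in which case Lemma~\ref{lemma:gap} (applied with the same $i$) says $\binom{j}{j-r} = 0$ and $\binom{j}{j - (r-1)} = 0$ as well, so those terms also contribute nothing to positions $r$ and $r-1$. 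The only terms that \emph{can} contribute to $x^{r}$ or $x^{r-1}$ are those $a_j(x+b)^j$ with $j \le 2^{i+1}-2$ but $j$ small enough that $\binom{j}{j-r}$ or $\binom{j}{j-r+1}$ is nonzero — and since $2^{i+1}-2 < d$ is impossible when $d \ge 2^i$... actually one must track this range precisely. The clean statement to extract is: for every $j$ with $r \le j \le d$, the $x^{r}$- and $x^{r-1}$-coefficients of $(x+b)^j$ vanish unless $j = r$ (resp.\ $j = r-1$) or $j$ forces a lower position. I expect this bookkeeping — pinning down exactly which intermediate terms can feed positions $r$ and $r-1$ — to be the main obstacle; the resolution is that only $a_{r}(x+b)^{r}$ and $a_{r+1}(x+b)^{r+1}$ matter for position $r$, because $(x+b)^{r+2}, \dots$ all have the gap (being in $[2^i, 2^{i+1}-3]$ or reducible to it), and only those two plus possibly one more matter for position $r-1$.

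Granting that reduction, the argument finishes as in Theorem~\ref{th:m-normalization}. If $a_{r} = 0$ in $P(x)$, then (since $a_{r}(x+b)^{r}$ is the only source of $x^{r}$) the $x^{r}$ term of $P(x+b)$ is automatically $0$ for every $b$, and we are done. Otherwise $a_{r} \ne 0$, and I want to choose $b$ so that the $x^{r-1}$ coefficient of $P(x+b)$ vanishes. That coefficient has the form $\binom{r}{1} a_{r} b + a_{r-1} = a_{r} b + a_{r-1}$ in characteristic $2$ (note $\binom{r}{1} = r = 2^i - 1 \equiv 1 \pmod 2$), plus the contribution of $a_{r+1}(x+b)^{r+1}$; but $\binom{r+1}{r+1-(r-1)} = \binom{r+1}{2} = \binom{2^i}{2}$, and by Lucas' Theorem $\binom{2^i}{2} \equiv 0 \pmod 2$ since $2 = (0\ldots010)_2$ and $2^i = (10\ldots0)_2$ have a $1$ of the lower in a $0$-slot of the upper. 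So the $x^{r-1}$ coefficient is exactly $a_{r} b + a_{r-1}$ (an affine function of $b$ with nonzero leading coefficient $a_{r}$), which we set to zero by taking $b = a_{r-1}/a_{r}$. Thus in $P(x+b)$ either the $x^{r}$ term or the $x^{r-1}$ term is zero, as required. The hypothesis $m > 2$ is used only to guarantee the field is large enough that the relevant choices of $b$ exist and that the degree-$d$ PP makes sense; I would double-check at the end whether it is actually needed here or is simply inherited from the surrounding development.
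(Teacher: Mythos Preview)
Your proposal is correct and follows essentially the same route as the paper: use the Gap Lemma to see that no term $a_j(x+b)^j$ with $j \in [2^i, 2^{i+1}-3]$ contributes to positions $r = 2^i-1$ or $r-1$, then solve the resulting affine equation $a_r b + a_{r-1} = 0$ for $b$ when $a_r \ne 0$. Your extra worry about intermediate $j$'s and the separate verification that $\binom{2^i}{2} \equiv 0 \pmod 2$ are unnecessary, since every $j$ with $r < j \le d$ satisfies $2^i \le j \le 2^{i+1}-3$ and is therefore already covered by the Gap Lemma; the paper simply invokes this once and moves on.
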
 

\begin{proof}
By Lemma \ref{lemma:gap}, the interval $[2^i, 2^{i+1}-3]$ has a $[2^i-2, 2^i-1]$ gap. 
Let $P(x) = a_dx^d + a_{d-1}x^{d-1} + a_{d-2}x^{d-2} + \dots + a_1x + a_0$, 
where $d \in [2^i,2^{i+1}-3]$ is even. 
Adding $b$ to the argument gives: $P(x+b) = a_d(x+b)^d + a_{d-1}(x+b)^{d-1} + a_{d-2}(x+b)^{d-2} + \dots + a_1(x+b) + a_0$. 

If $a_{2^i-1}$ is zero there is nothing to prove, 
so suppose $a_{2^i-1}$ is not zero.
Since $[2^i,2^{i+1}-3]$ has a $[2^i-2, 2^i-1]$ gap, each term $(x+b)^s$, for $s \in [2^i, 2^{i+1}-3]$ 
has no $x^e$ term for $e \in [2^i-2,2^i-1]$. 
This means that $a_{2^i-1}(x+b)^{2^i-1}$ and $a_{2^i-2}(x+b)^{2^i-2}$
are the only possible terms whose expansion has a nonzero $x^{2^i-2}$ term. 
By the binomial theorem, $a_{2^i-1}(x+b)^{2^i-1}=a_{2^i-1}x^{2^i-1}+a_{2^i-1}bx^{2^i-2}+ \dots$, and
$a_{2^i-2}(x+b)^{2^i-2}=a_{2^i-2}x^{2^i-2} \dots$, where low order terms are not shown.
Summing these two expansions and isolating the 
$x^{2^i-2}$ term, we solve for the value of $b$ such that $a_{2^i-1}bx^{2^i-2}+a_{2^i-2}x^{2^i-2}=0$.
We see that when  $b=-a_{2^i-1}/a_{2^i-2}$, 
the coefficient of the $x^{2^i-2}$ term of $P(x+b)$ is zero. 
\end{proof}

For example, let $d=2^3$, and let $P(x) = a_8x^8 + a_7x^7 + a_6x^6 + \dots + a_1x + a_0$.
Adding $b$ to the argument gives: 
\[
%\begin{math}
\begin{split}
  P(x+b) &= a_8(x+b)^8 + a_7(x+b)^7 + a_6(x+b)^6 + \dots + a_1x + a_0 \\
    &= a_8(x^8+b^8)+a_7(x^7+bx^6+b^2x^5+\dots)+a_6(x^6+b^2x^4+\dots)+\dots \\
     &= a_8x^8+a_8b^8+(a_7x^7+a_7bx^6+\dots)+(a_6x^6+a_6b^2x^4+\dots) + \dots
\end{split}
%\end{math}   
\]
We want to solve for the value of $b$ that makes the coefficient of the $x^6$ term of $P(x+b)$ zero.
So $a_7bx^6+a_6x^6=0$ is satisfied by $b=-a_7/a_6$.

\begin{theorem}
\label{th:b-normalization}
Any PP P(x) over $GF(2^m)$ for some $m>2$, and $2 \mid d$ can be transformed to an b-normalized PP $Q(x)$ by the normalization operations, except when $d=2^i-2$, for some $i \geq 2$. 
\end{theorem}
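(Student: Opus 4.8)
The plan is to follow the same three-step pattern used in the proof of Theorem \ref{th:m-normalization}, now leaning on Lemma \ref{lemma:b-normalization} to handle the middle coefficient. Given a PP $P(x) = a_dx^d + \dots + a_1x + a_0$ of even degree $d$ with $4 \le d < q = 2^m$, first observe that since $d$ is even and $d \ge 4 > 2$, it is not a power of $2$ exactly when $d$ fails to equal $2^{i+1}$ for any $i$; but for the purposes of b-normalization what matters is which interval $[2^i, 2^{i+1}-1]$ contains $d$. Write $i$ for the unique integer with $2^i \le d \le 2^{i+1}-1$; note $i \ge 2$ since $d \ge 4$. The excluded case $d = 2^i - 2$ is precisely the situation where $d$ lands at the top of an interval, namely $d = 2^{i'+1} - 2$ with $i = i'+1$, so that $d \notin [2^i, 2^{i+1}-3]$ for the $i$ just defined — the Gap Lemma's hypothesis fails. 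So the first step is to record that, outside the excluded case, we do have $2^i \le d \le 2^{i+1}-3$ and hence Lemma \ref{lemma:b-normalization} applies.

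The second step is the actual normalization. Since the normalization operations are closed under composition and each has an inverse (as remarked after Definition \ref{def:normalization-relations}), it suffices to produce, in turn, each required coefficient condition and check that fixing it does not disturb the ones already fixed. Apply the substitution $x \mapsto x + b$ with $b$ chosen as in Lemma \ref{lemma:b-normalization} (namely $b = a_{2^i-1}/a_{2^i-2}$, or $b = 0$ if $a_{2^i-1}$ is already zero), obtaining a PP $P_1(x) = P(x+b)$ in which either the $x^{2^i-1}$ coefficient or the $x^{2^i-2}$ coefficient vanishes — this is the condition ``either $a_r = 0$ or $a_{r-1} = 0$ with $r = 2^i-1$'' from Table \ref{tbl-NormTypes}. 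Then multiply by $a = a_d^{-1}$ to make the polynomial monic; since scalar multiplication multiplies every coefficient by the same nonzero constant, it preserves the vanishing of whichever of $a_r, a_{r-1}$ was killed, and of course preserves $\deg = d$. Finally add the constant $c = -$(constant term of $aP_1(x)$) to clear $a_0$; adding a constant changes only the degree-$0$ coefficient, so all earlier conditions survive. The result $Q(x)$ is monic, has $Q(0) = 0$, and has either $a_r = 0$ or $a_{r-1} = 0$: it is b-normalized.

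I would also need to dispose of the boundary subtlety that Lemma \ref{lemma:b-normalization} is stated for $m > 2$, matching the theorem's hypothesis, so there is no gap there; and to note that the order of operations is immaterial by the paragraph-long argument already given in the text (adding then multiplying versus multiplying then adding differ only by an overall constant in the additive term, which is again a field constant ranging over all of $GF(q)$). The main obstacle — really the only nontrivial point — is the bookkeeping around the excluded case: one has to see clearly that ``$d = 2^i - 2$ for some $i \ge 2$'' is exactly the complement, among even $d \ge 4$, of ``$d$ lies in some $[2^j, 2^{j+1}-3]$,'' so that the Gap Lemma and Lemma \ref{lemma:b-normalization} cover every non-excluded even degree. (For instance $d = 6 = 2^3 - 2$ is excluded; $d = 8, 10$ are in $[8, 13]$; $d = 14 = 2^4 - 2$ is excluded; and so on.) Everything else is the routine ``apply the three operations in order and check they don't interfere'' argument, which I would present compactly rather than expand.
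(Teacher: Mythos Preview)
Your approach is essentially identical to the paper's: invoke Lemma~\ref{lemma:b-normalization} to choose $b$, then pick $a = a_d^{-1}$ and $c$ to clear the leading and constant coefficients, noting these last two operations preserve the vanishing coefficient produced by the first. Two minor cleanups: your use of $i$ collides with the theorem's quantified $i$ (you partially repair this with $i'$, but it would be cleaner to pick a fresh letter for the interval index from the start), and the parenthetical formula for $b$ is inverted---it should be $b = a_{2^i-2}/a_{2^i-1}$ (the paper's Lemma~\ref{lemma:b-normalization} has the same slip)---though this does not affect your argument since you correctly rely on the lemma's existence statement rather than the explicit value.
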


\begin{proof}
Let $P(x)= a_{d}x^{d} + a_{d-1}x^{d-1} + \dots + a_1x + a_0$, and for some  $a,b,c \in GF(q)$ with $a \neq 0$, let
\begin{align*}
    Q(x) &=aP(x+b)+c \\
    &= aa_d(x+b)^d + aa_{d-1}(x+b)^{d-1} + aa_{d-2}(x+b)^{d-2} + \dots + aa_1(x+b) + aa_0 +c \\
    &= b_dx^d + b_{d-1}x^{d-1} + b_{d-2}x^{d-2} + \dots + b_1x + b_0, 
\end{align*}
Observe that the degree $d$ term of $Q(x)$ has the coefficient $b_d=aa_{d}$.  
If we choose $a$ to be the multiplicative inverse of $a_{d}$, then the degree $d$ coefficient of $Q(x)$ will be 1.
If we choose $c$ to be the additive inverse of the constant term of $aa_0$, then the constant term of $Q(x)$ will be $b_0=aa_0+c=0$.
By Lemma \ref{lemma:b-normalization}, there is a $b$ such that in $Q(x)$, the coefficient of either the degree $2^i-1$ term or degree $2^i-2$ term equal to 0, except when $d=2^i-2$, for some $i \geq 2$. 
Hence $Q(x)$ is b-normalized.

\end{proof}

%--------------------------------------------
\subsection{Counting PPs using equivalence classes based on normalization}

When the degree $d$ of a PP is not a multiple of the field characteristic $p$, the PP can be c-normalized.
As shown below, for any such PP $P(x)$, the there is a unique triple $(a,b,c)$ such that $aP(x+b)+c$ is c-normalized. 
Moreover, each equivalence class contains exactly one nPP, and each PP belongs to exactly one equivalence class.
These properties allow us to count, for a given $q$ and $d$, the number of PPs in each equivalence class.

\begin{obs} \label{obs:unique}

Let $P(x)$ be PP where $p \nmid d$.
Then there is a unique triple $(a,b,c)$ such that $aP(x+b)+c$ is c-normalized.
\end{obs}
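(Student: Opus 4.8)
The plan is to prove existence and uniqueness of the triple $(a,b,c)$ separately. Existence is essentially already established: Theorem~\ref{th:m-normalization}'s argument specialized to the case $p \nmid d$ (or the classical normalization cited from \cite{lidl88}) produces such a triple, so I would just recall that one can first kill the leading coefficient by taking $a = a_d^{-1}$, then kill $a_{d-1}$ by choosing $b$ appropriately (this is where $p \nmid d$ is used, since the coefficient of $x^{d-1}$ in $aa_d(x+b)^d$ is $d\,b$, which is a nonzero multiple of $b$ precisely when $p \nmid d$), and finally kill the constant term by choosing $c$. So the bulk of the work is uniqueness.

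For uniqueness, I would suppose $aP(x+b)+c$ and $a'P(x+b')+c'$ are both c-normalized and show $(a,b,c)=(a',b',c')$. Composing, it suffices to show that if $R(x)$ is c-normalized and $\alpha R(x+\beta)+\gamma$ is also c-normalized, then $\alpha=1$, $\beta=0$, $\gamma=0$; this reduces the problem to analyzing the stabilizer of a c-normalized PP under extended-free normalization operations. Write $R(x) = x^d + c_{d-2}x^{d-2} + \dots + c_1 x$ (no $x^{d-1}$ term, no constant term). Comparing the degree-$d$ coefficient of $\alpha R(x+\beta)+\gamma$, which is $\alpha$, to $1$ forces $\alpha = 1$. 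Next, comparing the degree-$(d-1)$ coefficient: in $R(x+\beta)$ the coefficient of $x^{d-1}$ is $\binom{d}{1}\beta = d\beta$ (the $x^{d-1}$ term of $R$ itself is absent), and this must equal $0$; since $p \nmid d$, this gives $\beta = 0$. Finally with $\alpha=1$, $\beta=0$ we have $\alpha R(x+\beta)+\gamma = R(x)+\gamma$, and comparing constant terms (both zero) forces $\gamma = 0$.

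The only subtlety — and the step I would flag as the main obstacle — is making the degree-$(d-1)$ comparison airtight: one must be sure that no other term of $R(x+\beta)$ contributes to $x^{d-1}$. Since $R$ has degree exactly $d$ with leading coefficient $1$, the only monomials of $R$ that can produce an $x^{d-1}$ term upon the substitution $x \mapsto x+\beta$ are $x^d$ (contributing $d\beta x^{d-1}$) and $x^{d-1}$ (which is absent by c-normalization). So the coefficient of $x^{d-1}$ in $R(x+\beta)$ is exactly $d\beta$, and then $\alpha = 1$ together with the c-normalization of $\alpha R(x+\beta)+\gamma$ forces $d\beta = 0$, hence $\beta = 0$ by $p \nmid d$. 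I would write this out as a short direct computation rather than invoking anything heavier. The whole argument is elementary; the value of stating it as an observation is that uniqueness of $(a,b,c)$ is exactly what licenses the later counting claim that each $\R$-class of a PP with $p \nmid d$ has size equal to the number of admissible triples, which I would note in a closing sentence.
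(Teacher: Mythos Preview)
Your argument is correct. The paper's own proof is a little more direct: rather than separating existence from uniqueness and reducing uniqueness to a stabilizer computation, it simply writes $Q(x)=aP(x+b)+c$ and reads off $a=a_d^{-1}$ from the degree-$d$ coefficient, then $b=-aa_{d-1}/d$ from the degree-$(d-1)$ coefficient (this is the step using $p\nmid d$), and finally $c$ from the constant term. This single pass gives existence and uniqueness simultaneously, with no need for the intermediate ``compose two normalizations and show the stabilizer is trivial'' step. Your reduction is valid and the core computation (the $x^{d-1}$ coefficient is $d\beta$, hence $\beta=0$ when $p\nmid d$) is identical to the paper's; your route just packages it with one extra layer. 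The stabilizer viewpoint does make explicit that the action of the normalization group on degree-$d$ PPs is free when $p\nmid d$, which is exactly what feeds into the $q^2(q-1)$ count in Lemma~\ref{lemma:sizeof-c-classes}, so your closing remark is on point.
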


\begin{proof}
Let $Q(x)=aP(x+b)+c$.
Since $Q(x)$ is c-normalized, the coefficient of its degree-$d$ term is $aa_d=1$.
Hence $a=a_d^{-1}$. 
The degree-$(d-1)$ term of $Q(x)$ is $(aa_{d-1}+db) x^{d-1}=0$, 
so $b=-aa_{d-1}/d$. 
(Note that $d\ne 0$).
Finally, since the constant term of $Q(x)$, namely $Q(0)$, is 0, $c$ is uniquely determined by the constant term of $aP(x+b)$. 
\end{proof}

Let $Q(x)$ be an nPP of degree $d$. 
The equivalence class under the relation $\R$ containing $Q(x)$, denoted by $[Q]$, is the set
\[ [Q]=\{aQ(x+b)+c~|~a,b,c\in GF(q) \text{ and }a\ne 0\}. \]

\begin{lemma} \label{lemma:sizeof-c-classes}
Let $Q(x)$ be an nPP of degree $d<q$ where $p \nmid d$.
Then all $q^2(q-1)$ polynomials in $[Q]$ are different.
\end{lemma}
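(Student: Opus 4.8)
The plan is to show that the map $(a,b,c) \mapsto aQ(x+b)+c$ from $(GF(q)^* \times GF(q) \times GF(q))$ into $[Q]$ is injective; since the domain has size $q^2(q-1)$ and the image is by definition $[Q]$, this forces $|[Q]| = q^2(q-1)$ with all listed polynomials distinct. So I would suppose $aQ(x+b)+c = a'Q(x+b')+c'$ for two triples $(a,b,c)$ and $(a',b',c')$ with $a,a' \neq 0$, and deduce that the triples coincide.

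\textbf{Key steps.} First, compare leading coefficients: since $Q$ is monic of degree $d$, the degree-$d$ coefficient on the left is $a$ and on the right is $a'$, so $a = a'$. Cancelling $a$, we are left with $Q(x+b)+c/a = Q(x+b')+c'/a$, i.e. $Q(x+b) - Q(x+b') = (c'-c)/a$ is a constant. Second, I would argue $b = b'$: set $R(x) = Q(x+b) - Q(x+b')$; this has degree at most $d-1$, and if $b \neq b'$ I want to show $R$ is nonconstant. The cleanest route is to look at the degree-$(d-1)$ coefficient of $Q(x+b)$, which is $\binom{d}{1} b + a_{d-1}^{Q}$ where $a_{d-1}^{Q} = 0$ because $Q$ is c-normalized — so this coefficient is $db$, and $d \not\equiv 0 \pmod p$ since $p \nmid d$. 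Hence the degree-$(d-1)$ coefficient of $R$ is $d(b-b')$, which is nonzero when $b \neq b'$, contradicting that $R$ is constant. (This mirrors exactly the argument in Observation \ref{obs:unique}.) Therefore $b = b'$. Third, with $a = a'$ and $b = b'$ in hand, the original equation collapses to $c = c'$. This establishes injectivity.

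\textbf{Main obstacle.} The only place requiring care is the second step, where the hypothesis $p \nmid d$ is essential: this is precisely what guarantees $\binom{d}{1} \neq 0$ in $GF(q)$, so that adding a nonzero $b-b'$ to the argument genuinely perturbs the $x^{d-1}$ coefficient. Without this, the whole statement fails — which is the motivation for the separate m- and b-normalization treatments. I should also note in passing that $d < q$ is used implicitly so that the polynomials $x^d, x^{d-1}, \dots, 1$ are linearly independent as functions is not actually needed here — what we need is just that distinct polynomials of degree $d$ are distinct as formal polynomials, which is automatic; the bound $d<q$ only matters for the PP interpretation and for $\binom{d}{1}$ reducing correctly, which is already covered. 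So I expect the proof to be short, with the binomial-coefficient nonvanishing being the one substantive point to state explicitly.
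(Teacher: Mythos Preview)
Your proposal is correct and follows essentially the same approach as the paper: assume two triples give the same polynomial, match the degree-$d$ coefficients to get $a=a'$, then match the degree-$(d-1)$ coefficients (where the hypothesis $p\nmid d$ makes $d$ invertible) to get $b=b'$, and finally read off $c=c'$. The only cosmetic difference is that the paper keeps the $a_{d-1}$ term of $Q$ general rather than invoking the c-normalization condition $a_{d-1}=0$, but this changes nothing in the argument.
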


\begin{proof}
Let $Q(x) = a_{d}x^{d} + a_{d-1}x^{d-1} + \dots + a_1x + a_0$ and assume that polynomials $P(x)=aQ(x+b)+c$ and $P'(x)=a'Q(x+b')+c'$ in $[Q]$ are equal.
Since the degree-$d$ terms of $P(x)$ and $P'(x)$ are equal, $a=a'$. 
Since the degree-$(d-1)$ terms are equal, $a(a_{d-1}+a_ddb)=a'(a_{d-1}+a_ddb')$. 
Then
$a_{d-1}+a_ddb=a_{d-1}+a_ddb'$ and $db=db'$ and $b=b'$ (since $d$ is not a multiple of $p$).
Let $e$ be the lowest degree term of $aQ(x+b)$. Then the lowest degree terms of $P(x)$ and $P'(x)$ 
are $e+c$ and $e+c'$, respectively. Thus, $c=c'$ and the claim follows. 
\end{proof}

Note that Lemma \ref{lemma:sizeof-c-classes} implies that each equivalence class of $\R$ contains one and only one nPP. 
Thus when $d$ is not a multiple of $p$, each equivalence class contains exactly $q^2(q-1)$ members (including the representative nPP). Note that the equivalence classes by definition are disjoint.
If the number of nPPs is $k$, there are $kq^2(q-1)$ PPs.

\section{Mapping nPPs to nPPs}
\label{sec:F-and-G-maps}

We now describe the {\em $F$-map} and the {\em $G$-map}, two new functions that map nPPs to nPPs. 
We will use these functions in Section \ref{sec:F-G equivalence classes} to define new equivalence relations on nPPs whose equivalence classes are unions of equivalence classes of $\R$.

\subsection{The $F$-map}
\label{subsec:Fmap}

The $F$-map is the function that multiplies the degree $(d-k)$ term of $P(x)$ by $t^k$, for all $k$.
The $F$-map allows one additional coefficient to be fixed, resulting in an order of magnitude speedup in the search for PPs.

\begin{dfn} 
\label{def:F-map}
Define the $\boldsymbol{F}${\bf-map} by
\[    F(P(x)) = t^0 a_d x^d + t^1 a_{d-1} x^{d-1} + \dots + t^k a_{d-k} x^{d-k} + \dots + t^{d-1} a_1 x+t^da_0 = \sum_{k=0}^d t^k a_{d-k} x^{d-k}.\]
\end{dfn}

\noindent First, we show that the set of PPs is closed under the
$F$-map.

\begin{lemma} \label{lemma:ClosureUnderFmap}
If $P(x)$ is a PP, then
$F(P(x))$
is a PP.
\end{lemma}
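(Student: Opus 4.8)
The plan is to exhibit $F(P(x))$ as the composition of $P$ with fixed field operations, so that closure is immediate from the fact that PPs are closed under such operations. The key observation is that the $F$-map simply substitutes $x \mapsto t^{-1}x$ (equivalently $x \mapsto sx$ for a fixed nonzero $s$) and then rescales: writing out $P(t^{-1}x) = \sum_k a_{d-k} (t^{-1}x)^{d-k} = t^{-d}\sum_k t^k a_{d-k} x^{d-k}$, we see that $t^d \cdot P(t^{-1}x) = \sum_k t^k a_{d-k} x^{d-k} = F(P(x))$. So $F(P(x)) = t^d\, P(t^{-1}x)$.

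From here the argument is a one-liner. First I would note that $x \mapsto t^{-1}x$ is a bijection of $GF(q)$ onto itself, since $t \ne 0$ and hence $t^{-1}$ exists and is nonzero; therefore $P(t^{-1}x)$ runs over all the values $P$ takes as $x$ ranges over $GF(q)$, i.e. it is a permutation of $GF(q)$ whenever $P$ is. Then I would note that multiplication by the nonzero constant $t^d$ is also a bijection of $GF(q)$, so $t^d\,P(t^{-1}x)$ is again a permutation. Composing two bijections gives a bijection, so $F(P(x))$ is a PP. (Equivalently, this is just the extended normalization operation $aP(sx)$ with $a = t^d$ and $s = t^{-1}$, which the excerpt has already identified as mapping PPs to PPs; but I would prefer to spell out the bijection argument rather than lean on that, to keep the lemma self-contained.)

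There is essentially no obstacle here — the only thing to be slightly careful about is the bookkeeping of exponents in the substitution identity $F(P(x)) = t^d P(t^{-1}x)$, making sure the power of $t$ attached to the degree-$(d-k)$ term comes out as exactly $t^k$ and not $t^{-k}$ or $t^{d-k}$; that is a routine check of the binomial-free computation above. Note also that the constant term behaves correctly: $a_0$ picks up $t^d$, consistent with $k = d$ in the definition. So the proof is: establish the identity $F(P(x)) = t^d\,P(t^{-1}x)$ by direct expansion, observe $x \mapsto t^{-1}x$ and $y \mapsto t^d y$ are bijections of $GF(q)$, and conclude.
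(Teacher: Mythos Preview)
Your proof is correct and takes essentially the same approach as the paper: both establish the identity $F(P(x)) = t^d\,P(t^{-1}x)$ and conclude that $F(P(x))$ is a PP because it is $P$ composed with the bijections $x \mapsto t^{-1}x$ and $y \mapsto t^d y$. The only cosmetic difference is that the paper derives the identity by evaluating $F(P)$ at a generic nonzero point $t^i$ and simplifying, whereas you obtain it by a direct coefficient-by-coefficient expansion of $P(t^{-1}x)$; your route is arguably slightly cleaner.
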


\begin{proof}
We show that $F(P(x))$ is one-to-one.
Observe that, for any nonzero $t^i$,
\begin{align*}
F(P(t^i))&=t^0a_d(t^i)^d + t^1a_{d-1}(t^i)^{d-1} + \dots + t^{d-1}a_1(t^i)^1+t^da_0\\
&=t^d(a_d(t^d)^{i-1} + a_{d-1}(t^{d-1})^{i-1} + \dots + a_1(t^1)^{i-1}+a_0)\\
&=t^d(a_d(t^{i-1})^d + a_{d-1}(t^{i-1})^{d-1} + \dots + a_1(t^{i-1})^1+a_0)\\
&=t^dP(t^{i-1})\\
&=t^dP(t^i/t).
\end{align*}
%Suppose $F(P(t^i))=F(P(t^j))$. 
%Then $t^dP(t^{i-1})=t^dP(t^{j-1})$. 
%So, $P(t^{i-1})=P(t^{j-1})$ and, since $P(x)$ is a PP, $t^{i-1}=t^{j-1}$. 
%Therefore $t^i=t^j$. 
%It follows that $F(P(x))$ is one-to-one.
It follows that $F(P(x))$ is a permutation polynomial, since it is obtained from $P(x)$ by multiplying by the constant $t^d$ and
replacing the argument $x$ by $t^{-1}x$.
\end{proof}

Lemma \ref{lemma:ClosureUnderFmap} provides an alternative formulation for the $F$-map, namely, 
$F(P(x)) = t^dP(t^{-1}x) =t^dP(x/t).$
Let $ F^i(P(x))$ denote the composition of the $F$-map with itself $i$ times, for some $i$. Then
\begin{equation} \label{eqn:alt-F-i-map}
    F^i(P(x)) = t^{di}P(t^{-i}x) =t^{di}P(x/t^i). 
\end{equation}
\noindent 
We will use this formulation in later proofs in this paper.

By definition, the $F$-map multiplies the coefficient $a_d$ by $t^0=1$, so when $P(x)$ is an nPP, $F(P(x))$ is also an nPP.
This is formalized in Corollary \ref{cor:F-mapGenratesNPPs} below, which follows easily from Lemma \ref{lemma:ClosureUnderFmap}.

\begin{cor}
\label{cor:F-mapGenratesNPPs}
If $P(x)$ is an nPP over GF(q), then
$F(P(x))$
is an nPP.
\end{cor}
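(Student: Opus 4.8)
The plan is to derive Corollary \ref{cor:F-mapGenratesNPPs} directly from Lemma \ref{lemma:ClosureUnderFmap} together with the definition of the $F$-map, checking that each of the three (or four) normalization constraints that define an nPP is preserved. Recall that an nPP has $a_d = 1$, $a_0 = 0$, and one further vanishing coefficient depending on the normalization type (c-, m-, or b-normalization, per Table \ref{tbl-NormTypes}).

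First I would invoke Lemma \ref{lemma:ClosureUnderFmap} to conclude that $F(P(x))$ is a permutation polynomial, so only the normalization conditions remain. By Definition \ref{def:F-map}, $F(P(x)) = \sum_{k=0}^d t^k a_{d-k} x^{d-k}$, so the coefficient of $x^{d-k}$ in $F(P(x))$ is $t^k a_{d-k}$. In particular: the leading coefficient ($k=0$) is $t^0 a_d = a_d = 1$, so $F(P(x))$ is monic; the constant term ($k=d$) is $t^d a_0 = 0$; and for any index $d-k$ with $0 < k < d$, the coefficient $t^k a_{d-k}$ is zero if and only if $a_{d-k}$ is zero, since $t^k \ne 0$. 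Hence if $P(x)$ is c-normalized (so $a_{d-1}=0$), then the $x^{d-1}$ coefficient of $F(P(x))$ is $t^1 \cdot 0 = 0$ and $F(P(x))$ is c-normalized; the same argument shows m-normalization is preserved (whichever of $a_{d-1}, a_{d-2}$ vanishes in $P$, the corresponding coefficient vanishes in $F(P(x))$), and likewise b-normalization (whichever of $a_r, a_{r-1}$ vanishes, with $r = 2^i-1$). I would note that the degree is unchanged, since the leading coefficient stays $1$, so the relevant normalization type is the same for $P$ and $F(P(x))$.

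There is essentially no obstacle here: the only slightly delicate point is making sure that multiplying coefficient $a_{d-k}$ by the nonzero scalar $t^k$ cannot turn a nonzero coefficient into a zero one or vice versa, which is immediate because $t^k \ne 0$ in $GF(q)$, and that the leading and constant coefficients are scaled by $t^0 = 1$ and by $t^d$ (harmlessly, since $a_0 = 0$) respectively. So the proof is just a one-line appeal to Lemma \ref{lemma:ClosureUnderFmap} followed by reading off the coefficients of $F(P(x))$ from Definition \ref{def:F-map} and observing that each defining condition of an nPP is inherited. I would keep the write-up to a few sentences.
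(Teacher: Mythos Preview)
Your proposal is correct and mirrors the paper's own (brief) argument: the paper states that the corollary ``follows easily from Lemma \ref{lemma:ClosureUnderFmap}'' and the intended proof checks, exactly as you do, that the leading coefficient $t^0a_d=1$, the constant term $t^da_0=0$, and any zero coefficient $a_{d-k}$ stays zero after scaling by $t^k$, so each normalization type is preserved.
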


%**********************************
\begin{comment}

\begin{proof}
By Lemma \ref{lemma:ClosureUnderFmap}, we only need to show that $F(P(x))$ is normalized when $P(x)$ is normalized. 
Note that the coefficient $a_d$ is the same for both $P(x)$ and $F(P(x))$. 
Also, if the coefficient $a_0$ of $P(x)$ is zero, then the coefficient $t^da_0$ of $F(P(x))$ is also 0. 
For c-normalization, if $a_{d-1}$ of $P(x)$ is zero, then the coefficient $t^1a_{d-1}$ is also 0.
For m-normalization, if the coefficient $a_{d-2}$ is zero, then the coefficient $t^2a_{d-2}$ is also 0. 
For b-normalization, if $a_{d-j}$, for some specified $j$, is zero, then the corresponding coefficient $t^ja_{d-j}$ is also 0.
\end{proof}

\end{comment}
%*************************************
Observe that the $q-1$ non-zero elements of $GF(q)$ form a cyclic group, $\G_{q-1}$, under multiplication \cite{LidlNiederreiter94}. 
Moreover, for each $k$, there exists an $r,~ (0<r \le q-1)$, such that the iterates, $ t^{k},t^{2k}, \dots, (t^{rk \mod(q-1)}=1)$, form a cyclic subgroup, $\H_{t^{k}}$, of $\G_{q-1}$. 
By Lagrange's theorem, the number of elements in $\H_{t^{k}}$, \ie ord($\H_{t^{k}})$, 
is a divisor of $q-1$.

Consider iterations of the $F$-map, namely, the sequence 
\begin{equation} \label{eqn:F-cycle on P(x)}
    P(x), F(P(x)),  F^2(P(x)), \dots~ ,  F^i(P(x)), \dots~ , 
\end{equation}
where $F^i(P(x))=\sum_{k=0}^d t^{ik}a_{d-k} x^{d-k}$.
For the degree $(d-k)$ term, iterative use of the $F$-map yields the sequence of coefficients
\[ a_{d-k},~t^k  a_{d-k}, ~t^{2k} a_{d-k},~ \dots~,(t^{rk \mod(q-1)}a_{d-k}=a_{d-k}) \]
where the terms are simply the elements of $\H_{t^{k}}$ multiplied by the common factor $a_{d-k}$.
This forms a cycle of length $r \leq q-1$ where $r$ is smallest integer such that $t^{rk \mod(q-1)}a_{d-k}=a_{d-k}$.
\begin{dfn} 
\label{def:$F_k$-map}
%Given $d$, let $q=p^m$ be a prime power, and 
Let  $k$ be an integer such that $1 \leq k \le d$.
%Let $P(x)=a_dx^d + a_{d-1}x^{d-1} + \dots + a_1x+a_0$ be a PP over $GF(q)$. 
%We define the $F$-map, for any nPP $P(x)$ and for all $k ~(0 \leq k \leq d)$, as the function that multiplies the $k^{th}$ coefficient by $t^k$, where coefficients are indexed from left to right. 
Define the $\boldsymbol{F_k}{\bf-map}$ by $F_k(x)=xt^k$.
\end{dfn}
\noindent The $F_k$ function gives us a third way to look at the $F$-map, namely, for each $k$, the $F$-map computes $F_k(a_{d-k})=t^ka_{d-k}$.
In other words, $F(P(x)) = a_dx^d + t^1a_{d-1}x^{d-1} + \dots + t^{d-1}a_1x+t^da_0 = a_dx^d + F_1(a_{d-1})x^{d-1} + \dots + F_{d-1}(a_1)x + F_d(a_0)$.
Note that iterations of the $F_k$-map on the element 1 yields the cyclic subgroup $\H_{t^k}$.
We call this sequence of iterations the $\boldsymbol{F_k}${\em{\bf-cycle}}.
%Define the {\em {\bf length of the}} $\boldsymbol{F_k}${\em{\bf-cycle}} to be $ord(\H_{t^k})$. 
Define the length of the $F_k$-cycle to be $ord(\H_{t^k})$. 
In this paper, we are interested in those values of $k$ for which the $(d-k)^{th}$ coefficient of an nPP $P(x)$ is not 0.

Observe that for any PP $P(x)$, there is an integer $s \ge 1$, such that the sequence shown in Equation (\ref{eqn:F-cycle on P(x)}) forms a cycle. 
\begin{dfn} 
%{[F-map]}
\label{def:F-cycle on P(x)}
The sequence of iterates of the $F$-map on the PP $P(x)$, namely
\[
P(x), F(P(x)),  F^2(P(x)), \dots, F^s(P(x)) = P(x) 
\]
is called the  $\boldsymbol{F}$\bf{-cycle on }$\boldsymbol{P(x)}$.
\end{dfn}

Consider the values of $k$ for which the coefficient $a_{d-k}$ in $P(x)$ is nonzero.
For all such $k$, 
$(1 \le k \le d)$, 
%such that  $a_{d-k} \neq 0$,
let $g_k=gcd(k,q-1)$,
and let $j = \min\limits_{k}\{g_k\}$. 
The length of the $F$-cycle on $P(x)$ is $s=(q-1)/j$.
%\end{dfn}
%Observe that by Corollary \ref{cor:F-mapGenratesNPPs}, if $P(x)$ is an nPP, then the $F$-cycle on $P(x)$ is a sequence of nPPs.
%We call this the {\em $F$-cycle on $P(x)$}, and
%the smallest such integer $i$ is called the {\em length of the $F$-cycle on $P(x)$}. 
%Also observe that the length of the $F$-cycle on $P(x)$ is simply the least common multiple of the lengths of the $F_k$-cycles on all non-zero coefficients of $P(x)$.
That is, the length of the $F$-cycle on $P(x)$ is the least common multiple of the orders of the subgroups $\H_{t^{k}}$ for all $k$ such that $a_{d-k} \neq 0$.

%Consider iterations of the $F$-map, namely, the sequence 
%\[P(x), F(P(x)),  F^2(P(x)), \dots~ ,  F^i(P(x)), \dots~ \],
%where $F^i$, for some $i$, denotes composing the $F$-map with itself $i$ times, 
%and $i$ denotes the residue$\pmod s$, where $s$ is the length of the $F$-cycle. 
%That is, $F^i(P(x))=\sum_{k=0}^d a_{d-k}t^{ik} x^{d-k}$.

%That is, the $F$-map is the function that multiplies the $(d-k)^{th}$ coefficient of $P(x)$ by $t^k$, for all $k$.
%Consider an iteration of the $F$-map.
%We see that for any nPP $P(x)$, there is an integer $i \ge 1$, such that $F^i(P(x)) = P(x)$.
%The smallest such integer $i$ is called the {\em length of the $F$-map cycle on $P(x)$}. 
%For all $k$ 
%such that  $a_{d-k+1} \neq 0$, let $g_k=gcd(k,q-1)$,
%and let $j = min(g_k)$. 
%The length of the $F$-map cycle is $(q-1)/j$.
%For the $k^{th}$ coefficient an iterative use of the $F$-map yields the sequence $a_{d-k+1}, k  a_{d-k+1}, k^2 a_{d-k+1}, \dots,$ which by our notation is also the sequence $a_{d-k+1}, (k-1) + a_{d-k+1}, 2(k-1)+a_{d-k+1}, \dots $. 
%By definition, the $F$-map results in an nPP, so in practice, there is no need to apply the $F$-map to the coefficient $a_d$.  Hence for practical purposes $k \geq 2$, depending on the type of normalization. 
%For example, for c-normalization, the values of $k$ are given by $2 \le k \le d-2$. 

%\todo{Give our notation and naming conventions}

For example, consider $GF(5^2)$. 
The cyclic subgroups of $\G_{5^2-1}$ are 
$\H_{t^0}$, $\H_{t^1}$, $\H_{t^2}$, $\H_{t^3}$, $\H_{t^4}$, $\H_{t^6}$, $\H_{t^8}$, $\H_{t^{12}}$, and their orders are 1, 24, 12, 8, 6, 4, 3 and 2, respectively. 
To see this, observe that 
\begin{align*}
    \H_1=\H_{t^0} &=\{t^{i*0}\}=\{1\} \text{ for all } i   &\H_{7}=\H_{t^6} &=\{t^{i*6}\} \text{ for all } i\\
    \H_{2}=\H_{t^1} &=\{t^{i*1}\} \text{ for all } i         & &=\{t^{0*6},t^{1*6},t^{2*6},t^{3*6},t^{4*6}\}\\
        &=\{t^{0*1},t^{1*1},t^{2*1},t^{3*1},\dots\}          & &=\{t^0,t^6, t^{12}, t^{18},t^{24}=t^{0} \}\\
        &=\{t^0,t^1, t^2, t^3,\dots,t^{23},t^{24}=t^{0} \}   & &=\{1,7,13,19\}\\
        &=\{1,2,3,4,\dots,24\}                               & &=\H_{19}\\
        &=\H_{6}=\H_{8}=\H_{12}=\H_{14}=\H_{18}=\H_{20}=\H_{24}   &\H_{9}=\H_{t^8} &=\{t^{i*8}\} \text{ for all } i\\
    \H_{3}=\H_{t^2} &=\{t^{i*2}\} \text{ for all } i         & &=\{t^{0*8},t^{1*8},t^{2*8},t^{3*8}\}\\
        &=\{t^{0*2},t^{1*2},t^{2*2},t^{3*2},\dots\}          & &=\{t^0,t^8, t^{16}, t^{24}=t^{0} \}\\
        &=\{t^0,t^2, t^4, t^6,\dots,t^{22},t^{24}=t^{0} \}   & &=\{1,9,17\}\\
        &=\{1,3,5,7,\dots,23\}                               & &=\H_{17}\\  
        &=\H_{11}=\H_{15}=\H_{23}                   &\H_{13}=\H_{t^{12}} &=\{t^{i*12}\} \text{ for all } i\\
    \H_{4}=\H_{t^3} &=\{t^{i*3}\} \text{ for all } i         & &=\{t^{0*12},t^{1*12},t^{2*12}\}\\
        &=\{t^{0*3},t^{1*3},t^{2*3},t^{3*3},\dots\}          & &=\{t^0,t^{12}, t^{24}=t^{0} \}\\
        &=\{t^0,t^3, t^6, t^9,\dots,t^{21},t^{24}=t^{0} \}   & &=\{1,13\} \\
        &=\{1,4,7,10,\dots,22\} \\                           
        &=\H_{10}=\H_{16}=\H_{22} \\                         
    \H_{5}=\H_{t^4} &=\{t^{i*4}\} \text{ for all } i\\
        &=\{t^{0*4},t^{1*4},t^{2*4},t^{3*4},t^{4*4},t^{5*4},t^{6*4}\}\\
        &=\{t^0,t^4, t^8, t^{12},t^{16},t^{20},t^{24}=t^{0} \}\\
        &=\{1,5,9,13,17,21\} \\        &=\H_{21}       
\end{align*}
 
\noindent where $mod~{24}$ arithmetic is used in the exponents. 

To illustrate the computation of $F$-cycles, consider the nPP $P(x)$ = $x^9+2x^7+12x^5+4x^3+17x$ over $GF(25)$. %with the primitive polynomial $x^2+3x+3$. 
For all $k, (1 \le k \le d)$, the non-zero coefficients are $a_7, a_5, a_3$ and $a_1$, which correspond to $k=2,4,6$, and 8 and $g_k=gcd(k,24)=2,4,6$, and 8, respectively.
We next compute $j = \min\limits_{k}\{g_k\}= \min\{2,4,6,8\}=2$.
So, the length of the $F$-cycle on $P(x)$ is $(q-1)/j=24/2=12$.
That $F^{12}(P(x)) = P(x)$ is easily verified.
Note also that the length of each respective $F_k$-cycle is the order of the subgroup $\H_{t^k}$, which, referring to the list above, is $ord(\H_{t^2})=12, ~ord(\H_{t^4})=6, ~ord(\H_{t^6})=4,$
 and $ord(\H_{t^8})=3$,
%\begin{flalign*}
%    ord(\H_{t^2}) &=12\\
%    ord(\H_{t^4}) &=6\\
%    ord(\H_{t^6}) &=4\\
%    ord(\H_{t^8}) &=3
%\end{flalign*}
and their least common multiple is 12, which is the length of the $F$-cycle on $P(x)$.

In general, for larger degree nPPs over $GF(q)$, there is a $k$ such that $gcd(k,q-1)=1$, so the length of the $F$-cycle  is $q-1$. 
This means that if there is a nPP $a_dx^d+ a_{d-1}x^{d-1}+ \dots + a_1x+ a_0$ such that $a_{d-k} \neq 0$, then there is also one in which $a_{d-k} = 1$. 
This allows us to fix an additional coefficient $a_{d-k}$ to the values in $\{0,1\}$, thus reducing the search time for nPPs to $O(dq^{d-2})$.

%--------------------------------------------
\subsection{The $G$-map}
\label{subsec:Gmap}

The $G$-map is the function that raises each coefficient in $P(x)$ to the power $p$.
That is,
\begin{dfn} 
\label{def:G-map}
%Let $q=p^m$ be a prime power.
%Let $P(x)=a_dx^d + a_{d-1}x^{d-1} + \dots + a_1x+a_0$ be a PP over $GF(q)$, where $2\le d<q$. 
Define the $\boldsymbol{G}${\bf -map} by
%    \begin{center}
%    $G(P(x)) = a_d^p x^d + a_{d-1}^p x^{d-1} + \dots + a_1^p x + a_0^p. $
%    \end{center}
\[G(P(x)) = a_d^p x^d + a_{d-1}^p x^{d-1} + \dots + a_1^p x + a_0^p= \sum_{k=0}^d a_{d-k}^p x^{d-k}.\]
\end{dfn}

%Let $P(x) = a_dx^d + a_{d-1}x^{d-1} + \dots + a_1x + a_0$ be a permutation polynomial in $GF(p^m)$, for some prime p and $m \geq 1$. We define a function called the $G$-map. The function G maps a nPP $P(x)$ to a nPP $G(P(x))$, by raising each coefficient in the nPP $P(x)$ to the power $p$. That is,

%$$G(P(x)) = a_d^p x^d + a_{d-1}^p x^{d-1} + \dots + a_1^p x + a_0^p. $$

\noindent We also consider the function $G$ as a function on the elements of $GF(q)$, which follows from the case the polynomial is of degree 0. 

%We show that if $P(x)$ is a permutation polynomial then $G(P(x))$ is also a permutation polynomial.

We show that the set of PPs is closed under the $G$-map.

\begin{lemma} \label{lemma:ClosureUnderGmap}
%Let $p$ be a prime and and let $q=p^m$ with $m>1$. 
If $P(x)$ is a PP,
then $G(P(x))$ is a PP.
\end{lemma}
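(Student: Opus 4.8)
The plan is to exploit the Frobenius map $\phi\colon GF(q)\to GF(q)$, $\phi(y)=y^p$, which is a field automorphism of $GF(q)=GF(p^m)$ and hence a bijection (its inverse being $y\mapsto y^{p^{m-1}}$). The idea is that the $G$-map is essentially conjugation of $P$ by $\phi$, once one accounts for the companion substitution $x\mapsto x^p$ that is forced by raising only the coefficients to the $p$-th power.

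First I would compute, for an arbitrary element $y\in GF(q)$, the effect of $\phi$ on the value $P(y)$. Since $\phi$ is a ring homomorphism, it commutes with finite sums and products, so
\[
\phi\bigl(P(y)\bigr)=\phi\Bigl(\sum_{k=0}^d a_{d-k}\,y^{d-k}\Bigr)=\sum_{k=0}^d a_{d-k}^{\,p}\,(y^p)^{d-k}=G(P)\bigl(y^p\bigr).
\]
In other words, $G(P)\circ\phi=\phi\circ P$ as self-maps of $GF(q)$, and therefore $G(P)=\phi\circ P\circ\phi^{-1}$ on $GF(q)$.

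From here the conclusion is immediate. Viewed as a function on $GF(q)$, $G(P)$ is a composition of three bijections (recall $P$ induces a bijection of $GF(q)$ because it is a PP, and $\phi$ is a bijection), hence a bijection. Moreover $G(P)$ is genuinely a polynomial of degree $d$, since its leading coefficient $a_d^{\,p}$ is nonzero (as $a_d\ne 0$ and $\phi$ is injective). A degree-$d$ polynomial over $GF(q)$ that permutes $GF(q)$ is by definition a PP, so $G(P)$ is a PP.

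I do not anticipate a real obstacle; the only point requiring care is bookkeeping: because it is the coefficients, not the whole polynomial, that get raised to the power $p$, the substitution $x\mapsto x^p$ appears, and one must use that $\phi$ is \emph{invertible} on the finite field $GF(q)$ (not merely injective on an algebraic closure) in order to undo it. An alternative, fully equivalent phrasing avoids the conjugation identity and argues injectivity directly: if $G(P)(u)=G(P)(v)$, write $u=\phi(y)$ and $v=\phi(z)$ for the unique preimages, apply the displayed identity to get $\phi\bigl(P(y)\bigr)=\phi\bigl(P(z)\bigr)$, then peel off $\phi$ and then $P$ using their injectivity to conclude $u=v$.
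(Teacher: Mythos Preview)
Your proof is correct and uses essentially the same idea as the paper: both hinge on the Frobenius identity $\phi(P(y))=G(P)(\phi(y))$ (which the paper writes as $(P(x))^p=G(P(x^p))$) together with the bijectivity of $\phi(y)=y^p$ on $GF(q)$. Your formulation as a conjugation $G(P)=\phi\circ P\circ\phi^{-1}$ is cleaner than the paper's explicit chase through several auxiliary permutations, but the mathematical content is the same.
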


\begin{proof}
Observe that for all $c,d$ in $GF(q)$, $(c+d)^p = c^p + d^p$.
This is the so-called ``Freshman Dream'' idea,
%due to the observation that, for any $x$, $x+x+x+ \dots +x$ ($p$ times) is zero.
which follows from the binomial theorem and the fact that $\binom p i \equiv 0\pmod p$ when $p$ is prime \cite{LidlNiederreiter94}.
So, 
\begin{align*}
(P(x))^p&=(a_dx^d + a_{d-1}x^{d-1} + \dots + a_1x + a_0)^p\\
&= a_d^p(x^p)^d + a_{d-1}^p(x^p)^{d-1} + \dots + a_1^px^p + a_0^p\\
&=G(P(x^p)).
\end{align*}

Also observe that if $(z_0, z_1, \dots , z_{q-1})$ is a permutation of $GF(q)$, 
then so is $(z_0^p, z_1^p, \dots , z_{q-1}^p)$.
Suppose not. 
That is, suppose that $z_i^p = z_j^p$.
Then $(z_i^p - z_j^p)=0$, 
so, $(z_i - z_j)^p =0$. 
Hence, $z_i - z_j = 0$, and so, $z_i=z_j$, a contradiction. 

By assumption, $P(x)$ is a PP, so  $\theta=(P(0),P(1), \dots ,P(q-1))$ is a permutation. 
It follows that $\pi=(P(0)^p, P(1)^p, \dots , P(q-1)^p)$ is also a permutation. 
Since $\pi$ is the permutation generated by the polynomial $(P(x))^p$, it follows that  $(P(x))^p$ is a PP.
Hence, $G(P(x^p))$ is a PP, since, as shown above, $(P(x))^p=G(P(x^p))$. 
This means the permutation generated by $G(P(x^p))$, namely, 
$(G(P(0^p)), G(P(1^p)), \dots,G(P((q-1)^p)))$ is identical to the permutation $\pi$.
That is, $\pi=(P(0)^p, P(1)^p, \dots , P(q-1)^p)=(G(P(0^p)), G(P(1^p)), \dots,G(P((q-1)^p)))$.

%and $(G(P(0^p)), G(P(1^p)), \dots,G(P((q-1)^p)))$ are equal. 

%As shown above, $(P(0)^p, P(1)^p, \dots , P(q-1)^p)=(G(P(0^p)), G(P(1^p)), \dots,G(P((q-1)^p)))$. 
%Finally, since $(0^p, 1^p, \dots, (q-1)^p)$ is a permutation of $GF(q)$, it follows by definition that every element of $GF(q)$ appears exactly once in the list.
%We have seen that $(0^p, 1^p, \dots, (q-1)^p)$ is a permutation, so that $(G(P(0)), G(P(1)), \dots, G(P(q-1)))$ is also a permutation of $GF(q)$,
Note also that $\sigma=(0^p, 1^p, \dots, (q-1)^p)$ is a permutation.
Applying $P(x)$ to $\sigma$ yields the permutation   $\rho=(P(0^p),P(1^p), \dots ,P((q-1)^p))$. 
%Since $\rho$ is a permutation, it is simply a reordering of the permutation $\theta=(P(0),P(1), \dots ,P(q-1))$ (generated by the PP $P(x)$).
Since $\rho$ is a permutation, it is simply a reordering of the permutation  $\theta=(P(0),P(1), \dots ,P(q-1))$.
Hence the sequence $\tau=(G(P(0)), G(P(1)), \dots$, $G(P(q-1)))$ is simply a reordering of the permutation
$\pi=(G(P(0^p)), G(P(1^p)), \dots,G(P((q-1)^p)))$.
That is, $\tau$ is a permutation.
Finally, observe that $\tau$ is the permutation generated by applying the G-map to the PP $P(x)$.
So it follows that $G(P(x)) = (a_d)^px^d + (a_{d-1})^px^{d-1} + \dots + (a_1)^px+ (a_0)^p$ is a PP over $GF(q)$.
\end{proof}

We now show that the set of nPPs is closed under the $G$-map.
%We now show that if $P(x)$ is an nPP, then $G(P(x))$ is also an nPP.

\begin{cor}
\label{cor:G-mapGeneratesNPPs}
 If $P(x)$ is a nPP then $G(P(x))$ is also an nPP. 
\end{cor}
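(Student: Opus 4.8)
The plan is to combine Lemma~\ref{lemma:ClosureUnderGmap}, which already establishes that $G(P(x))$ is a PP, with a straightforward inspection of what the $G$-map does to the coefficients that define normalization. So the only work remaining is to check that the normalization constraints ($a_d=1$, $a_0=0$, and the appropriate vanishing among $a_{d-1}$, $a_{d-2}$, or $a_r$, $a_{r-1}$) are preserved under the operation $a_j \mapsto a_j^p$.

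First I would record the single fact that drives everything: for $a \in GF(q)$, $a^p = 0$ if and only if $a = 0$, and $1^p = 1$. (The first follows from the argument already used in the proof of Lemma~\ref{lemma:ClosureUnderGmap}: $a^p = b^p \Rightarrow (a-b)^p = 0 \Rightarrow a = b$, applied with $b=0$; the second is immediate.) Then I would go through the three normalization types from Table~\ref{tbl-NormTypes} in turn. In every case the leading coefficient $a_d = 1$ maps to $1^p = 1$, so $G(P(x))$ is still monic; and $a_0 = 0$ maps to $0^p = 0$, so $G(P(x))$ still has $P(0)=0$. For c-normalization, the extra condition is $a_{d-1}=0$, which is preserved since $0^p=0$. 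For m-normalization, the condition is ``$a_{d-1}=0$ or $a_{d-2}=0$''; if $a_{d-1}=0$ then $(a_{d-1})^p=0$, and if instead $a_{d-2}=0$ then $(a_{d-2})^p=0$, so the disjunction is preserved. For b-normalization, the condition (when $2^i \le d \le 2^{i+1}-3$) is ``$a_r=0$ or $a_{r-1}=0$'' with $r=2^i-1$, and the same argument applies verbatim. Hence in all three cases $G(P(x))$ satisfies exactly the defining conditions of the same normalization type that $P(x)$ did.

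There is really no substantive obstacle here; the statement is a bookkeeping corollary of Lemma~\ref{lemma:ClosureUnderGmap}. The one point worth stating carefully, rather than glossing, is the ``only if'' direction of $a^p=0 \iff a=0$ — i.e., that the $G$-map does not accidentally \emph{destroy} a nonzero leading coefficient or \emph{create} a spurious zero where normalization did not require one — but this is exactly the injectivity of $a \mapsto a^p$ that was already proved in Lemma~\ref{lemma:ClosureUnderGmap}. (It also matters that which normalization type applies depends only on $p$, $d$, and $q$, all of which are unchanged by $G$; so we are checking $G(P(x))$ against the \emph{same} clause of the definition.) I would therefore write the proof as: invoke Lemma~\ref{lemma:ClosureUnderGmap} for the PP property, then dispatch the monic and constant-term conditions uniformly, then handle the three cases of the remaining coefficient condition in one or two lines each using $0^p = 0$.
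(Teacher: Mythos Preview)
Your proposal is correct and follows essentially the same approach as the paper: invoke Lemma~\ref{lemma:ClosureUnderGmap} for the PP property, then verify that $1^p=1$ and $0^p=0$ preserve the coefficient constraints defining each normalization type. You are slightly more explicit than the paper in treating the disjunctions for m- and b-normalization and in remarking on the injectivity of $a\mapsto a^p$, but these are minor elaborations of the same argument.
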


\begin{proof}
By Lemma \ref{lemma:ClosureUnderGmap}, we only need to show that $G(P(x))$ is normalized when $P(x)$ is normalized. 
Note that the coefficient of $x^d$ in $P(x)$ is $a_d=1$. Hence, by definition of the $G$-map, the coefficient of $x^d$ in $G(P(x))$ is $a_d^p=1^p=1$. 
Also, since the coefficient of $x^0$ in $P(x)$ is $a_0=0$, then the coefficient of $x^0$ in $G(P(x))$ is $a_0^p=0^p=0$. 
For c-normalization, since the coefficient of $x^{d-1}$ in $P(x)$ is $a_{d-1}=0$, then the coefficient of $x^{d-1}$ in $G(P(x))$ is $a_{d-1}^p=0^p=0$.
For m-normalization, if the coefficient of $x^{d-2}$ in $P(x)$ is $a_{d-2}=0$, then the coefficient of $x^{d-2}$ in $G(P(x))$ is $a_{d-2}^p=0^p=0$.
For b-normalization, if for some specified $j$, $a_{d-j}=0$, then the corresponding coefficient $a_{d-j}^p=0^p=0$ as well.

%Since $a_d=1$, and the coefficient of $x^d$ in $G(P(x))$ ~G(a_d)=G(1)=1$ and $G(a_0)= G(a_i)=G(0)=0$, where $a_i$ is the other coefficient that must be zero (based on the type of normalization). 
\end{proof}

%\todo{Define $G$-cycle and the length of the $G$-cycle\\}

%Consider $GF(q)$, where $q=p^m$ with $m>1$. 
By definition, the $G$-map raises each coefficient in the PP $P(x)$  to the power $p$. 
%Recall that the $q-1$ non-zero elements of $GF(q)$ can be listed as $t^0, t^1, \dots , t^{q-2}$, where $t$ represents some primitive element of $GF(q)$.
%For the $(d-k)^{th}$ coefficient, let $t^i=a_{d-k}$, for some $i$.
Consider iterations of the $G$-map, namely, the sequence 
\begin{equation} \label{eqn:G-cycle on P(x)}
    P(x), G(P(x)),  G^2(P(x)), \dots~ ,  G^i(P(x)), \dots~ , 
\end{equation}
\noindent where $G^i(P(x))=\sum_{k=0}^d a_{d-k}^{p^i} x^{d-k}$.
Iterative use of the $G$-map on the $(d-k)^{th}$ coefficient yields the sequence 
\[    a_{d-k} = a^{p^0}_{d-k},~ a^{p^1}_{d-k},~ a^{p^2}_{d-k},~ \dots,  ~a^{p^m\pmod {p^m-1}}_{d-k}  =a_{d-k}
\]    

%    &=t^i,(t^i)^p,(t^i)^{p^2},\dots,t^i\\ 
%    &=t^i,t^{pi},t^{p^2i},\dots,t^i. 
%\end{align*}
%\end{center}
\noindent where  $ ~a^{p^m \pmod {p^m-1}}_{d-k}= a_{d-k}$, because $p^m =1 \pmod {p^m-1}$. 
This forms a cycle of length $m$. 
%Let $r_k \leq p^m-1$ be the smallest integer such that $a_{d-k}^{p^{(r_k\pmod{(p^m-1)})}}=a_{d-k}$.
Let $r_{a_{d-k}}\pmod{p^m-1}$ be the smallest integer such that $a_{d-k}^{p^{(r_{a_{d-k}})}}=a_{d-k}$.
The sequence of coefficients     $a_{d-k},~ a^{p^1}_{d-k},~  \dots,  ~a^{p^{(r_{a_{d-k}})}}_{d-k}
    =a_{d-k}$
%shown in (\ref{eqn:$G_k$-cycle on P(x)}) 
is called the  $\boldsymbol{G}$\textbf{\em-cycle
on the coefficient}  $\boldsymbol{a_{d-k}}$.
Define the length of the $G$-cycle on the coefficient  $a_{d-k}$ to be this integer $r_k$. 
%Define the \textbf{\em length of the}  $\boldsymbol{G}$\textbf{\em-cycle on the coefficient}  $\boldsymbol{a_{d-k}}$ to be this integer $r_k$. 
%on the coefficient}  $\boldsymbol{a_{d-k}}$} 
%In fact, $r_k$ is the smallest integer $i$ such that $j(p^i-1) = 0 \pmod p^m-1$, where $a_{d-k} = t^j$, for some $j$. 
%It should be noted that, for any element $a$ in $GF(q)$, the length of the $G$-cycle containing $a$ is a divisor of $m$.
%The {\em length of the $G$-cycle on $P(x)$} is simply the least common multiple of the lengths of the $G$-cycles on all non-zero coefficients of $P(x)$.

Observe that for any PP $P(x)$, there is an integer $r \ge 1$, such that the sequence of iterates of the $G$-map shown in (3)  forms a cycle. 
\begin{dfn} 
\label{def:G-cycle on P(x)}
The sequence of iterates of the $G$-map on the PP $P(x)$, namely 
\[P(x), G(P(x)),  G^2(P(x)), \dots,~ G^r(P(x)) = P(x),\] 
is called the {\bf $\boldsymbol{G}$-cycle on $\boldsymbol{P(x)}$}. 
\end{dfn}

Consider the values of $k$ for which the coefficient $a_{d-k}$ in $P(x)$ is nonzero.
For all such $k$, 
$(1 \le k \le d)$, 
%such that  $a_{d-k} \neq 0$,
let $r_{a_{d-k}}=\min\limits_{1 \leq j \leq m}\{j~|~ip^j \pmod {p^m-1}= i \}$, where $a_{d-k}=t^i$ for some $i$.
The length of the $G$-cycle on $P(x)$ is $r=\max\limits_{k}\{r_{a_{d-k}}\}$.
%The {\bf length of the $\boldsymbol{G}$-cycle on} $\boldsymbol{P(x)}$ is $s=\max\limits_{k}\{r_{a_{d-k}}\}$.
It should be noted that, for any element $a$ in $GF(q)$, the length of the $G$-cycle containing $a$ is a divisor of $m$.
The length of the $G$-cycle on $P(x)$ is simply the least common multiple of the lengths of the $G$-cycles on all non-zero coefficients of $P(x)$.

To illustrate $G$-cycles on coefficients, consider $GF(2^4)$. 
The elements of $GF(2^4)$ are partitioned into 6 disjoint equivalence classes ({\ie $G$-cycles on coefficients})
by the $G$-map, namely the equivalence classes [0], [1], [2], [4], [6], and [8]. 
To see this, observe that  
\begin{align*}
    [0] &=\{0^{2^i}\}=\{0\} \text{ for all } i                    & [6] &=[t^5]= \{(t^5)^{2^i}\} \text{ for all } i\\
    [1] &=[t^0]= \{(t^0)^{2^i}\}=\{1\} \text{ for all } i                    & &=\{(t^5)^{2^0},(t^5)^{2^1},(t^5)^{2^2},\dots\} \\
    [2] &=[t^1]= \{(t^1)^{2^i}\} \text{ for all } i                          & &=\{t^5,t^{10},t^5,\dots \}\\
        &=\{t^{2^0},t^{2^1},t^{2^2},t^{2^3},t^{2^4},\dots\}           & &=\{6,11\} \\
        &=\{t^1,t^2, t^4, t^8, t^1,\dots \}                       & [8] &= [t^7]=\{(t^7)^{2^i}\} \text{ for all } i\\
        &=\{2,3,5,9\}                                                 & &=\{(t^7)^{2^0},(t^7)^{2^1},(t^7)^{2^2},(t^7)^{2^3},(t^7)^{2^4}, \dots\}\\     
    [4] &= [t^3]= \{(t^3)^{2^i}\} \text{ for all } i                          & &=\{t^7,t^{14}, t^{13}, t^{11}, t^7,\dots \}\\
        &=\{(t^3)^{2^0},(t^3)^{2^1},(t^3)^{2^2},(t^3)^{2^3},(t^3)^{2^4}, \dots\}    & &=\{8,12,14,15\}\\ 
        &=\{t^3,t^6, t^{12}, t^9, t^3,\dots \}\\
        &=\{4,7,10,13\} 
\end{align*}

\noindent where $mod~{15}$ arithmetic is used in the exponents. 
The size of each equivalence class is the length of the corresponding $G$-cycle on the coefficient $a_{d-k}$. 
In this case, there are cycles of lengths 1,2, and 4.

To illustrate the computation of the $G$-cycle on $P(x)$, consider the degree 7 nPP $P(x) = x^7+x^5+8x^4+6x^2+4x$ over $GF(2^4)$. 
For all $k, (1 \le k \le d)$, the non-zero coefficients $a_{d-k}$ are $a_5=1, a_4=8, a_2=6$ and $a_1=4$, which correspond to $k=2,3,5$, and 6, respectively.
For example, the $G$-cycle for the nonzero coefficient $a_4=8=t^7$, is $[8]=[t^7]=\{8,12,14,15\}$, 
which gives the 4 successive coefficients of $x^4$ in the iterates of the $G$-map on $P(x)$ shown below.
We compute the length of the $G$-cycle for each nonzero coefficient $a_{d-k}$, namely, $r_{a_{d-k}}=\min\limits_{1 \leq j \leq m}\{j~|~ip^j \pmod {p^m-1}= i \}$ resulting in
$r_{a_5}=r_1=1$, $r_{a_4}=r_8=4$, $r_{a_2}=r_6=2$, $r_{a_1}=r_4=4$.   
By Definition \ref{def:G-cycle on P(x)}, the length of the $G$-cycle on $P(x)$ is $\max\limits_{k}\{r_{a_{d-k}}\}=4$,
as verified below.

\begin{flalign*}
    P(x) &= x^7+x^5+8x^4+6x^2+4x\\
    G(P(x)) &= 1^2x^7+1^2x^5+8^2x^4+6^2x^2+4^2x\\
            &= (t^0){^2}x^7+(t^0){^2}x^5+(t^7){^2}x^4+(t^5){^2}x^2+(t^3){^2}x\\
            &= (t^0)x^7+(t^0)x^5+(t^{14})x^4+(t^{10})x^2+(t^6)x\\
            &= x^7+x^5+15x^4+11x^2+7x\\
    G^2(P(x)) &= (t^0){^2}x^7+(t^0){^2}x^5+(t^{14}){^2}x^4+(t^{10}){^2}x^2+(t^6){^2}x\\
            &= (t^0)x^7+(t^0)x^5+(t^{28})x^4+(t^{20})x^2+(t^{12})x\\
            &= (t^0)x^7+(t^0)x^5+(t^{13})x^4+(t^{5})x^2+(t^{12})x\\
            &= x^7+x^5+14x^4+6x^2+13x\\
    G^3(P(x)) &= (t^0){^2}x^7+(t^0){^2}x^5+(t^{13}){^2}x^4+(t^{5}){^2}x^2+(t^{12}){^2}x\\
            &= (t^0)x^7+(t^0)x^5+(t^{26})x^4+(t^{10})x^2+(t^{24})x\\
            &= (t^0)x^7+(t^0)x^5+(t^{11})x^4+(t^{10})x^2+(t^{9})x\\
            &= x^7+x^5+12x^4+11x^2+10x\\
    G^4(P(x)) &= (t^0){^2}x^7+(t^0){^2}x^5+(t^{11}){^2}x^4+(t^{10}){^2}x^2+(t^{9}){^2}x\\
            &= (t^0)x^7+(t^0)x^5+(t^{22})x^4+(t^{20})x^2+(t^{18})x\\
            &= (t^0)x^7+(t^0)x^5+(t^{7})x^4+(t^{5})x^2+(t^{3})x\\
            &= x^7+x^5+8x^4+6x^2+4x \\
            &=P(x) 
\end{flalign*}

Note that the lengths of $G$-cycles on each of the nonzero coefficients $a_d-k, (1 \le k \le 7)$  are 1, 4, 2 and 4, respectively, and their least common multiple, namely 4, is the length of the $G$-cycle on $P(x)$.

\subsection{Iterating the $F$-map and the $G$-map}
\label{subsec:Fmap+Gmap}

We have introduced two functions, the $F$-map and $G$-map, that transform PPs into other PPs and nPPs into other nPPs. 
These functions can be applied sequentially. 
For example, we can represent the application of the $F$ and $G$ maps alternately two times on the PP $P(x)$ by the sequence $(G \circ F \circ G \circ F)(P(x))$, meaning one first applies the $F$-map, then the $G$-map, the $F$-map, and finally the $G$-map again. 
The $F$-map and the $G$-map are used together to create larger equivalence classes, allowing for a faster search for PPs.

It is interesting to note that two different sequences of compositions can represent the same transformation. 
In the following we show that we can replace any sequence of compositions by an equivalent sequence in which all of the $G$-maps are applied first, followed by some number of $F$-maps. The number of $F$-maps is related to the field characteristic $p$. 
Our result is illustrated by the following diagram, which indicates that $(G \circ F)(P(x))$ is the same as $(F^p \circ G)(P(x))$, 
for all PPs $P(x)$. 

%\begin{center}
%\begin{tikzcd}
%A \arrow{r}{G} %\arrow{d}[swap]{F} & B %\arrow{d}{F^p} \\   
%C \arrow[swap]{r}{G} & D
%\end{tikzcd}
%\end{center}

\begin{center}
\begin{tikzcd}
P(x) \arrow{r}{G} \arrow{d}[swap]{F} & G(P(x)) \arrow{d}{F^p} \\   
F(P(x)) \arrow[swap]{r}{G} & (F^p\circ G)(P(x))
\end{tikzcd}
\end{center}

%, where $F^i$, for some $i$, denotes composing the $F$-map with itself $i$ times, and $i$ denotes the residue$\pmod s$, where $s$ is the length of the $F$-cycle. 

%For example, consider the nPP $P(x)$ = $x^8+4x^4+16x^2+3x$ for $GF(2^5)$, with the primitive polynomial $x^5+x^3+x^2+x+1$. 
%As there are 31 non-zero elements in $GF(31)$, and 31 is prime, the length of the $F$-cycle for $P(x)$ is 31.  
%The iterates of $G$-map on the coefficient 16
%give the length 5-cycle $(16,31,30,27,24)$.
%So, $P(x)$ denotes an equivalence class of 155 nPPs.
%In a similar manner we can show 
In Lemma \ref{lemma:FplusG}, we show that any sequence of $F$-maps and $G$-maps is equivalent to a sequence $F^i,G^j$, 
for some $i ~ (0 \le i \le r)$ and some $j ~(0 \le j \le s)$, 
where $r$ is the length of the $F$-cycle and $s$ is the length of the $G$-cycle.

\begin{lemma} \label{lemma:FplusG}
%Let $p$ be a prime and and let $q=p^m$, for $m>1$. 
For any PP $P(x)$, $(G \circ F)(P(x))$ $=$ $(F^p \circ G)(P(x))$. 
\end{lemma}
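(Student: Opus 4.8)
The plan is to verify the identity by a direct computation on coefficients, using the alternative formula $F(P(x)) = t^d P(x/t)$ established in Lemma~\ref{lemma:ClosureUnderFmap} together with the ``Freshman Dream'' identity $(c+e)^p = c^p + e^p$ in $GF(q)$ that was proved inside Lemma~\ref{lemma:ClosureUnderGmap}. Write $P(x) = \sum_{k=0}^d a_{d-k} x^{d-k}$. Then by Definition~\ref{def:F-map}, $F(P(x)) = \sum_{k=0}^d t^k a_{d-k} x^{d-k}$, so applying the $G$-map (Definition~\ref{def:G-map}) raises every coefficient to the $p$-th power: $(G \circ F)(P(x)) = \sum_{k=0}^d (t^k a_{d-k})^p x^{d-k} = \sum_{k=0}^d t^{kp}\, a_{d-k}^p\, x^{d-k}$, since $(t^k a_{d-k})^p = (t^k)^p (a_{d-k})^p = t^{kp} a_{d-k}^p$.

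Now I compute the other side. First $G(P(x)) = \sum_{k=0}^d a_{d-k}^p x^{d-k}$. Applying the $F$-map $p$ times, using the formula $F^i(Q(x)) = t^{di} Q(x/t^i)$ from Equation~(\ref{eqn:alt-F-i-map}) with $i = p$, or equivalently by iterating the coefficient description: each application of $F$ multiplies the degree-$(d-k)$ coefficient by $t^k$, so $p$ applications multiply it by $t^{kp}$. Hence $(F^p \circ G)(P(x)) = \sum_{k=0}^d t^{kp}\, a_{d-k}^p\, x^{d-k}$. This is term-by-term identical to the expression obtained for $(G\circ F)(P(x))$ above, which establishes the lemma.

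The only subtlety — and the one point that deserves explicit mention in the write-up — is the step $(t^k a_{d-k})^p = t^{kp} a_{d-k}^p$. Over a field of characteristic $p$, the Frobenius map $z \mapsto z^p$ is a field homomorphism, so it is multiplicative (as well as additive), giving $(uv)^p = u^p v^p$ for all $u,v \in GF(q)$; this is exactly the fact, derived from the Freshman Dream, that the $G$-map respects products of field elements. Since $G$ was already argued to act coefficientwise, there is no interaction between the index shift performed by $F$ and the exponentiation performed by $G$ other than through the exponent of $t$, and that exponent simply gets multiplied by $p$ whether we insert the $p$ factors of $t^k$ before or after the Frobenius. I expect no real obstacle here; the main ``work'' is just being careful that the exponent bookkeeping $k \mapsto kp$ agrees on both sides, which it does because $F$ contributes one factor $t^k$ per application to the $(d-k)$-th coefficient.
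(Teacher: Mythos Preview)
Your proof is correct and follows essentially the same coefficient-by-coefficient comparison as the paper's proof. The only cosmetic difference is that the paper writes each nonzero coefficient as $a_{d-k}=t^j$ and tracks the exponent of $t$, whereas you invoke the multiplicativity $(t^k a_{d-k})^p = t^{kp}a_{d-k}^p$ directly; note that this multiplicativity holds in any commutative ring and does not actually require the Freshman Dream or characteristic $p$.
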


\begin{proof}
Let $P(x) = \sum_{k=0}^d a_{d-k} x^{d-k}$. 
If $a_{d-k}=0$ then the coefficients of $x^{d-k}$ in $(G \circ F)(P(x))$ and $(F^p \circ G)(P(x))$ are both 0.
Suppose that $a_{d-k}\ne 0$. Then $a_{d-k}=t^j$, for some $j$, $0 \le j \le q-2$.
Then the $x^{d-k}$-th term in $F(P(x))$,  $G(P(x))$,  $(G \circ F)(P(x))$,  and $(F^p \circ G)(P(x))$, respectively, are:
\begin{align*}
F(P(x))&=\dots + t^{j+k} x^{d-k}+\dots \\
G(P(x))&=\dots + t^{pj} x^{d-k}+\dots\\
(G \circ F)(P(x))&=\dots + t^{(j+k)^p} x^{d-k}+\dots=\dots + t^{p(j+k)} x^{d-k}+\dots\\
(F^p \circ G)(P(x))&=\dots + t^{pj+pk}x^{d-k}+\dots =\dots + t^{p(j+k)}x^{d-k}+\dots
\end{align*}
Hence, the coefficients of $x^{d-k}$, $0\le k\le d$ in $(G \circ F)(P(x))$ and $(F^p \circ G)(P(x))$ are equal and the lemma follows.
\end{proof}
For example, let $P(x)$ be an nPP over $GF(2^5)$.
Consider $(G \circ F \circ G \circ F)(P(x))$. 
Since $p=2$, by Lemma  \ref{lemma:FplusG} $(G \circ F)(P(x))$ = $(F^2 \circ G)(P(x))$.
So, $(G \circ F \circ G \circ F)(P(x)) = (G \circ F \circ F^2 \circ G)(P(x))  = (G \circ F^3 \circ G)(P(x))$. 
Then by an iterative use of  Lemma \ref{lemma:FplusG}, we get $(G \circ F^3 \circ G)(P(x)) = (F^6 \circ G \circ G)(P(x)) = (F^6 \circ G^2)(P(x))$.

%Let $F_P$ and $G_P$ be the cycles of $P(x)$ using $F$ and $G$ maps, respectively, i.e.
%$F_P=(P, F(P)$, $F^2(P),\dots, F^{n-1}(P))$, where $F^n(P)=P$ and $G_P=(P,G(P),G^2(P),\dots, G^{m-1}(P))$, where\\ $G^m(P)=P$. 
%\hl{SB: Are we using this?}

%--------------------------------------------

\section{Equivalence classes based on the $F$-map and the $G$-map}
\label{sec:F-G equivalence classes}

The equivalence relations $\R$ and $\R_E$ which we described in Section \ref{sec:Normalized-PPs} allow a more efficient search for PPs by limiting the search to nPPs.
More inclusive equivalence relations would optimize the search even further, by allowing the search to be restricted to representatives of equivalence classes.
In this section, we introduce new equivalence relations, based on the $F$-map and the $G$-map and iterations of the maps, that merge equivalence classes, therby compressing the search space considerably.
We begin with an equivalence relation induced by the $F$-map.

\begin{dfn} 
\label{def:Relation_R_F}
Let $P(x)$ and $Q(x)$ be PPs . If $P(x)$ can be converted into $Q(x)$ by some sequence consisting of normalization operations and $F$-map operations, then $P(x)$ and $Q(x)$ are related by $\R_F$.
\end{dfn}

We have seen that the iterates of the $F$-map form a cycle.
Moreover, if $P(x)$ is an nPP, then the $F$-cycle on $P(x)$ is a cycle of nPPs.
We will show shortly that  $\R_F$ is an equivalence relation that is identical to the $\R_E$ relation defined on nPPs (or PPs).
%under extended normalization operations (see Section \ref{sec:Normalized-PPs}). 
%Since it is an equivalence relation, $\R_F$ partitions the set of all nPPs into equivalence classes. 
We can choose any nPP in an $\R_F$ equivalence class to be the representative for the class, but for convenience we usually designate the nPP with the smallest coefficient 
%in a specific position 
of a specific degree to be the representative. 
%To illustrate, let $P(x)$ be an nPP in some equivalence class, and let the specific coefficient be $a_{d-k}$ for some $k$, \ie{the coefficient of $x^{d-k}$}.
To illustrate, let $P(x)$ be an nPP in some equivalence class, let the specific degree be $d-k$ for some $k$, and consider the the coefficient of $x^{d-k}$, namely $a_{d-k}$.
Let the multiplicative inverse of $a_{d-k}$ be $t^{ik}$ for some $i$.
Suppose the length of the $F$-cycle on $a_{d-k}$ is $q-1$.
%Then at least one coefficient takes the value 1.
Then, sequence $a_{d-k},~t^k  a_{d-k}, ~t^{2k} a_{d-k},~ \dots~,t^{rk \mod(q-1)}a_{d-k}$ includes every nonzero element in $GF(q)$.
%That is, the cycle $t^k, t^{2k},\dots, t^{(q-1)k \mod(q-1)}=1$ includes every nonzero element in $GF(q)$.
Specifically, one element in the $F$-cycle is $a_{d-k}t^{ik}=1$.
In other words, in the nPP $F^i(P(x))$,  
%$a_{d-k}t^{ik}=1$.
%That is, 
the coefficient of $x^{d-k}$ takes the value 1. 
So, in this case we choose $F^i(P(x))$ to be the representative of the equivalence class, making 
the search for nPPs more efficient since the coefficient of $x^{d-k}$ can be fixed to 1. 
That is, if there is a nPP whose $(d-k)^{th}$ coefficient is nonzero, there is also one whose $(d-k)^{th}$ coefficient is equal to 1. 
Notice that if the length of the $F$-cycle on $P(x)$ is less than $q-1$, then there may be a nPP with a nonzero $(d-k)^{th}$ coefficient, but not one with the $(d-k)^{th}$ coefficient equal to 1. 
In that case, if the cycle length is $(q-1)/j$, for some $j>1$, then one needs to search for nPPs with a $(d-k)^{th}$ coefficient equal to each of the values $1,2,3, \dots ,j-1$.
Note that if there is an nPP whose  $(d-k)^{th}$ coefficient is zero, it would be chosen as the representative for the equivalence class.

So again, in our search for all permutation polynomials over $GF(q)$ for a given $q$, we can reduce the search to the space of normalized PPs which are representatives of an $F$-map equivalence class. 
In other words, when the degree of the polynomial is not a multiple of the field characteristic, we restrict the search to polynomials $a_{d}x^{d} + a_{d-1}x^{d-1} + \dots + a_1x + a_0$ where $a_d=1$, $a_0=a_{d-1}=0$, and $a_i$ ranges over all cycle representatives, where the $i^{th}$ coefficient has the maximum length cycle.
When the degree is a multiple of the field characteristic, and the characteristic is not 2, we restrict the search to polynomials $a_{d}x^{d} + a_{d-1}x^{d-1} + \dots + a_1x + a_0$ where $a_d=1$, $a_0=0$, and one of the following cases (1) $a_{d-1}=0$, and (2) $a_{d-1} \neq 0$ and $a_{d-2} = 0$.  There is a similar statement when the field characteristic is 2 and the degree of the polynomial is even.

\begin{lemma}
The equivalence relations $\R_F$ and $\R_E$ are the same.
\end{lemma}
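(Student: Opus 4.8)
The plan is to show the two relations coincide by establishing containment in both directions, with the crux being that a single application of the $F$-map is realized by extended normalization, and conversely that extended normalization applied to an nPP can always be re-expressed using only ordinary normalization operations together with $F$-maps. The key tool is the alternative formulation from Lemma~\ref{lemma:ClosureUnderFmap}, namely $F(P(x)) = t^d P(x/t)$, which exhibits $F$ as the composition of a constant multiplication by $t^d$ and a variable substitution $x \mapsto t^{-1}x$ — both of which are (extended) normalization operations. Hence any sequence of normalization operations and $F$-maps can be rewritten as a sequence of normalization operations and extended-normalization operations; this immediately gives $\R_F \subseteq \R_E$.

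For the reverse inclusion $\R_E \subseteq \R_F$, I would argue that it suffices to show that the single extended operation $P(x) \mapsto P(sx)$, for any nonzero $s \in GF(q)$, can be simulated on nPPs using ordinary normalization operations and $F$-maps (the other three operations are already ordinary normalization operations, which $\R_F$ allows). Write $s = t^{-i}$ for some $i$; then by iterating the identity $F(P(x)) = t^d P(x/t)$ we get $F^i(P(x)) = t^{di} P(t^{-i}x) = t^{di} P(sx)$, so $P(sx) = t^{-di} F^i(P(x))$. Thus $P(sx)$ is obtained from $P(x)$ by applying $F^i$ and then multiplying by the constant $t^{-di}$ — a sequence of $F$-maps and one ordinary normalization operation. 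Since normalization operations and $F$-maps generate all the moves available to $\R_E$, we conclude $\R_E \subseteq \R_F$.

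Combining the two inclusions gives $\R_F = \R_E$. I would close by noting that both are genuinely equivalence relations (reflexive via the empty sequence; transitive via concatenation; symmetric because each $F$-map and each normalization operation has an inverse of the same kind — the inverse of $F$ being $F^{s-1}$ where $s$ is the $F$-cycle length, or equivalently $P(x) \mapsto t^{-d}P(tx)$), so the statement that they are ``the same'' is literally an equality of relations, not merely of their induced partitions.

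The main obstacle I anticipate is bookkeeping rather than conceptual: one must be careful that the rewriting of a mixed sequence of operations into the normal form ``$F$-maps then constant multiplications'' does not accidentally change the polynomial, and that the constant factors $t^{di}$ introduced when converting between $F^i$ and the substitution $x \mapsto sx$ are correctly tracked. It is also worth explicitly observing that because we only ever need $s$ ranging over the nonzero field elements, and $\{t^{-i} : i \ge 0\}$ exhausts them, every extended operation is reachable — there is no ``gap'' between the discrete iterates $F^i$ and the continuum (here, finite set) of allowed scalings $s$.
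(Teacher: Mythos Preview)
Your proposal is correct and takes essentially the same approach as the paper: both directions hinge on the identity $F^i(P(x)) = t^{di}P(t^{-i}x)$, using it to express the $F$-map as an extended normalization operation (giving $\R_F\subseteq\R_E$) and conversely to realize the substitution $P(x)\mapsto P(sx)$ via iterated $F$-maps followed by a constant multiplication (giving $\R_E\subseteq\R_F$). The paper carries out the $\R_E\subseteq\R_F$ direction by explicitly decomposing the full transformation $aP(sx+b)+c$, whereas you more economically observe that only the new operation $P(sx)$ needs to be simulated, but the underlying argument is the same.
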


\begin{comment}
Suppose that $P(x)$ and $Q(x)$ are $\R_E$-related, i.e. 
$Q(x)=aP(bx+c)+d$ for $a,b,c,d\in GF(q)$ such that $a\ne 0$ and $b\ne 0$.
We show that $P(x)$ and $Q(x)$ are $\R_F$-related. 
Since $b\ne 0$, we have $b=t^i$, for some i$\in\{0,1,\dots,q-2\}$.
Consider $P'(x)$ obtained by applying $q-i-1$ times $F$-map to $P(x)$. 
By Lemma \ref{lemma:fmap}, the resulting polynomial is $t^d P(x/t)$.
Then $P'(x)=t^{d(q-i-1)} P(x/t^{q-i-1})=t^{d(q-i-1)} P(bx)$ since $t^i$ is the inverse of $t^{q-i-1}$.
Take $a'=at^{di}$ and $c'=c/b$. Note that $a'\ne 0$. 
Clearly, $P'$ and $P''(x)=a'P'(x+c')+d$ are $\R_F$-related.
Then
\begin{align*}
P''(x)&=a'(t^{d(q-i-1)} P(b(x+c'))+d\\
&=at^{di}(t^{d(q-i-1)} P(bx+bc')+d\\
&=aP(bx+c)+d\\
&= Q(x).
\end{align*}
Therefore $P(x)$ and $Q(x)$ are $\R_F$-related. 

Suppose that $P(x)$ and $Q(x)$ are $\R_F$-related. 
We show that $P(x)$ and $Q(x)$ are $\R_E$-related. 
Let $P_1(x)=P(x),P_2(x),\dots,P_k(x)=Q(x)$ be a sequence of permutation polynomials such that, for any $i$, 
$P_i(x)$ and $P_{i+1}(x)$ are $\R_F$-related.
If $P_{i+1}(x)=aP_i(x+b)+c$ then $P_i(x)$ and $P_{i+1}(x)$ are $\R_E$-related.
Suppose that $P_{i+1}(x)$ is the result of applying $j, j\in\{0,1\dots,q-2\}$ times $F$-map, i.e.
$P_{i+1}(x)=t^{dj}P_i(x/t^j)$. 
Then $P_i(x)$ and $P_{i+1}(x)$ are $\R_E$-related since 
$P_{i+1}(x)=aP_i(bx)$ for $a=t^{dj}\ne 0$ and $b=t^{q-j-1}\ne 0$. 
\end{comment}

\begin{proof}
Suppose that $P(x)$ and $Q(x)$ are $\R_E$-related. 
That is, suppose that $Q(x)=aP(sx+b)+c$ for $a,b,c,s\in GF(q)$ such that $a\ne 0$ and $s\ne 0$.
We show that $P(x)$ and $Q(x)$ are $\R_F$-related. 
Since $s\ne 0$, we have $s=t^i$, for some $i ~(0 \le i \le q-2)$.
%Consider $P'(x)$ obtained by applying $q-i-1$ times $F$-map to $P(x)$. 
Consider $P'(x)=F^{q-i-1}(x)=t^{d(q-i-1)} P(x/t^{q-i-1})$, the last equality due to Equation (\ref{eqn:alt-F-i-map}).
%, the resulting polynomial is $t^d P(x/t)$.
%Then $P'(x)=t^{d(q-i-1)} P(x/t^{q-i-1})=t^{d(q-i-1)} P(sx)$ since $t^i$ is the inverse of $t^{q-i-1}$.
Then $P'(x)=t^{d(q-i-1)} P(sx)$ since $t^i$ is the inverse of $t^{q-i-1}$.
Take $a'=at^{di}$ and $b'=b/s$. 
Note that $a'\ne 0$. 
Clearly, $P'$ and $P''(x)=a'P'(x+b')+c$ are $\R_F$-related.
Then
\begin{align*}
P''(x)&=a'(t^{d(q-i-1)} P(s(x+b'))+c\\
&=at^{di}(t^{d(q-i-1)} P(sx+sb')+c\\
&=aP(sx+b)+c\\
&= Q(x).
\end{align*}
Therefore $P(x)$ and $Q(x)$ are $\R_F$-related. 

Suppose that $P(x)$ and $Q(x)$ are $\R_F$-related. 
We show that $P(x)$ and $Q(x)$ are $\R_E$-related. 
Let $P_1(x)=P(x),P_2(x),\dots,P_k(x)=Q(x)$ be a sequence of permutation polynomials such that, for any $i$, 
$P_i(x)$ and $P_{i+1}(x)$ are $\R_F$-related.
%Suppose that $P_{i+1}(x)$ is the result of applying $j, j\in\{0,1\dots,q-2\}$ times $F$-map, i.e.
Suppose $P_{i+1}(x)=F^j(P_i(x))$, for some $j~(0 \le j \le q-2)$. 
Then, by Equation (\ref{eqn:alt-F-i-map}) in Section \ref{subsec:Fmap}, $P_{i+1}(x)=t^{dj}P_i(x/t^j)=aP_i(sx+b)+c$ for $a=t^{dj}\ne 0,~s=t^{-j}=t^{q-j-1}\ne 0$, and $b=c=0$. 
That is, $P_i(x)$ and $P_{i+1}(x)$ are $\R_E$-related.
\end{proof}

Recall that one of our goals is to make the search for nPPs more efficient. 
Lemmas \ref{lemma:nPPwithF-k-is-relatively-prime} and \ref{lemma:nPPwithF-any-k} below, show how iterations of the $F$-map can be used to restrict the value of one additional coefficient and hence decrease the size of the search space. 

Lemma \ref{lemma:nPPwithF-k-is-relatively-prime} shows that under certain conditions, a fourth coefficient can be restricted to the values 0 and 1.

\begin{lemma} \label{lemma:nPPwithF-k-is-relatively-prime}
%Let $q=p^m$, where $p$ is prime. 
Let $s$ denote the position fixed for normalization, $i.e.$ $s=d-1$ for c-normalization, $s=d-1$ or $s=d-2$ for m-normalization, and $s=d-r-1$ or $s=d-r$, where $r=2^i-1$ for some $i$, for b-normalization.
If there is a $k \ne s$ relatively prime to $q-1$,
then each equivalence class of $\R_F$ contains an nPP 
$Q(x) = b_dx^d + b_{d-1}x^{d-1} + \dots + b_{d-k}x^{d-k} + \dots + b_1x + b_0$
%=\sum_{k=0}^d a_{d-k}t^jk x^{d-k}$ 
satisfying 
$b_d=1,~ b_{s}= b_0=0$ and $b_{d-k}\in\{0,1\}$, for some $k\in {1,\dots, d-1}, k \neq s$.
%$b_{d-k}\in\{0,1\}$.
\end{lemma}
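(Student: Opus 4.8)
The plan is to start from an arbitrary nPP $P(x)$ in a given $\R_F$ equivalence class and locate, inside its $F$-cycle, an nPP whose $x^{d-k}$ coefficient lands in $\{0,1\}$. First I would dispose of the trivial case: if the coefficient $a_{d-k}$ of $P(x)$ is already $0$, then by Corollary~\ref{cor:F-mapGenratesNPPs} every iterate $F^i(P(x))$ is an nPP and still has a zero coefficient at position $d-k$ (the $F$-map multiplies that coefficient by $t^{ik}$, which keeps $0$ fixed), so $P(x)$ itself works. So assume $a_{d-k}\neq 0$, say $a_{d-k}=t^m$ for some $m$. Recall from Section~\ref{subsec:Fmap} that applying $F$ a total of $i$ times sends $a_{d-k}$ to $t^{ik}a_{d-k}=t^{ik+m}$. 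I want to choose $i$ so that $ik+m\equiv 0\pmod{q-1}$, i.e. $ik\equiv -m\pmod{q-1}$. Because $\gcd(k,q-1)=1$, the element $k$ is invertible modulo $q-1$, so such an $i$ exists (namely $i\equiv -mk^{-1}\pmod{q-1}$); for that $i$, the coefficient of $x^{d-k}$ in $F^i(P(x))$ equals $t^0=1$.

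It remains to check that $Q(x):=F^i(P(x))$ genuinely is an nPP of the required shape and that it lies in the same $\R_F$-class as $P(x)$. The latter is immediate from Definition~\ref{def:Relation_R_F}, since $Q(x)$ is obtained from $P(x)$ by a sequence of $F$-map operations. For the former, invoke Corollary~\ref{cor:F-mapGenratesNPPs}: since $P(x)$ is an nPP, $F^i(P(x))$ is an nPP, so $b_d=1$, $b_0=0$, and the coefficient fixed to $0$ at the normalization position $s$ stays $0$ (again because the $F$-map scales that coefficient and $0$ is fixed). By construction $b_{d-k}=1\in\{0,1\}$. This gives the claimed $Q(x)$.

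One subtlety I would be careful to address: the hypothesis asks for $k\neq s$, and the statement's conclusion quantifies "for some $k\in\{1,\dots,d-1\}$, $k\neq s$" — so I should make explicit that the $k$ we use is exactly the one furnished by the hypothesis, and note that $1\le k\le d-1$ can be assumed since $\gcd(k,q-1)=1$ with $k\le d<q$ forces $k\ge 1$, and $k=d$ would correspond to the constant term which is already normalized to $0$ (so there is no loss in taking $k\le d-1$, or alternatively the hypothesis can be read as already restricting $k$ to that range). The reason $k\neq s$ matters is that if $k$ coincided with the normalization position, forcing $b_{d-k}=1$ would conflict with $b_s=0$; requiring $k\neq s$ removes that clash, and the argument then sets two independent coefficients, $b_s=0$ (by normalization) and $b_{d-k}=1$ (by the cycle choice), without interference.

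The main obstacle is essentially bookkeeping rather than mathematics: one must verify that none of the three normalization types is disturbed by the $F$-map iteration — in particular for b-normalization, where the fixed position $s$ depends on the interval $[2^i,2^{i+1}-3]$ containing $d$ — and that the "either/or" structure of m- and b-normalization (where it may be $a_{d-1}$ or $a_{d-2}$ that is zero) is preserved, which it is because whichever of those coefficients was zero in $P(x)$ remains zero in every iterate. Once that is checked, the modular-inverse argument closes the proof.
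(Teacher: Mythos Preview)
Your proposal is correct and follows essentially the same argument as the paper: start from an arbitrary nPP in the class, handle the $a_{d-k}=0$ case trivially, and otherwise use $\gcd(k,q-1)=1$ to solve $ik\equiv -m\pmod{q-1}$ so that the $F$-iterate $F^i(P(x))$ has $b_{d-k}=1$, with the remaining normalized coefficients preserved because the $F$-map scales zeros to zeros. Your added remarks about $k\neq s$ and the either/or structure of m- and b-normalization are sound clarifications that the paper leaves implicit.
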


\begin{proof}
Let $P(x) = \sum_{k=0}^d a_{d-k}x^{d-k}$ be an arbitrary nPP from any equivalence class of $\R_F$. 
Then $a_d=1 ,a_{s}=0$, and $a_0=0$. 
On one hand, if $a_{d-k}$ is 0, then $Q(x)=P(x)$, and the lemma is proved. 

On the other hand, suppose that $a_{d-k}=t^i, i\in\{1,2,\dots,q-2\}$.
Since $k$ is relatively prime to $q-1$, there is a positive integer $j$ such that $i+jk=0\pmod{q-1}$.  
Let $Q$ be the polynomial obtained by applying the $F$ map $j$ times to $P$.
That is, let $Q(x)=\sum_{k=0}^d a_{d-k} t^{jk} x^{d-k}$
Then the coefficient of the $x^{d-k}$ term of $Q$ is $b_{d-k}=a_{d-k}t^{kj}=t^i t^{kj}=t^{i+kj}=1$.
Furthermore, $b_d=a_d=1$, $b_0=a_0=0$, and $b_{s}=a_st^{(d-s)j}=0t^{(d-s)j}=0$.
Hence Q is an nPP with the desired coefficients.
\end{proof}

It could be that there is no $k$ in the range $(2\le k\le d-2$) that is relatively prime to $q-1$ and Lemma \ref{lemma:nPPwithF-k-is-relatively-prime} cannot be applied. For example, when $q=31$ and $d=8$, then none of the integers in $\{2,3,4,5,6\}$ are relatively prime to $q-1$.
We now show that when this is the case, we can restrict the range of the coefficient of either $x^{d-1}$ or $x^{d-2}$ to at most 3 values.

\begin{lemma}
\label{lemma:nPPwithF-any-k}
%Let $q=p^m$ and $p$ be a prime. 
Let $s$ denote the position fixed for normalization, $i.e.$ $s=d-1$ for c-normalization, $s=d-1$ or $s=d-2$ for m-normalization, and $s=d-r-1$ or $s=d-r$, where $r=2^i-1$ for some $i$, for b-normalization.
Then each equivalence class of $\R_F$ contains an nPP 
$Q(x) = b_dx^d + b_{d-1}x^{d-1} + \dots + b_1x + b_0$
%=\sum_{k=0}^d a_{d-k}t^jk x^{d-k}$ for some $j$ 
satisfying 
$b_d=1,~ b_{s}= b_0=0$ and if $s = d-1$, then $b_{d-2}\in\{0,1,2\}$, otherwise, if $ s \geq d-2$, then  $b_{d-1}\in\{0,1\}$.
\end{lemma}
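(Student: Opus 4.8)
The plan is to use the $F$-map (and its iterates) on an arbitrary nPP $P(x)$ in a given $\R_F$-class to drive one extra coefficient into a small set, handling separately the case where $s=d-1$ (so $a_{d-1}=0$ is already forced and we work on the degree-$(d-2)$ coefficient) and the case where $s=d-2$ or $s=d-r-1$/$s=d-r$ (so we work on the degree-$(d-1)$ coefficient). Recall from Section~\ref{subsec:Fmap} that applying $F^j$ multiplies the coefficient $a_{d-k}$ by $t^{jk}$, and that the degree-$(d-k)$ coefficients over all $j$ trace out the coset $a_{d-k}\cdot\H_{t^k}$, a cycle whose length is $(q-1)/\gcd(k,q-1)$. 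By Corollary~\ref{cor:F-mapGenratesNPPs}, every iterate $F^j(P(x))$ is again an nPP, so any representative we produce stays in the class and stays normalized.

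First I would dispose of the easy sub-cases. If the relevant coefficient is already $0$ in $P(x)$, take $Q(x)=P(x)$ and we are done. Otherwise, in the case $s\ge d-2$ we look at $k=1$: since $\gcd(1,q-1)=1$, the $F_1$-cycle has full length $q-1$, so the coset $a_{d-1}\cdot\H_{t^1}$ is all of $GF(q)^\ast$, and in particular some iterate $F^j(P(x))$ has its degree-$(d-1)$ coefficient equal to $1$. That gives $b_{d-1}\in\{0,1\}$, which is the ``otherwise'' conclusion. (One must check that this $F^j$ does not disturb the fixed coefficient $b_s=0$: since $a_s=0$, multiplying by $t^{j(d-s)}$ keeps it $0$, and likewise $b_d=1$, $b_0=0$ are preserved exactly as in Lemma~\ref{lemma:nPPwithF-k-is-relatively-prime}.)

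The substantive case is $s=d-1$, where we must force $b_{d-2}\in\{0,1,2\}$. Here we apply $F^j$ with $k=2$: the degree-$(d-2)$ coefficient runs over the coset $a_{d-2}\cdot\H_{t^2}$, whose size is $(q-1)/\gcd(2,q-1)$. If $q$ is even then $q-1$ is odd, $\gcd(2,q-1)=1$, the cycle has full length, and we get $b_{d-2}\in\{0,1\}$. If $q$ is odd then $\gcd(2,q-1)=2$ and $\H_{t^2}$ is exactly the index-$2$ subgroup of squares in $GF(q)^\ast$; so the coset $a_{d-2}\cdot\H_{t^2}$ is either the set of nonzero squares (if $a_{d-2}$ is a square) or the set of nonsquares (if not). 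In the first case $1$ lies in the coset, so some iterate has degree-$(d-2)$ coefficient $1$; in the second case the coset is the nonsquares, which always contains the element I have been calling $2 = t^1$ (the primitive element is a nonsquare), so some iterate has degree-$(d-2)$ coefficient $2$. Either way $b_{d-2}\in\{0,1,2\}$. Combining the two cases, and using $b_d=a_d=1$, $b_0=a_0=0$, $b_{d-1}=a_{d-1}=0$ (preserved under $F$ exactly because those coefficients start at $1,0,0$), yields the claimed nPP $Q(x)$.

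The main obstacle is purely bookkeeping: being careful that the same single power $F^j$ simultaneously achieves the target value at position $d-2$ (or $d-1$) \emph{and} leaves the normalization-fixed positions $d$, $s$, and $0$ untouched — which works precisely because those positions hold $1$, $0$, $0$ and are only ever scaled by powers of $t$. There is also a small point to verify in the odd-$q$ nonsquare case: that the generator $t$ (denoted $2$ in the paper's notation for field elements) is indeed a nonsquare, which is standard since $t$ generates the cyclic group $GF(q)^\ast$ of even order $q-1$ and hence is not a square. With these checks in place the lemma follows immediately from the coset description of the $F_k$-cycle.
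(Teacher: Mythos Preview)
Your proposal is correct and follows essentially the same approach as the paper: pick an nPP in the class, iterate the $F$-map, and use the coset structure of $\H_{t^k}$ (for $k=1$ or $k=2$) to push the target coefficient into $\{0,1\}$ or $\{0,1,2\}$, checking that the fixed positions $d,s,0$ are preserved because they start at $1,0,0$. The only cosmetic difference is that where you phrase the $k=2$, $q$ odd case in terms of squares versus nonsquares, the paper phrases it in its element-labeling convention ($t^i \mapsto i+1$) as the cycle hitting either the ``odd values including $1$'' or the ``even values including $2$''; these are the same dichotomy, since $1=t^0$ and $2=t^1$ in that notation.
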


\begin{proof}
Let $P(x) = a_dx^d + a_{d-1}x^{d-1} + \dots + a_1x + a_0=\sum_{k=0}^d a_{d-k}x^{d-k}$ be an arbitrary nPP from any equivalence class of $\R_F$. 
Then $a_d=1 ,a_s=0$, and $a_0=0$. 
We will obtain a new nPP $Q(x)$ from $P(x)$ by using the $F$-map iteratively. 
The coefficients of $Q(x)$ are changed from the given $a_i$ to a new value $b_i$, for all i. 
Since the $F$-map does not change the first coefficient, $b_d=a_d=1$. 
Since the $F$-map does not change a zero coefficient, 
$b_s=a_s=0$, and $b_0=a_0=0$. 
It remains to prove that if $s=d-1$ then $b_{d-2}\in\{0,1,2\}$, otherwise, if $s \geq d-2$, then $b_{d-1}\in\{0,1\}$.

\begin{enumerate} [{Case} 1{:}]
\item $s = d-1$.\\ 
    If $a_{d-2}=0$, then the lemma is proved, as we can take $Q(x)=P(x)$. \\
    So, suppose $a_{d-2} \neq 0$.
    If 2 is relatively prime to $q-1$, then 
    Lemma \ref{lemma:nPPwithF-k-is-relatively-prime}  fixes $b_{d-2}$ to a value in $\{0,1\}$.
    So suppose 2 is not relatively prime to $q-1$.
    Then $q-1$ must be even, that is $gcd(2,q-1)=2$.
    Hence, the length of $F$-cycle on $a_{d-2}$ is $(q-1)/2$. 
    Substituting k=2 into the definition of the $F$-cycle on $a_{d-2}$ yields the sequence \[ a_{d-2},~t^2  a_{d-2}, ~t^{4} a_{d-2},~ \dots~,(t^{2r \mod(q-1)}a_{d-2}=a_{d-2}) \] where $r=(q-1)/2$.
    So there are two possibilities for the values  in $F$-cycle on $a_{d-2}$:
    \begin{itemize}
        \item the $F$-cycle contains $(q-1)/2$ odd values including 1, in which case $b_{d-2}$ can be fixed at 1,
        \item the $F$-cycle contains $(q-1)/2$ even values including 2, in which case $b_{d-2}$ can be fixed at 2.
    \end{itemize}
\item $s \geq d-2$.\\ 
    If $a_{d-1}=0$, then the lemma is proved, as we can take $Q(x)=P(x)$. \\  So, suppose $a_{d-1} \neq 0$.
    Since 1 and $q-1$ are relatively prime,  
    Lemma \ref{lemma:nPPwithF-k-is-relatively-prime}  fixes $b_{d-1}$ to a value in $\{0,1\}$.
\end{enumerate} 
Hence $Q(x)$ is an nPP with the desired coefficients.
\end{proof}

If Lemma \ref{lemma:nPPwithF-k-is-relatively-prime} does not apply, then the $F$-cycle is not of length $q-1$. Since the length of the $F$-cycle must divide $q-1$, it follows from Lemma \ref{lemma:nPPwithF-any-k} that, when the coefficient $a_{d-2}$ is not zero, the $F$-cycle is of length $(q-1)/2$. That is, as the coefficient $a_{d-2}$ is multiplied by $t^2$ with each iteration of the $F$-map, to return in the cycle to the same coefficient $a_{d-2}$, the length of the cycle must be $r$, where $r$ is the smallest integer such that $(t^2)^r = 1$. It follows that $r = (q-1)/2$, as this is the  smallest $r$ such that $2r = 0$ $mod$ $(q-1)$.

By Lemmas \ref{lemma:nPPwithF-k-is-relatively-prime} and \ref{lemma:nPPwithF-any-k} we have shown that every equivalence class of $\R_F$ contains an nPP in which four of its coefficients are either fixed or are limited to at most three choices. 
We have implemented this in our search algorithm, {\tt iBlast} (described in Section \ref{subsec:iBlast}).
{\tt iBlast} searches for equivalence class representatives in O($dq^{d-2}$) time rather than O($dq^{d-1}$) time, as was previously described in Section \ref{sec:intro}.

%Lemmas \ref{lemma:nPPwithF-k-is-relatively-prime} and \ref{lemma:nPPwithF-any-k} have similar statements for m-normalization and b-normalization. 

In Section \ref{subsec:Fmap+Gmap} we discussed iterations of the $F$-map and the $G$-map. 
We now introduce a new equivalence relation, $\R_{F,G}$, based on such iterations. 

\begin{dfn} 
\label{def:Relation_R_FG}
Let P(x) and Q(x) be nPPs of degree d, for some d. If P(x) can be converted into Q(x) by some sequence consisting of $F$-map and $G$-map operations, then P(x) and Q(x) are related by $\R_{F,G}$.
\end{dfn}

%\todo{Delete this after moving the definition of G-cycle to the Gmap section \\}
%For any element x in $GF(q)$, each of the elements in the cycle $x, G(x)=x^p, G^2(x) = x^{2p}, ... , G^t(x)=x^{tp} = x$, for some $t>0$, is also in $GF(q)$ and is called the $orbit$ of the element x. If, for some x, two elements y and z are in the orbit of x, then they are $orbit$  $equivalent$. Clearly, orbit equivalence is an equivalence relation on $GF(q)$. 

We present a theorem that characterizes the nPPs in the equivalence classes of $\R_{F,G}$, and gives the number of nPPs in each equivalence class. 

\begin{theorem} 
\label{th:sizeof-R_FG-equivclasses}
%Let $p$ be a prime and and let $q=p^m, m>1$. 
Let $P(x)$
be a nPP.  
Then the equivalence class of $\R_{F,G}$ containing $P(x)$ is \[[P(x)]=\{F^i(G^j(P))~|~0\le i\le r-1, 0\le j\le s-1\},\]
where $r$ and $s$ are the lengths of the $F$- and $G$-cycles, respectively, and the number of nPPs in $[P(x)]$ is $rs$.
\end{theorem}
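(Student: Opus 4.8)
The plan is to show that the set $\{F^i(G^j(P)) \mid 0 \le i \le r-1,\ 0 \le j \le s-1\}$ is exactly the $\R_{F,G}$-equivalence class of $P(x)$, and that these $rs$ polynomials are pairwise distinct. The forward containment is the easy half: each $F^i(G^j(P))$ is obtained from $P(x)$ by a sequence of $G$-maps followed by $F$-maps, so it is $\R_{F,G}$-related to $P(x)$, and by Corollaries \ref{cor:F-mapGenratesNPPs} and \ref{cor:G-mapGeneratesNPPs} it is again an nPP. For the reverse containment, I would take an arbitrary element $Q(x)$ of $[P(x)]$, so $Q(x)$ is reached from $P(x)$ by \emph{some} sequence of $F$- and $G$-maps in arbitrary order. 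The key tool is Lemma \ref{lemma:FplusG}, which lets me push every $G$-map to the left past any $F$-map at the cost of replacing a single $F$ by $F^p$; iterating this rewriting, any mixed word in $F$ and $G$ is equal, as an operator on PPs, to $F^a \circ G^b$ for some nonnegative integers $a, b$. Then, since $F^r = \mathrm{id}$ on $P(x)$ (by Definition \ref{def:F-cycle on P(x)}) and $G^s = \mathrm{id}$ on $P(x)$ (Definition \ref{def:G-cycle on P(x)}, using that applying $G$ $s$ times fixes every coefficient), I can reduce $a$ modulo $r$ and $b$ modulo $s$ — here I must be slightly careful and reduce $b$ first, since the exponent $a$ depends on the number of $F$'s which in turn was inflated by factors of $p$, but the $G$-cycle length $s$ divides $m$ and the reduction of $b$ mod $s$ is independent of $a$ once the word is in the form $F^a G^b$. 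This gives $Q(x) = F^i(G^j(P))$ with $0 \le i \le r-1$ and $0 \le j \le s-1$.

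It remains to prove the count: the $rs$ polynomials $F^i(G^j(P))$ for $0 \le i < r$, $0 \le j < s$ are all distinct. Suppose $F^{i}(G^{j}(P)) = F^{i'}(G^{j'}(P))$. I would examine the coefficients. Pick an index $d-k$ with $a_{d-k} \ne 0$; write $a_{d-k} = t^{e}$. Then the coefficient of $x^{d-k}$ in $F^{i}(G^{j}(P))$ is $t^{(e p^{j} + i k) \bmod (q-1)}$ and in $F^{i'}(G^{j'}(P))$ it is $t^{(e p^{j'} + i' k)\bmod (q-1)}$, so equality of all coefficients forces $e p^{j} + i k \equiv e p^{j'} + i' k \pmod{q-1}$ for every such $(k,e)$. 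First, by choosing $k$ to be the index realizing the maximum in the definition of the $G$-cycle length on $P(x)$, the relation among the exponents should force $p^{j} \equiv p^{j'}$ on the relevant cyclic group, hence $j \equiv j' \pmod s$, hence $j = j'$ since $0 \le j, j' < s$. Once $j = j'$, the relation collapses to $ik \equiv i'k \pmod{q-1}$ for all $k$ with a nonzero coefficient, and by the definition of $r$ as $(q-1)/\gcd_k\{k\}$ (equivalently, $r$ is the smallest positive integer with $rk \equiv 0$ for all relevant $k$), this forces $i \equiv i' \pmod r$, hence $i = i'$.

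I expect the main obstacle to be the distinctness argument, specifically the step that isolates $j = j'$ from the system of congruences $e p^{j} + ik \equiv e p^{j'} + i'k \pmod{q-1}$: one has to argue carefully that there is a coordinate $k$ whose coefficient's $G$-cycle has length exactly $s$, and that for that coordinate the $F$-shift contributions $ik$ and $i'k$ cannot "absorb" the difference $e(p^{j} - p^{j'})$ — this needs the observation that $p^{j} \equiv p^{j'} \pmod{\mathrm{ord}(t^e)}$ is equivalent to $s \mid (j - j')$ for the maximizing coordinate, combined with a bookkeeping argument on which congruence to reduce first. The rewriting step via Lemma \ref{lemma:FplusG} is conceptually the heart of the forward direction but is mechanically straightforward once one commits to always moving $G$'s leftward. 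I would also remark that distinctness of the $rs$ elements, together with the fact that each $\R$-class has the fixed size $q^2(q-1)$ when $p \nmid d$ (Lemma \ref{lemma:sizeof-c-classes}), lets one conclude that each $\R_{F,G}$-class is a union of exactly $rs$ of the $\R$-classes when no nPP in the class coincides under the maps — but for the theorem as stated only the count of nPPs is required.
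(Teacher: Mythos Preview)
Your argument that the $\R_{F,G}$-class of $P$ equals $\{F^{i}(G^{j}(P)):0\le i<r,\ 0\le j<s\}$ is correct and matches the paper's: both rely on Lemma~\ref{lemma:FplusG} to bring any word in $F,G$ into the form $F^{a}G^{b}$. The paper phrases this as showing the displayed set is closed under $F$ and under $G$; you phrase it as a direct rewriting of an arbitrary word --- these amount to the same thing. One minor slip: the lemma reads $G\circ F=F^{p}\circ G$, so to reach the form $F^{a}G^{b}$ you are pushing each $G$ to the \emph{right} past an $F$, not to the left. Your reductions of $b$ modulo $s$ and then of $a$ modulo $r$ are sound, the latter because the $F$-cycle length depends only on the support of the polynomial, which $G$ preserves.

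The distinctness claim, however, cannot be salvaged: the very phenomenon you worry about --- the $F$-shift term $(i'-i)k$ absorbing the difference $e(p^{j}-p^{j'})$ --- does occur, and the count $rs$ is not correct in general. Take the nPP $P(x)=x^{7}+2x^{5}+3x^{3}$ over $GF(25)$ from the paper's own Table~\ref{tbl-EqCl-25-7}. With $2=t^{1}$ and $3=t^{2}$ one computes directly that $G(P)=x^{7}+t^{5}x^{5}+t^{10}x^{3}=F^{2}(P)$; thus although $r=12$ and $s=2$, the class has $12$ nPPs (as the table itself records), not $rs=24$. The paper's proof disposes of the count with the phrase ``by algebra'', so you are not missing an argument the paper supplies; you have correctly located a genuine gap, and in fact the assertion as stated is false. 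What does hold is that $|[P(x)]|=r\cdot s'$ for some $s'\mid s$, with equality $|[P(x)]|=rs$ precisely when no $G^{j}(P)$ with $0<j<s$ already lies in the $F$-orbit of $P$.
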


\begin{proof}
This follows from Lemma \ref{lemma:FplusG}.
Clearly, $[P(x)]$ is a subset of the equivalence class of $\R_{F,G}$ containing $P(x)$. 
Let $Q(x)$ be a polynomial of $[P(x)]$, that is, let $Q(x) = (F^i \circ G^j)(P)$, for some $i,j$.
Observe that $F(Q(x))=(F \circ F^i \circ G^j)(P(x))=(F^{i+1}\circ G^j)(P(x))$, so $F(Q(x))\in [P(x)]$. 
It remains to show that $G(Q(x))\in [P(x)]$.
If $i=0$ then $G(Q(x))=G^{j+1}(P(x))$, so $G(Q(x))\in [P(x)]$.
Suppose $i\ge 1$.
%Suppose $j\ge 1$.
%We show that $F(Q(x))\in C_P$.
Then $G(Q(x))=(G \circ F^i \circ G^j)(P(x))$.
As the iterates of the $F$-map and the $G$-map form cycles of lengths $r$ and $s$, respectively, the size of the exponents of $F$ and $G$ are computed by arithmetic $\pmod r \text{ and}\pmod s$, respectively. 
Applying Lemma \ref{lemma:FplusG} $i$ times, we have 
\begin{align*}
    G(Q(x)) &= (G \circ F^i \circ  G^j )(P(x))\\
    &= (F^p \circ G \circ F^{i-1} \circ  G^j )(P(x))\\
    &= (F^{p+1} \circ G \circ F^{i-2} \circ  G^j )(P(x))\\
    &= \dots\\
    & = (F^{pi} \circ G^{j+1})(P(x))\\
    &\in [P(x)]
\end{align*}
%Hence $G(Q(x)) \in [P(x)]$.
By algebra, the number of nPPs in $[P(x)]$ is $rs$.
The theorem follows. 
\end{proof}

\begin{comment}
************************
\begin{proof}
This follows from Lemma \ref{lemma:FplusG}.
Clearly, $[P(x)]$ is a subset of the equivalence class of $\R_{F,G}$ containing $P(x)$. 
Let $Q(x)$ be a polynomial of $[P(x)]$.
Clearly $F(Q(x))\in [P(x)]$. It remains to show that $G(Q(x))\in [P(x)]$.
Let $Q(x) = (F^i \circ G^j)(P)$, for some $i,j$. 
If $i=0$ then $G(Q(x))\in [P(x)]$.
Suppose $i\ge 1$.
%Suppose $j\ge 1$.
%We show that $F(Q(x))\in C_P$.
As the iterates of the $F$-map and the $G$-map form cycles of lengths $r$ and $s$, respectively, the size of the exponents of $F$ and $G$ are computed by arithmetic $\pmod r \text{ and}\pmod s$, respectively. 
By applying Lemma \ref{lemma:FplusG} $i$ times, we have \[ G(Q(x)) = (G \circ F^i \circ  G^j )(P(x)) = (F^{pi} \circ G^{j+1})(P(x))\]

Then $G(Q(x)) \in [P(x)]$.
By algebra, the number of nPPs in $[P(x)]$ is $rs$.
The theorem follows. 
\end{proof}
************************
\end{comment}

For example, consider the nPP $P(x)$ = $x^8+4x^4+16x^2+3x$ for $GF(2^5)$, with the primitive polynomial $x^5+x^3+x^2+x+1$. 
As there are 31 non-zero elements in $GF(31)$, and 31 is prime, the length of the $F$-cycle on $P(x)$ is 31.  
The iterates of $G$-map on the coefficient 16
give the $G$-cycle $\{16,31,30,27,24\}$, which has length 5.
So, $[P(x)]$ denotes an equivalence class with $31*5=155$ nPPs.
%Consider $(G \circ F \circ G \circ F)(P(x))$. 
%Since $p=2$, by Lemma  \ref{lemma:FplusG} $(G \circ F)(P(x))$ = $(F^2 \circ G)(P(x))$.
%So, $(G \circ F \circ G \circ F)(P(x)) = (G \circ F \circ F^2 \circ G)(P(x))  = (G \circ F^3 \circ G)(P(x))$. 
%Then by an iterative use of  Lemma \ref{lemma:FplusG}, we get $(G \circ F^3 \circ G)(P(x)) = (F^6 \circ G \circ G)(P(x)) = (F^6 \circ G^2)(P(x))$. 

Theorem \ref{th:sizeof-R_FG-equivclasses} shows that combined iterations of the $F$-map and the $G$-map on an nPP $P(x)$ can condense smaller equivalence classes induced by the individual maps into fewer (and larger) equivalence classes. 
So although the $G$-map by itself (unlike the $F$-map) does not allow an additional coefficient to be fixed, when used with the $F$-map, it does allow the search to include fewer equivalence classes. 
This increases the efficiency of the search for PPs, although not by an order of magnitude.

%---------------------------------------------

%\todo{Update and explain iBlast algo}
\section{Algorithms for computing equivalence classes of nPPs}
\label{sec:algorithms}

\subsection{Algorithm {\tt iBlast}}
\label{subsec:iBlast}

We have implemented an  algorithm called {\tt iBlast} (whose name is derived from our permuted first name initials) that uses Lemmas \ref{lemma:nPPwithF-k-is-relatively-prime} and \ref{lemma:nPPwithF-any-k}, and Theorem \ref{th:sizeof-R_FG-equivclasses} to make the enumeration of  nPPs more efficient. 
As previously remarked, {\tt iBlast} searches for equivalence class representatives in O($dq^{d-2}$) time rather than O($dq^{d-1}$) time (the latter being the time bound of previous algorithms), so we are able to compute new results for various $q$ and $d$. 
We exhibit many of these in Table \ref{tbl-deg11} and give specific details in Tables \ref{tb1-Num-nPPs} through \ref{tbl-EqCl-many}.

\begin{algorithm}[!htb]
\label{algorithm:iBlast}
\DontPrintSemicolon
\SetKwRepeat{Do}{do}{while}
\textbf{if} $p \nmid d$ \textbf{then} \textup{Create the set $\cal M$ of all possible masks corresponding to c-normalization} \;
\textbf{else if} $p = 2$ \textbf{then} \textup{Create the set $\cal M$ of all possible masks corresponding to b-normalization} \;
\textbf{else} \textup{Create the set $\cal M$ of all possible masks corresponding to m-normalization} \;
$\cal S = \emptyset$ \;
\ForEach{ \textup{mask} $m \in \cal M$ }
{
  \textup{Use the $F$-map and $G$-map functions to select the optimal coefficient in $m$ to additionally fix}\;
  \textup{$currentPolynomial$ = its default value where each fixed coefficient is assigned its designated value, and each unfixed coefficient is assigned 1}\;
  \Do{currentPolynomial \textup{is not maximized}}
  {
    \If{$currentPolynomial \textup{ is a PP \textbf{and} } currentPolynomial \notin \cal S$}
    {
      ${\cal T} =$ \textup{the set of all PPs that are $\R_{F,G}$-related to $currentPolynomial$}\;
      \If{$p \mid d$}
      {
        \ForEach{\textup{PP} $P \in \cal T$ \textup{\textbf{and}} $b \in GF(q)$}
        {
            ${\cal T} = {\cal T} \cup P(x+b)$
        }
      }
      ${\cal S} = {\cal S} \cup {\cal T}$\;
    }
    increment $currentPolynomial$\;
  }
}
\Return{$\cal S$}
\caption{iBlast}
\end{algorithm}

%Let $q=p^m$ for some prime $p$ and some $m \ge 1$. 
{\tt iBlast} computes the set ${\cal S}$ of all nPPs in $GF(q)$ of degree $d$.
Define a $mask$ as an array with values in $\{0,1\}$ associated with a polynomial's coefficients. 
0 designates that the corresponding  coefficient should be fixed at a particular value, and 1 designates the coefficient is both unfixed and nonzero. 
The purpose of using masks is to maximize our usage of the $F$-map and $G$-map functions. 
%Since the symbol $0 \rightarrow 0$ for every $F$-cycle and $G$-cycle, masks allow us to consider only nonzero elements when selecting an additional coefficient to lock. 
Since both the $F$-map and $G$-map  take the symbol 0 to itself, masks allow us to consider only nonzero elements when selecting an additional coefficient to fix. 

For example, consider the search for nPPs of degree 5 over $GF(25)$. 
Since $p \mid d$ and $p > 2$, the masks must correspond to m-normalization, that is, to PPs in the form $a_{5}x^5+a_{4}x^4+a_{3}x^3+a_{2}x^2+a_{1}x+a_{0}$ where $a_{5}=1,~ a_{0}=0$, and either $a_{4}=0$ or $a_{3}=0$.
Since $a_{5}$ is always fixed at 1, and $a_{0}$ is always fixed at 0, we can first consider all cases where $a_{4}$ is fixed at 0. 
{\tt iBlast} generates $2^3=8$ masks in the form $[a_{5},a_{4},a_{3},a_{2},a_{1},a_{0}]$ to account for all combinations where $a_{3}, a_{2},$ and $a_{1}$ are either fixed at 0 or are unfixed and nonzero.
The eight masks are $[0,0,0,0,0,0]$, $[0,0,0,0,1,0]$, $[0,0,0,1,0,0]$, $[0,0,0,1,1,0]$, $[0,0,1,0,0,0]$, $[0,0,1,0,1,0]$, $[0,0,1,1,0,0]$, $[0,0,1,1,1,0]$.
{\tt iBlast} also creates an additional 4 masks for the cases where $a_{3}$ is fixed at 0, and $a_{2}$ is unfixed and nonzero, namely $[0,1,0,0,0,0]$, $[0,1,0,0,1,0]$, $[0,1,0,1,0,0]$, $[0,1,0,1,1,0]$.

{\tt iBlast} then iterates through the masks to search for nPPs. 
Each mask has at least 3 coefficients fixed due to normalization, and
%the $F$-map and the $G$-map 
Lemmas \ref{lemma:nPPwithF-k-is-relatively-prime} and \ref{lemma:nPPwithF-any-k}
%allow us to 
fix an additional coefficient, giving a total of 4 fixed coefficients.

\begin{comment}
***********************************
\begin{table}[htb]
\centering
\vspace*{4mm}
\begin{tabular}{|c|c|c|c|}
\hline
\multicolumn{2}{|c|}{\bfseries  $G$-cycles} & {\bfseries $F$-cycles for $a_{4}$} & {\bfseries $F$-cycles for $a_{2}$} \\
\hline
$0$ & $9 \rightarrow 17$ & $0$ & $0$ \\
$1$ & $10 \rightarrow 22$ & $1 \rightarrow 2 \rightarrow 3 \rightarrow \dots \rightarrow 24$ & $1 \rightarrow 4 \rightarrow 7 \rightarrow 10 \rightarrow 13 \rightarrow 16 \rightarrow 19 \rightarrow 22$ \\
$2 \rightarrow 6$ & $13$ & & $2 \rightarrow 5 \rightarrow 8 \rightarrow 11 \rightarrow 14 \rightarrow 17 \rightarrow 20 \rightarrow 23$ \\
$3 \rightarrow 11$ & $14 \rightarrow 18$ & & $3 \rightarrow 6 \rightarrow 9 \rightarrow 12 \rightarrow 15 \rightarrow 18 \rightarrow 21 \rightarrow 24$ \\
$4 \rightarrow 16$ & $15 \rightarrow 23$ & & \\
$5 \rightarrow 21$ & $19$ & & \\
$7$ & $20 \rightarrow 24$ & & \\
$8 \rightarrow 12$ & & & \\ 
\hline
\end{tabular}
\caption{$G$-cycles and $F$-cycles for the mask $[0,1,0,1,0,0]$ in $GF(25)$ degree 5.}
\label{tbl-fgcycles-25-5}
\end{table}
****************************************
\end{comment}

\begin{table}[htb]
\centering
\vspace*{4mm}
\begin{tabular}{|c|c|c|c|}
\hline
\multicolumn{2}{|c|}{\bfseries  $G$-cycles} & {\bfseries $F$-cycles for $a_{4}$} & {\bfseries $F$-cycles for $a_{2}$} \\
\hline
$\{0\}$ & $\{9, 17\}$& $\{0\}$& $\{0\}$\\
$\{1\}$& $\{10, 22\}$& $\{1, 2, 3, \dots, 24\}$& $\{1, 4, 7, 10, 13, 16, 19, 22\}$\\
$\{2, 6\}$& $\{13\}$& & $\{2, 5, 8, 11, 14, 17, 20, 23\}$\\
$\{3, 11\}$& $\{14, 18\}$& & $\{3, 6, 9, 12, 15, 18, 21, 24\}$\\
$\{4, 16\}$& $\{15, 23\}$& & \\
$\{5, 21\}$& $\{19\}$& & \\
$\{7\}$& $\{20, 24\}$& & \\
$\{8, 12\}$& & & \\ 
\hline
\end{tabular}
\caption{$G$-cycles and $F$-cycles for the mask $[0,1,0,1,0,0]$ in $GF(25)$ degree 5.}
\label{tbl-fgcycles-25-5}
\end{table}

For example, consider the $G$-cycles and $F$-cycles for the mask $[0,1,0,1,0,0]$ in Table \ref{tbl-fgcycles-25-5}.
Observe that for the $F$-cycles of $a_{2}$ we can use the $G$-map to map every element in the $F$-cycle that contains 2 (\ie{the $F$-cycle $\{2, 5, 8, 11, 14, 17, 20, 23\}$}) to some element in the $F$-cycle that contains 6 (\ie{the $F$-cycle $\{3, 6, 9, 12, 15, 18, 21, 24\}$}). 
If we were to therefore consider fixing the coefficient $a_{2}$, we would only need to check polynomials where $a_{2}$ had a value of 1 or 3. However, $a_{4}$ is the better choice as it can be fixed to the single value 1 because for $a_{4}$, the $F$-cycle that contains 1 also contains all non-zero elements of $GF(25)$.

{\tt iBlast} then begins the search by creating a polynomial variable, $currentPolynomial$, and setting its coefficients to their assigned value if they are fixed, or 1 if they are unfixed. 
For our example mask $[0,1,0,1,0,0]$, this would correspond to the polynomial $x^5+x^4+x^2$.
{\tt iBlast} increments the polynomial as a $d+1$-tuple and adds 1 to the lowest degree, unfixed coefficient. 
If a coefficient exceeds its maximum  value, {\tt iBlast} resets it to 1, since unfixed coefficients are nonzero, and the next highest unfixed coefficient is incremented.
Note that if the coefficient fixed by the $F$-map and $G$-map does not reduce to a single value, we can include it in our increment function, but it only increments through the minimum necessary values as determined by the $F$-cycles and $G$-cycles.

Each time $currentPolynomial$ is incremented, it is checked to determine if it is a PP. 
If $currentPolynomial$ is a new PP, \ie{it is not already in the set $\S$,} {\tt iBlast} uses the $F$-map and $G$-map functions to create the set $\cal T$ of all PPs that are $\R_{F,G}$-related to $currentPolynomial$.

%Since we are using m-normalization which accounts for the normalization operation $P(x+b)$, 
If $p \mid d$ and $p > 2$, m-normalization applies.
In this case, for each PP in $\cal T$,  {\tt iBlast} computes $P(x+b)$, for all $b$ in $GF(q)$, and puts the resulting PPs in the set $\cal T$. 
Note that this must be done for any search where $p \mid d$ since both m-normalization and b-normalization make use of this normalization operation.
Finally {\tt iBlast} adds all of the newly generated PPs to S by setting $\S = \S \cup \cal T$ and then continues to increment $currentPolynomial$ until it reaches its maximum configuration.

Reaching the maximum configuration signifies that the process can be repeated for the next mask.
{\tt iBlast} terminates when all of the masks have been considered. 
At this point, the set $\S$ contains all nPPs in $GF(q)$ of degree $d$.

\subsection{Algorithm 2}

%Algorithm 1 can be used to compute {\em all} equivalence classes.
%First, compute the set ${\cal N}$ of all nPPs.
%Then, simply pick $P\in {\cal N}$, compute ${\cal S}$ using Algorithm 1, remove ${\cal S}$ from ${\cal N}$, and repeat.
% Let $q=p^m$ for some prime $p$ and some $m \ge 1$. Given the set $\cal S$ of all nPPs in $GF(q)$, Algorithm 2 calculates the set $\cal E$ containing a representative of each equivalence class, and the size of the set it represents.

% \begin{algorithm}[!htb]
% \DontPrintSemicolon
% $\cal S =$ the set of all nPPs of degree $d$ in $GF(q)$ \;
% $\cal E = \emptyset$ \;
% \While{$\cal S \ne \emptyset$} 
% {
%   select an arbitrary $P \in \cal S$ \;
%   ${\cal T} =$ \textup{the set of all PPs that are $\R_{F,G}$-related to $P$} \;
%   \If{$p \mid d$}
%   {
%     \ForEach{\textup{PP} $P \in \cal T$ \textup{\textbf{and}} $b \in GF(q)$}
%     {
%       ${\cal T} = {\cal T} \cup P(x+b)$ \;
%     }
%   }
%   ${\cal E} = {\cal E} \cup (P, |{\cal T}|)$ \;
%   ${\cal S} = {\cal S} - {\cal T}$ \;
% }
% \Return{$\cal E$}
% \caption{Equivalence Class}
% \end{algorithm}
%Let $q=p^m$ be a prime power. 
Given the set $\cal S$ of all nPPs in $GF(q)$, the following algorithm will create a set of equivalence class representatives.

While $\cal S$ is not empty, select an arbitrary $P(x)$ in $\cal S$. Compute the set $\cal T$ of all nPPs that are $\R_{F,G}$-related to $P(x)$. If $p \mid d$, additionally compute all possible combinations of $P(x+b)$ for each nPP in $\cal T$, adding each result to $\cal T$. Designate $P(x)$ as an equivalence class representative, and ${\cal S} = {\cal S} - {\cal T}$.

%--------------------------------------------
\section{Results}
\label{sec:results}

As stated in Section \ref{sec:intro}, a brute force search for degree $d$ permutation polynomials over $GF(q)$ 
would require $O(dq^{d+2})$ time. 
Normalization operations defined in \cite{lidl88}, for PPs in which $p \nmid d$, fixes three of the coefficients and therefore requires $O(dq^{d-1})$ time. 
We refer to this type of normalization as  c-normalization.
In this paper, we have succeeded in improving the time bound by an order of magnitude, that is, lowering it to $O(dq^{d-2})$, by fixing an additional coefficient. 
This improvement applies to a larger class of PPs, namely all c-normalized PPs, all PPs for which  $p \nmid d$ and $p>2$ (\ie, m-normalized PPs), and all PPs for which $p \nmid d$ and $p>2$, except when $d=2^i-2$ for some $i \geq 2$ (\ie, b-normalized PPs).
We have done this by expanding the definition of normalization to include m-normalization  and b-normalization, and by introducing four new equivalence relations on PPs and nPPs, namely, $\R_E,R_F, R_G, \text{ and } R_{F,G}$.
We have been able to reduce the search space for PPs by limiting the search to equivalence class representatives.
In addition, equivalence classes allow a more succinct categorization of PPs, since the equivalence classes can include quite a number of PPs.
Furthermore, our new techniques apply to arbitrary $q$ and $d$.

We implemented our search for equivalence classes in the algorithm {\tt iBlast} and computed many new results which are shown in Table \ref{tb1-Num-nPPs}.
For almost all $q\le 100$ and $d\le 10$, {\tt iBlast} found all nPPs and all equivalence classes.
Table \ref{tb1-Num-nPPs} lists the number of nPPs, the number of  equivalence classes, and total number of PPs for $q \le 97$ and degree $d$, where $6 \le d \le 10$.
Note that Table \ref{tb1-Num-nPPs} has columns for $d \in \{6,7,8,9,10\}$. 
For degrees $d \le 5$, all PPs have been described, for example in \cite{chu2004}. More recent work \cite{Fan19a,Fan19,Fan19b,li10,Shallue-Wanless-pp-13} gives all PPs of degree $d \leq 8$; however, we list our results in Table \ref{tb1-Num-nPPs} for completeness.
Table \ref{tb1-Num-nPPs} does not have columns for $d \ge 11$, because the computations become too time consuming (at least for large $q$).
However, we have been able to compute all degree 11 PPs over $GF(q)$, for powers of primes $q$ ($16 \le q \le 32$). 
The results are listed in Table \ref{tbl-deg11}.
%\ref{tbl-M(n,d)}.
The sum of $N_d(q)$, for $1 \leq  d \leq 11$, for prime powers $q$, also gives improved lower bounds for $M(q,q-11)$ shown in Table \ref{tbl-M(n,d)}. For example, we show that M(16,5) $\ge 5,112,053,760$, which improves on the lower bound given in \cite{wang17}. We also computed $N_{12}(17) = 
68,126,982,656, N_{12}(19) = 46,631,675,376, N_{12}(23) = 13,755,394,444 $. This yields $M(17,5) \ge 72,377,516,320$$, M(19,7) \ge  $, and $M(23,11) \ge 14,341,972,920$.

\begin{table}[htb]
\centering
\vspace*{4mm}
\begin{tabular}{|r|r|r|r|}
\hline
\bfseries  $q$ & $N_{11}(q)$ & number of nPPs & number of equivalence classes \\
\hline
16 & 4,751,093,760 & 1,237,264 & 20,663 \\
17 & 4,001,494,000 & 865,375 & 54,225 \\
19 & 2,431,915,488 & 374,256 & 20,874\\
23 & 0 & 0 & 0 \\
25 & 6,509,295,000 & 433,953 & 9,266 \\
27 & 2,826,989,100& 149,150& 2,060\\
29 & 1,014,518,484 & 43,083 & 1,639 \\
31 & 385,053,480 & 13,356 & 507 \\ 
32 & 190,940,160 & 6,015 & 51 \\
\hline
\end{tabular}
\caption{Number of PPs and nPPs for degree 11 polynomials over $GF(q)$.}
\label{tbl-deg11}
\end{table}

We also provide a website that explicitly lists nPPs and PPs of degree $1\le d\le 10$ and the values of $n\le 100$ at \url{https://personal.utdallas.edu/~bdm170430/npps/}.
The nPPs are computed using normalizations from Table \ref{tbl-NormTypes}.  
The equivalence classes are defined using the relations $\R$, $\R_F$, and $\R_{F,G}$. 
The number of equivalence classes is given in parentheses. The total number of PPs is computed from the number of nPPs. 
In looking at Tables \ref{tb1-Num-nPPs}, and Tables \ref{tbl-EqCl-11-7} through \ref{tbl-EqCl-many}, one should keep in mind that the specific nPPs listed in the tables are for the stated primitive polynomial and for our naming convention for elements of $GF(q)$. Our results for degrees 7 and 8 agree with those listed in \cite{Fan19}, \cite{Fan19b}, and \cite{Fan19a}, except for differences caused by naming conventions.

In some cases, there is a small number of equivalence classes representing a large number of PPs. 
For example, 
$N_8(27)= 6,899,256$, and there are 364 nPPs, but only 6 equivalence classes as shown in Table \ref{tbl-EqCl-27-8}.  Compare our list of classes of degree 8 PPs over GF(27) with the list given in \cite{Fan19b}, which has 26 nPPs. The difference is that we often combined three of his nPPs into one class by the use of our $R_{F,G}$ relation.
This gives a method to make rather concise representations of large sets of PPs. 
Again, for example, $N_9(32)= 9,872,384$, and there are 311 nPPs, but  only 7 equivalence classes as shown in Table \ref{tbl-EqCl-32-9}. 
Observe that, for the second through sixth nPPs in Table \ref{tbl-EqCl-32-9}, there is a term, either $x^3$, $x^6$, or $x^7$, with coefficient 1. 
Since the number of non-zero elements is 31, which is prime, the length  of the $F$-map has length 31. 
Since $G(1)=1$, the $G$-map produces no other nPPs in the  equivalence classes. 
So, each equivalence class has 31 elements, as defined by the $F$-map alone. 
For the last PP in Table \ref{tbl-EqCl-32-9}, again the length  of the $F$-cycle on that PP is 31, but now the length  of the $G$-cycle is applied, for example, to the coefficient 16, has length 5, namely (16,31,30,27,24). 
Thus, the equivalence class of nPPs produced by the $F$-map and $G$-map combined is 31 times 5 or 155.  

We note that it was stated in \cite{Sobhani2019} that 32 hours were required to compute the RS code for $q=32$ and $d=5$; 
for $q=32$ and $d=7$, the authors
were able to compute only the size of the set of PPs, not the set itself.
Our algorithm {\tt iBlast} allows for the computation of all nPPs, for $q = 32$ and $d = 7$, in a few seconds and, for $q = 32$ and $d = 8$, in about 10 minutes. 

It should be noted that in many of our tables (Tables \ref{tbl-EqCl-11-7} through \ref{tbl-EqCl-many}) we use variables to further reduce the table's size. 
For example, for $q = 25$ and $d = 9$ (see Table \ref{tbl-EqCl-25-9}) there are 38 equivalence classes, which is rather large. 
Instead, in Table \ref{tbl-EqCl-25-9}, we use a variable $a$, with specified values, which allows the table to have 21 classes rather than 38. 
For example, the first class shown is $x^9+ax$, where $a \in \{1,7\}$. This is the union of two equivalence classes, namely those with the representatives $x^9+x$ and $x^9+7x$.
The variable $a$ is also used in other tables.
In Table \ref{tbl-EqCl-25-10}, $\alpha(x)$ represents three sets of values for low order coefficients in the polynomial shown in the first column.

%In some cases the number of nPPs of degree $d$ over $GF(q)$ is too large to be explicitly shown in the paper; they are archived in a file which a reader may freely access at\\ \url{https://personal.utdallas.edu/~bdm170430/}. 
%
%{tb1-Num-nPPs}
%{tbl-M(n,d)}
%
%In other cases, the number of equivalence classes of nPPs is very small and these cases are given in Table 26. 
Our results on the number of PPs also give several new lower bounds for $M(n,D)$. These are given in Table \ref{tbl-M(n,d)}. 
Several additional improved lower bounds can be obtained from those shown in Table \ref{tbl-M(n,d)} using the well known inequality $M(n,D-1) \ge M(n,D)/n$ \cite {chu2004} or the operation of contraction \cite {bls-18}.

Lidl and Mullen \cite{lidl88,lidl93} give a list of several interesting open questions about PPs. 
For example, they ask about $complete$ PPs, where a PP $P(x)$ is called $complete$ if $P(x)+x$ is also a PP. 
There are several complete PPs in our tables. For example, $x^9$ and $x^9+x$ are both PPs for $q=3^4$.

%\todo{Should we move table 4 to the intro?\\}
For the reader's convenience, we provide a master list of tables in Table \ref{tbl-ListOfTables}.

\begin{table}[htb]
\centering
\vspace*{4mm}
\begin{tabular}{|l|l|}
\hline
Table \ref{tbl-deg11}  	 & 	Number of PPs and nPPs for degree 11 polynomials over $GF(q)$	\\
Table \ref{tb1-Num-nPPs}	 & 	Number of nPPs, Equivalence Classes, and Total number of PPs for $q \le 97$ and degree $d$,\\
    & where $6 \le d \le 10$	\\
Table \ref{tbl-M(n,d)}	 & 	New lower bounds for $M(n,D)$	\\
Table \ref{tbl-EqCl-11-7}	 & 	nPP Equivalence Classes for $q=11$, degree 7, and primitive polynomial $x+4$	\\
Table \ref{tbl-EqCl-23-7}	 & 	nPP Equivalence Classes for $q=23$, degree 7, and primitive polynomial $x+12$	\\
Table \ref{tbl-EqCl-25-7}	 & 	nPP Equivalence Classes for $q=25$, degree 7, and primitive polynomial $x^2+3x+3$	\\
Table \ref{tbl-EqCl-25-9}	 & 	nPP Equivalence Classes for $q=25$, degree 9, and primitive polynomial $x^2+3x+3$	\\
Table \ref{tbl-EqCl-25-10}	 & 	nPP Equivalence Classes for $q=25$, degree 10, and primitive polynomial $x^2+3x+3$	\\
Table \ref{tbl-EqCl-27-8}	 & 	nPP Equivalence Classes for $q=27$, degree 8, and primitive polynomial $x^3+2x+1$	\\
Table \ref{tbl-EqCl-27-9}	 & 	nPP Equivalence Classes for $q=27$, degree 9, and primitive polynomial $x^3+2x+1$	\\
Table \ref{tbl-EqCl-32-8}	 & 	nPP Equivalence Classes for $q=32$, degree 8,\\
    & and primitive polynomial $x^5+x^3+x^2+x+1$	\\
Table \ref{tbl-EqCl-32-9}	 & 	nPP Equivalence Classes for $q=32$, degree 9,\\
    & and primitive polynomial $x^5+x^3+x^2+x+1$	\\
Table \ref{tbl-EqCl-37-7}	 & 	nPP Equivalence Classes for $q=37$, degree 7, and primitive polynomial $x+13$	\\
Table \ref{tbl-EqCl-41-9}	 & 	nPP Equivalence Classes for $q=41$, degree 9, and primitive polynomial $x+17$	\\
Table \ref{tbl-EqCl-43-9}	 & 	nPP Equivalence Classes for $q=43$, degree 9, and primitive polynomial $x+13$	\\
Table \ref{tbl-EqCl-43-10}	 & 	nPP Equivalence Classes for $q=43$, degree 10, and primitive polynomial $x+13$	\\
Table \ref{tb-EqCl-47-9}	 & 	nPP Equivalence Classes for $q=47$, degree 9, and primitive polynomial $x+12$	\\
Table \ref{tbl-EqCl-49-9}	 & 	nPP Equivalence Classes for $q=49$, degree 9, and primitive polynomial $x^2+6x+3$	\\
Table \ref{tbl-EqCl-49-10}	 & 	nPP Equivalence Classes for $q=49$, degree 10, and primitive polynomial $x^2+6x+3$	\\
Table \ref{tbl-EqCl-71-9}	 & 	nPP Equivalence Classes for $q=71$, degree 9, and primitive polynomial $x+38$	\\
Table \ref{tbl-EqCl-73-7}	 & 	nPP Equivalence Classes for $q=73$, degree 7, and primitive polynomial $x+20$	\\
Table \ref{tbl-EqCl-79-7}	 & 	nPP Equivalence Classes for $q=79$, degree 7, and primitive polynomial $x+25$	\\
Table \ref{tbl-EqCl-81-9}	 & 	nPP Equivalence Classes for $q=81$, degree 9, and primitive polynomial $x^4+2x^3+2$	\\
Table \ref{tbl-EqCl-89-7}	 & 	nPP Equivalence Classes for $q=89$, degree 7, and primitive polynomial $x+76$	\\
Table \ref{tbl-EqCl-many}	 & 	Number of nPPs in Equivalence Classes for prime powers q and degree d,\\
    & for given primitive polynomial	\\
\hline
\end{tabular}
\caption{List of tables of results.}
\label{tbl-ListOfTables}
\end{table}

\subsection*{Acknowledgments} 

We thank Professor Carlos Arreche of the Mathematics Department at UTD for bringing our attention to Galois field orbits.
We also want to thank Professor Xiang Fan of Mathematics Department Sun Yat Sen University in Guangzhou, China for his valuable comments and suggestions, including the use of Lucas's theorem.
 
\bibliographystyle{abbrv} 
%\bibliography{refs}

\pagebreak
\begin{longtable}{|l|lllll|}
\caption{Number of nPPs, Equivalence Classes, and Total number of PPs for $q \le 97$ and degree $d$, where $6 \le d \le 10$} \\
\hline \multicolumn{1}{|l|}{\textbf{q}} & \multicolumn{1}{l}{\textbf{6}} & \multicolumn{1}{l}{\textbf{7}} & \multicolumn{1}{l}{\textbf{8}} & \multicolumn{1}{l}{\textbf{9}} & \multicolumn{1}{l|}{\textbf{10}} \\ \hline
\endfirsthead
\multicolumn{6}{c}%
{{\bfseries \tablename\ \thetable{} -- continued from previous page}} \\
\hline \multicolumn{1}{|l|}{\textbf{q}} & \multicolumn{1}{l}{\textbf{6}} & \multicolumn{1}{l}{\textbf{7}} & \multicolumn{1}{l}{\textbf{8}} & \multicolumn{1}{l}{\textbf{9}} & \multicolumn{1}{l|}{\textbf{10}} \\ \hline
\endhead
\hline \multicolumn{6}{|r|}{{Continued on next page}} \\ \hline
\endfoot
\endlastfoot
\textbf{11} & & & & & \\*
\hspace{2mm}nPPs (ECs) & 24 (4) & 225 (28) & 2,754 (277) & 29,985 (3,036) & \\*
\rule[-1.5ex]{0pt}{0pt}\hspace{2mm}Total PPs & 29,040 & 272,250 & 3,332,340 & 36,281,850 & \\
\textbf{13} & & & & & \\*
\hspace{2mm}nPPs (ECs) & 0 & 115 (15) & 1,380 (117) & 16,740 (1,422) & 218,020 (18,193) \\*
\rule[-1.5ex]{0pt}{0pt}\hspace{2mm}Total PPs & 0 & 233,220 & 2,798,640 & 33,948,720 & 442,144,560 \\
\textbf{16} & & & & & \\*
\hspace{2mm}nPPs (ECs) & 840 (3) & 216 (7) & 14,816 (57) & 4,200 (74) & 1,417,600 (1,786) \\*
\rule[-1.5ex]{0pt}{0pt}\hspace{2mm}Total PPs & 201,600 & 829,440 & 3,555,840 & 16,128,000 & 340,224,000 \\
\textbf{17} & & & & & \\*
\hspace{2mm}nPPs (ECs) & 0 & 209 (20) & 0 & 3,023 (201) & 50,608 (3,163) \\*
\rule[-1.5ex]{0pt}{0pt}\hspace{2mm}Total PPs & 0 & 966,416 & 0 & 13,978,352 & 234,011,392 \\
\textbf{19} & & & & & \\*
\hspace{2mm}nPPs (ECs) & 0 & 112 (12) & 864 (48) & 0 & 19,544 (1,094) \\*
\rule[-1.5ex]{0pt}{0pt}\hspace{2mm}Total PPs & 0 & 727,776 & 5,614,272 & 0 & 126,996,912 \\
\textbf{23} & & & & & \\*
\hspace{2mm}nPPs (ECs) & 0 & 89 (6) & 154 (7) & 3,092 (174) & 50,402 (2,291) \\*
\rule[-1.5ex]{0pt}{0pt}\hspace{2mm}Total PPs & 0 & 1,035,782 & 1,792,252 & 35,984,696 & 586,578,476 \\
\textbf{25} & & & & & \\*
\hspace{2mm}nPPs (ECs) & 0 & 45 (5) & 0 & 1,038 (38) & 401,280 (341) \\*
\rule[-1.5ex]{0pt}{0pt}\hspace{2mm}Total PPs & 0 & 675,000 & 0 & 15,570,000 & 240,768,000 \\
\textbf{27} & & & & & \\*
\hspace{2mm}nPPs (ECs) & 702 (1) & 14 (2) & 364 (6) & 29,550 (41) & 7,098 (95) \\*
\rule[-1.5ex]{0pt}{0pt}\hspace{2mm}Total PPs & 492,804 & 265,356 & 6,899,256 & 20,744,100 & 134,535,492 \\
\textbf{29} & & & & & \\*
\hspace{2mm}nPPs (ECs) & 0 & 0 & 32 (2) & 1,751 (67) & 1,568 (56) \\*
\rule[-1.5ex]{0pt}{0pt}\hspace{2mm}Total PPs & 0 & 0 & 753,536 & 41,232,548 & 36,923,264 \\
\textbf{31} & & & & & \\*
\hspace{2mm}nPPs (ECs) & 0 & 106 (6) & 30 (1) & 630 (22) & 0 \\*
\rule[-1.5ex]{0pt}{0pt}\hspace{2mm}Total PPs & 0 & 3,055,980 & 864,900 & 18,162,900 & 0 \\
\textbf{32} & & & & & \\*
\hspace{2mm}nPPs (ECs) & 1,024 (2) & 32 (2) & 19,624 (68) & 311 (7) & 410,720 (91) \\*
\rule[-1.5ex]{0pt}{0pt}\hspace{2mm}Total PPs & 1,015,808 & 1,015,808 & 19,467,008 & 9,872,384 & 407,434,240 \\
\textbf{37} & & & & & \\*
\hspace{2mm}nPPs (ECs) & 0 & 37 (3) & 0 & 0 & 216 (10) \\*
\rule[-1.5ex]{0pt}{0pt}\hspace{2mm}Total PPs & 0 & 1,823,508 & 0 & 0 & 10,645,344 \\
\textbf{41} & & & & & \\*
\hspace{2mm}nPPs (ECs) & 0 & 1 (1) & 0 & 331 (16) & 0 \\*
\rule[-1.5ex]{0pt}{0pt}\hspace{2mm}Total PPs & 0 & 67,240 & 0 & 22,256,440 & 0 \\
\textbf{43} & & & & & \\*
\hspace{2mm}nPPs (ECs) & 0 & 0 & 0 & 42 (2) & 98 (3) \\*
\rule[-1.5ex]{0pt}{0pt}\hspace{2mm}Total PPs & 0 & 0 & 0 & 3,261,636 & 7,610,484 \\
\textbf{47} & & & & & \\*
\hspace{2mm}nPPs (ECs) & 0 & 47 (3) & 0 & 116 (4) & 0 \\*
\rule[-1.5ex]{0pt}{0pt}\hspace{2mm}Total PPs & 0 & 4,775,858 & 0 & 11,787,224 & 0 \\
\textbf{49} & & & & & \\*
\hspace{2mm}nPPs (ECs) & 0 & 3,961 (10) & 0 & 96 (3) & 16 (1) \\*
\rule[-1.5ex]{0pt}{0pt}\hspace{2mm}Total PPs & 0 & 9,316,272 & 0 & 11,063,808 & 1,843,968 \\
\textbf{53} & & & & & \\*
\hspace{2mm}nPPs (ECs) & 0 & 53 (3) & 0 & 53 (2) & 0 \\*
\rule[-1.5ex]{0pt}{0pt}\hspace{2mm}Total PPs & 0 & 7,741,604 & 0 & 7,741,604 & 0 \\
\textbf{59} & & & & & \\*
\hspace{2mm}nPPs (ECs) & 0 & 59 (3) & 0 & 117 (4) & 0 \\*
\rule[-1.5ex]{0pt}{0pt}\hspace{2mm}Total PPs & 0 & 11,911,982 & 0 & 23,622,066 & 0 \\
\textbf{61} & & & & & \\*
\hspace{2mm}nPPs (ECs) & 0 & 61 (3) & 0 & 0 & 0 \\*
\rule[-1.5ex]{0pt}{0pt}\hspace{2mm}Total PPs & 0 & 13,618,860 & 0 & 0 & 0 \\
\textbf{64} & & & & & \\*
\hspace{2mm}nPPs (ECs) & 0 & 0 & 80,968 (214) & 0 & 21,120 (8) \\*
\rule[-1.5ex]{0pt}{0pt}\hspace{2mm}Total PPs & 0 & 0 & 326,462,976 & 0 & 85,155,840 \\
\textbf{67} & & & & & \\*
\hspace{2mm}nPPs (ECs) & 0 & 67 (3) & 0 & 0 & 0 \\*
\rule[-1.5ex]{0pt}{0pt}\hspace{2mm}Total PPs & 0 & 19,850,358 & 0 & 0 & 0 \\
\textbf{71} & & & & & \\*
\hspace{2mm}nPPs (ECs) & 0 & 0 & 0 & 71 (2) & 0 \\*
\rule[-1.5ex]{0pt}{0pt}\hspace{2mm}Total PPs & 0 & 0 & 0 & 25,053,770 & 0 \\
\textbf{73} & & & & & \\*
\hspace{2mm}nPPs (ECs) & 0 & 73 (3) & 0 & 0 & 0 \\*
\rule[-1.5ex]{0pt}{0pt}\hspace{2mm}Total PPs & 0 & 28,009,224 & 0 & 0 & 0 \\
\textbf{79} & & & & & \\*
\hspace{2mm}nPPs (ECs) & 0 & 79 (3) & 0 & 0 & \\*
\rule[-1.5ex]{0pt}{0pt}\hspace{2mm}Total PPs & 0 & 38,457,042 & 0 & 0 & \\
\textbf{81} & & & & & \\*
\hspace{2mm}nPPs (ECs) & 0 & 81 (3) & 0 & 471,891 (55) & 0 \\*
\rule[-1.5ex]{0pt}{0pt}\hspace{2mm}Total PPs & 0 & 42,515,280 & 0 & 3,057,853,680 & 0 \\
\textbf{83} & & & & & \\*
\hspace{2mm}nPPs (ECs) & 0 & 1 (1) & 0 & 83 (2) & \\*
\rule[-1.5ex]{0pt}{0pt}\hspace{2mm}Total PPs & 0 & 564,898 & 0 & 46,886,534 & \\
\textbf{89} & & & & & \\*
\hspace{2mm}nPPs (ECs) & 0 & 89 (3) & 0 & 89 (2) & \\*
\rule[-1.5ex]{0pt}{0pt}\hspace{2mm}Total PPs & 0 & 62,037,272 & 0 & 62,037,272 & \\
\textbf{97} & & & & & \\*
\hspace{2mm}nPPs (ECs) & 0 & 1 (1) & 0 & 0 & \\*
\rule[-1.5ex]{0pt}{0pt}\hspace{2mm}Total PPs & 0 & 903,264 & 0 & 0 & \\
\hline
\caption{Number of nPPs, Equivalence Classes, and Total number of PPs for $q \le 97$ and degree $d$, where $6 \le d \le 10$}
\label{tb1-Num-nPPs}
\end{longtable}

%---------------------------------------------
\begin{table}[htb]
\centering
\vspace*{4mm}
\begin{tabular}{|r|r|r|r||r|r|r|r|}
\hline
\bf  $n$ & \bf $D$ & \bf  $M(n,D) \geq$ & {\bf Previous} & \bf $n$ & \bf $D$ &  $M(n,D) \geq$ & {\bf Previous} \\
\hline
\hline
16 & 5 & 5,112,053,760 & 143,866,479 & 41	&	32	&	 22,392,560 	&	 1,565,096  \\
17 & 6 & 4,250,533,664 & 143,866,479 & 43	&	38	& 3,341,100 	&	 397,074 \\
18 & 9 & 72,480,384* &73,195,200 & 47	&	38	&	 21,442,716 	&	 103,776  \\ 
18 & 10 & 12,240,000* &1,269,376&47	&	40	&	 9,655,492 	&	 103,776 \\
19 & 8 & 2,565,261,288 & 143,866,479 &47	&	42	&	 4,879,634 	&	 103,776\\
24 & 20 & 24,288* & 23,782 &49	&	39	&	 23,341,648 	&	 207,552 	 \\
25 & 14 &6,766,500,000 & 143,866,479 & 49	&	40	&	 20,497,680 	&	 207,552  \\
25	&	15	&	 257,205,000 	&	 244,823,040 	& 49	&	42	&	 9,433,872 	&	 207,552 	\\
25	&	18	&	 867,000 	&	 279,818 	&53	&	44	&	 23,373,636 	&	 470,400 \\
25	&	20	&	 192,060 	&	 19,404 	&53	&	46	&	 15,632,032 	&	 148,824  				\\
27 & 16 & 2,990,448,396 &  143,866,479 & 53	&	48	&	 7,890,428 	&	 148,824	 \\
27	&	17	&	 163,459,296 	&	 9,313,200 	&59	&	50	&	 35,941,266 	&	 4,762,368 	\\
27	&	18	&	 289,233,804 	&	 9,313,200 	&59	&	52	&	 12,319,200 	&	 1,339,416 		\\
27	&	19	&	 8,179,204 	&	 1,326,000 	& 59	&	54	&	 407,218 	&	 205,320 		\\
27	&	21	&	 1,015,092 	&	 249,600 	&61	&	54	&	 13,622,520 	&	 410,640 			\\
27	&	22	&	 522,288 	&	 31,200 	&64	&	55	&	 332,236,800 	&	 262,080 	\\
29 & 18 & 1,093,475,740 &  143,866,479 & 64	&	59	&	 5,773,824 	&	 262,080 	 \\
29	&	19	&	 78,957,256 	&	 9,533,160 	&64	&	60	&	 5,515,776 	&	 262,080		\\
29	&	20	&	 42,033,992 	&	 9,533,160 	& 67	&	60	&	 39,705,138 	&	 524,160 	\\
31 & 20 & 407,138,190 &  143,866,479 & 67	&	62	&	 19,854,780 	&	 524,160  \\
31	&	22	&	 22,084,719 	&	 1,291,080 	&		71	&	62	&	 25,411,610 	&	 601,392 	\\
31	&	23	&	 3,921,819 	&	 1,291,080 	&73	&	66	&	 56,023,704 	&	 357,840 	\\
31	&	24	&	 3,056,919 	&	 372,992 	&		73	&	68	&	 28,014,480 	&	 357,840 	\\
32 & 21 & 460,134,208 & 143,866,479 &	79	&	72	&	 38,950,002 	&	 492,960 	\\
32	&	22	&	 440,194,048 	&	 1,291,080 	&81	&	72	&	 3,100,641,120 	&	 571,704 		\\
32	&	23	&	 32,759,808 	&	 1,291,080 	&	81	&	74	&	 42,787,440 	&	 571,704 		\\
32	&	24	&	 22,887,424 	&	 372,992 	& 83	&	74	&	 94,909,670 	&	 888,729 		\\
32	&	25	&	 3,420,416 	&	 372,992 	&	83	&	76	&	 48,023,136 	&	 571,704 	\\
32	&	26	&	 2,404,608 	&	 208,320 	&	83	&	78	&	 47,458,238 	&	 571,704 	\\
32	&	27	&	 1,388,800 	&	 372,992 	& 89	&	80	&	 125,476,464 	&	 1,062,720 			\\
32	&	29	&	 33,728 	&	 32,736 	&89	&	82	&	 63,439,192 	&	 704,880 	\\
37	&	27	&	 14,293,692 	&	 1,473,120 	&89	&	84	&	 1,401,920 	&	 704,880 	\\
37	&	30	&	 3,648,348 	&	 155,122 	&97	&	90	&	 88,529,184 	&	 912,576 	\\
37	&	32	&	 1,824,840 	&	 50,616 	&	97	&	92	&	 87,625,920 	&	 912,576 	\\
 
\hline
\end{tabular}
\caption{New lower bounds for $M(n,D)$. Note: * denotes values that were obtained by coset search \cite{bmms-19}, not by permutation polynomials. We include these for the sake of completeness.
}
%*denotes values that were not obtained by permutation polynomials.
\label{tbl-M(n,d)}
\end{table}

\begingroup  
%arraystretch below increases space between rows only within the \begingroup ... \endgroup block. That is, only for the equivalence class tables.
\renewcommand{\arraystretch}{1.2} 

\begin{table}[htb]
\centering
\vspace*{4mm}
\begin{tabular}{|r|c|c|}
\hline
\bfseries  class & $a$ &  number of nPPs\\
\hline
$x^7$ & - & 1 \\
$x^7+ax^2$ & \{3,5\} & 4\\
$x^7+x^4+7x$ & - & 10 \\
$x^7+x^4+ax^2+3x$ & \{9,10\}& 20 \\
$x^7+x^5+3x^3+ax$ & \{2,4,9\} & 15
\\
$x^7+x^5+3x^3+5x^2+8x$ & - & 10 \\

$x^7+x^5+2ax^4+3x^3+4a^{-3}x^2+3x$ & \{1,2\} & 20 \\
$x^7+x^5+2ax^4+3x^3+5x^2$ & \{1,4\} & 20 \\
$x^7+x^5+4x^4+3x^3+9x^2+10x$ & - & 10 \\
$x^7+2x^5+5x^3+ax$ & \{2,3,4\} & 15 \\
$x^7+2x^5+5x^3+2x^2+x$ & - & 10 \\
$x^7+2x^5+ax^4+5x^3+8a^{-6}x^2+4x$ & \{1,2\} & 20\\
$x^7+2x^5+2x^4+5x^3+3x^2+8x$ & - & 10  \\
$x^7+2x^5+2ax^4+5x^3+10a^{-7}x^2+5a^4x$ & \{1,2\} & 20 \\
$x^7+2x^5+4ax^4+5x^3+5a^{-2}x^2+9a^{-1}x$ & \{1,2\} & 20 \\
$x^7+2x^5+5x^4+5x^3+8a^2x^2+5a^4x$ & \{1,2\} & 20 \\
 \hline
TOTAL &  & 225\\
\hline
\end{tabular}
\caption{nPP Equivalence Classes for $q=11$, degree 7, and primitive polynomial $x+4$.}
\label{tbl-EqCl-11-7}
\end{table}

\begin{table}[htb]
\centering
\vspace*{4mm}
\begin{tabular}{|r|c|}
\hline
\bfseries  class &  number of nPPs\\
\hline
$x^7$ & 1 \\
$x^7+x^5+4x^3+9x$ & 11\\
$x^7+x^5+x^4+21x^2+7x$ & 22 \\
$x^7+x^5+6x^4+2x^3+6x^2+7x$ & 22 \\
$x^7+x^5+11x^4+3x^3+10x^2+22x$ & 22 \\
$x^7+2x^5+6x^3+12x$ & 11 \\ 
 \hline
TOTAL &   89  \\
\hline
\end{tabular}
\caption{nPP Equivalence Classes for $q=23$, degree 7, and primitive polynomial $x+12$.}
\label{tbl-EqCl-23-7}
\end{table}

\begin{table}[htb]
\centering
\vspace*{4mm}
\begin{tabular}{|r|c|}
\hline
\bfseries  class &  number of nPPs\\
\hline
$x^7$ & 1 \\
$x^7+2x$ & 8\\
$x^7+x^5+x^3+13x$ & 12 \\
$x^7+2x^5+3x^3$ & 12 \\
$x^7+2x^5+3x^3+16x$ & 12 \\
 \hline
TOTAL &   45  \\
\hline
\end{tabular}
\caption{nPP Equivalence Classes for $q=25$, degree 7, and primitive polynomial $x^2+3x+3$.}
\label{tbl-EqCl-25-7}
\end{table}

\begin{table}[htb]
\centering
\vspace*{4mm}
\begin{tabular}{|r|c|c|}
\hline
\bfseries  class & $a$ & number of nPPs\\
\hline
$x^9+ax$ & \{1,7\} &  6 \\
$x^9+2ax^5+9a^2x$ & \{1,3\} &  12 \\
$x^9+x^7+2ax^5+22x^3+8a^{-1}x$ & \{1,2\} &  48  \\
$x^9+x^7+13x^5+4x^3+ax$ & \{5,6,22\} &  72  \\
$x^9+2x^7+10ax^3+23a^{-1}x$ & \{1,7\} & 24 \\
$x^9+2x^7+4x^5+4a^5x^3+21a^{-7}x$ & \{1,2\}& 48  \\
$x^9+2x^7+12a^9x^5+4a^{18}x^3+17a^{-12}x$ & \{1,2\} &  36  \\
$x^9+2x^7+3x^5$ & - & 12  \\
$x^9+2x^7+9x^5+5x$ & - & 12 \\

$x^9+x^7+9x^5+16a^3x^3+10a^2x$  & \{1,3\} &  48 \\
$x^9+x^7+3a^5x^5+16a^2x^3+10a^2x$ & \{1,2\} & 48  \\
$x^9+x^7+14ax^5+3a^4x^4+12x^3+11a^2x^2$ & \{1,2\} & 96  \\
$x^9+2x^7+5a^{11}x^5+x^3+8a^{-7}x^4+18a^{6}x^3+ \quad \quad $ &  &  \\
$+17a^6x^3+17a^{-1}x^2+9a^{-7}x$ & \{1,2\}  & 96  \\
$x^9+x^7+3a^6x^5+9a^{-2}x^4+20a^{-8}x^3+9a^{-8}x^2+6a^7x$ & \{1,2\} & 96 \\ 
$x^9+2x^7+9x^5+4a^6x^4+18a^{-14}x^3+18a^{-17}x^2+8a^{-4}x$ & \{1,2\} & 96 \\

$x^9+2x^7+2a^5x^5+16a^{-6}x^3+a^{14}x$ & \{1,2\} & 96
\\
$x^9+2x^7+11x^5+3a^2x^48a^9x^3+23a^{-1}x^2+9x$ & \{1,2\} & 96
\\
 $x^9+x^7+7x^5+4x^2+x$ & - &  24 \\
$x^9+2x^7+9x^5+16a^4x^3+5a^{12}x$ & \{1,2\} &  24 \\
 $x^9+2x^7+2x^5+x^4+19x^3+14x^2+23x$ & - &  48\\ 
 $x^9+2x^7+4x^5+11x^4+8x^3+9x$ & - &  48
 \\
 \hline
TOTAL &  &   1038\\
\hline
\end{tabular}
\caption{nPP Equivalence Classes for $q=25$, degree 9, and primitive polynomial $x^2+3x+3$.}
\label{tbl-EqCl-25-9}
\end{table}

\begin{table}[htb]
\centering
\vspace*{4mm}
\begin{tabular}{|r|c|c|}
\hline
\bfseries  class & $\alpha(x)$ & number of nPPs\\
\hline
$x^{10}+2x^8+\alpha(x)$ & $\{3x^6+4x^5+4x^4+x^3+22x^2+21x, $  & 3600 \\ 
& $10x^6+4x^5+20x^4+6x^2+23x,$ & \\
& $12x^6+10x^5+24x^4+6x^3+7x^2+16x\}$ &  \\

 \hline
TOTAL &  & 3600  \\
\hline
\end{tabular}
\caption{nPP Equivalence Classes for $q=25$, degree 10, and primitive polynomial $x^2+3x+3$.}
\label{tbl-EqCl-25-10}
\end{table}

\begin{table}[htb]
\centering
\vspace*{4mm}
\begin{tabular}{|r|c|}
\hline
\bfseries  class  & number of nPPs\\
\hline
$x^8+x^6+14x^4+x^3+14x^2$ &    26 \\
$x^8+x^6+x^5+14x^4+x^3+14x^2$ &  26 \\
$x^8+x^6+2x^5+4x^4+15x^3+14x^2+7x$ & 78  \\
$x^8+x^6+2x^5+24x^4ax+9x^3+14x^2+11x$& 78  \\
$x^8+x^6+8x^5+11x^4+16x^3+14x^2+6x$ & 78 \\
$x^8+2x^6+6x^5+26x^4+6x^3+17x^2+2x$ &  78   \\
\hline
TOTAL &    364\\
\hline
\end{tabular}
\caption{nPP Equivalence Classes for $q=27$, degree 8, and primitive polynomial $x^3+2x+1$.}
\label{tbl-EqCl-27-8}
\end{table}

\begin{table}[htb]
\centering
\vspace*{4mm}
\begin{tabular}{|r|c|c|}
\hline
\bfseries  class & $a$ & number of nPPs\\
\hline
$x^9$ & - &  1 \\
$x^9+x$ & - &  13 \\
$x^9+x^3+ax$ & \{0,2,3,5,8,9,14\} &  221  \\
$x^9+2x^3+ax$ & \{1,5,6,11,12,24\} &  182  \\
$x^9+x^5+x^3+x$ & - & 351 \\
$x^9+x^5+14x^3+ax$ & \{0,1\}& 702  \\
$x^9+2x^5+3x$& - &  351  \\
$x^9+2x^5+6a^2x^3+24ax$ &\{1,19\} & 1404  \\
$x^9+x^6+ax^5+x^3+14ax^2+a^2x$ & \{1,8,9,15\} & 7020 \\
$x^9+x^7+x^3+ax$  &\{1,14\} &  702  \\
$x^9+x^7+3ax^3+12a^{-1}x$ & $\{1,12\}$ &  1404 \\
$x^9+x^7+a^2x^4+x^3+14a^2x^2+14ax$ &\{1,3\}  &  2808 \\
$x^9+x^7+3ax^4+x^3+16ax^2+19a^{10}x$ &\{1,6\}  &  4212 \\
$x^9+x^7+2x^4+a^{16}x^3+14ax^2+a^5x$ &\{1,10\} &4212 \\ 
$x^9+2x^7+x$ & - & 351 \\
$x^9+2x^7+ax^3$ & \{4,17\} & 702 \\
 $x^9+2x^7+2x^4+11x^3+x^2+14x$ & - &  2106 \\
 $x^9+2x^7+10a^5x^3+17ax^2+11ax$ & \{1,5\} &  2808\\
 \hline
TOTAL &  &   29550\\
\hline
\end{tabular}
\caption{nPP Equivalence Classes for $q=27$, degree 9, and primitive polynomial $x^3+2x+1$.}
\label{tbl-EqCl-27-9}
\end{table}

\begin{table}[htb]
\centering
\vspace*{4mm}
\begin{tabular}{|r|c|c|}
\hline
\bfseries  class & $a$ & number of nPPs\\
\hline
$x^8$ & - &  1 \\
$x^8+x^2+ax$ &\{1,12,16\}  &  341 \\
$x^8+x^4+ax$ &\{1,2,8\}  & 341  \\
$x^8+x^4+x^2+ax$ & \{0,2,6\} & 341  \\
$x^8+x^4+2x^2+ax$ & \{0,3,6,7,8,13,14,16,18,19,24,25,31\}    & 2015 \\
$x^8+x^4+4x^2+ax$ & \{0,13,15,20,28,29,30\}& 1085   \\
$x^8+x^4+6x^2+ax$&\{4,7,11,12,15,20,23,24,28,29,30\} & 2015 \\
$x^8+x^4+8x^2+ax$&\{1,2,5,9,12,13,19,25,28\} & 1395 \\
$x^8+x^4+12x^2+ax$&\{3,4,6,11,20,29,30\} & 1085 \\
$x^8+x^4+16x^2+ax$&\{1,5,10,12,15,17,18,30\} & 1240 \\
$x^8+x^6+4x^5+x^4+23x^3+2x^2+23x$& - & 4960 \\
$x^8+x^6+6x^5+7x^3+2x^2$& - & 4960 \\
\hline
TOTAL &  &   19624\\
\hline
\end{tabular}
\caption{nPP Equivalence Classes for $q=32$, degree 8, and primitive polynomial $x^5+x^3+x^2+x+1$.}
\label{tbl-EqCl-32-8}
\end{table}

\begin{table}[htb]
\centering
\vspace*{4mm}
\begin{tabular}{|r|c|}
\hline
\bfseries  class  & number of nPPs\\
\hline
$x^9$  &  1 \\
$x^9+x^3+x$  &  31 \\
$x^9+x^6+x^2$  & 31  \\
$x^9+x^6+x^3$  & 31  \\
$x^9+x^7+x$    & 31 \\
$x^9+x^7+x^6+x^4+x^3$ & 31   \\
$x^9+x^7+16x^6+22x^5+16x^4+30x^3+17x^2+5x$ & 155 \\

\hline
TOTAL &    311\\
\hline
\end{tabular}
\caption{nPP Equivalence Classes for $q=32$, degree 9, and primitive polynomial $x^5+x^3+x^2+x+1$.}
\label{tbl-EqCl-32-9}
\end{table}

\begin{table}[htb]
\centering
\vspace*{4mm}
\begin{tabular}{|r|c|}
\hline
\bfseries  class & number of nPPs\\
\hline
$x^7$ &  1 \\
$x^7+x^5+26x^3+5x$ &   18 \\
$x^7+2x^5+28x^3+8x$ &   18 \\
\hline
TOTAL &  37  \\
\hline
\end{tabular}
\caption{nPP Equivalence Classes for $q=37$, degree 7, and primitive polynomial $x+13$.}
\label{tbl-EqCl-37-7}
\end{table}

\begin{table}[htb]
\centering
\vspace*{4mm}
\begin{tabular}{|r|c|c|}
\hline
\bfseries  class & $a$ & number of nPPs\\
\hline
$x^9$ & - &  1 \\
$x^9+2ax^5+3a^2x$ &\{1,2,3\}  &  30 \\
$x^9+x^6+6x^3$ & - &  40  \\
$x^9+x^7+4a^9x^5+36a^{19}x^3+33a^{-1}x$ & \{1,2\} &  40  \\
$x^9+x^7+14a^{25}x^5+28a^{-2}x^3+5a^4x$ & \{1,2\} & 40 \\
$x^9+ax^7+19a^{-8}x^6+5a^{11}x^5+3a^{-7}x^4+ \quad \quad $ & &\\
$+16a^{14}x^3+15a^7x^2+39a^{-10}x$ & \{1,2\} & 80   \\
$x^9+2x^7+15a^{16}x^5+2a^{-3}x^3+31a^{22}x$&\{1,2\} & 40 \\
$x^9+2x^7+33ax^5+18a^{22}x^15+15a^{22}x$&\{1,2\} & 40 \\
$x^9+2x^7+39x^5+25x^3+35x$& - & 20 \\
\hline
TOTAL &  &   331\\
\hline
\end{tabular}
\caption{nPP Equivalence Classes for $q=41$, degree 9, and primitive polynomial $x+17$.}
\label{tbl-EqCl-41-9}
\end{table}

\begin{table}[htb]
\centering
\vspace*{4mm}
\begin{tabular}{|r|c|}
\hline
\bfseries  class & number of nPPs\\
\hline
$x^9+2x^7+28x^5+25x^3+2x$ &  21 \\
$x^9+2x^7+39x^5+33x^3+30x$ &   21 \\
\hline
TOTAL &  42  \\
\hline
\end{tabular}
\caption{nPP Equivalence Classes for $q=43$, degree 9, and primitive polynomial $x+13$.}
\label{tbl-EqCl-43-9}
\end{table}

\begin{table}[htb]
\centering
\vspace*{4mm}
\begin{tabular}{|r|c|}
\hline
\bfseries  class  & number of nPPs\\
\hline
$x^{10}+x^7+28x^4+42x$ &  14 \\
$x^{10}+x^8+19x^7+x^6+33x^5+29x^4+13x^3+x^2+27x$ &  42 \\
$x^{10}+2x^8+12x^7+32x^6+32x^5+10x^4+39x^3+15x^2+10x$ &  42 \\
\hline
TOTAL & 98  \\
\hline
\end{tabular}
\caption{nPP Equivalence Classes for $q=43$, degree 10, and primitive polynomial $x+13$.}
\label{tbl-EqCl-43-10}
\end{table}

\begin{table}[htb]
\centering
\vspace*{4mm}
\begin{tabular}{|r|c|}
\hline
\bfseries  class  & number of nPPs\\
\hline
$x^9$ &  1 \\
$x^9+x^6+45x^3$ &  46 \\
$x^9+x^6+13x^5+46x^4+44x^3+39x^2+19x$ &  46  \\
$x^9+2x^7+13x^5+32x^3+22x$ &  23  \\
\hline
TOTAL &  116   \\
\hline
\end{tabular}
\caption{nPP Equivalence Classes for $q=47$, degree 9, and primitive polynomial $x+12$.}
\label{tb-EqCl-47-9}
\end{table}

\begin{table}[htb]
\centering
\vspace*{4mm}
\begin{tabular}{|r|c|}
\hline
\bfseries  class & number of nPPs\\
\hline
$x^9+3x^5+10x$ &  24 \\
$x^9+2x^7+3x^5+4x^3$ &  24 \\
$x^9+2x^7+21x^5+26x^3+36x$ &  48
\\
\hline
TOTAL &  96  \\
\hline
\end{tabular}
\caption{nPP Equivalence Classes for $q=49$, degree 9, and primitive polynomial $x^2+6x+3$.}
\label{tbl-EqCl-49-9}
\end{table}

\begin{table}[htb]
\centering
\vspace*{4mm}
\begin{tabular}{|r|c|}
\hline
\bfseries  class & number of nPPs\\
\hline
$x^{10}+x^7+9
x$ &  16
\\
\hline
TOTAL & 16  \\
\hline
\end{tabular}
\caption{nPP Equivalence Classes for $q=49$, degree 10, and primitive polynomial $x^2+6x+3$.}
\label{tbl-EqCl-49-10}
\end{table}

\begin{table}[htb]
\centering
\vspace*{4mm}
\begin{tabular}{|r|c|}
\hline
\bfseries  class  & number of nPPs\\
\hline
$x^9$ &  1 \\
$x^9+x^6+3x^3$ &  70 \\
\hline
TOTAL & 71  \\
\hline
\end{tabular}
\caption{nPP Equivalence Classes for $q=71$, degree 9, and primitive polynomial $x+38$.}
\label{tbl-EqCl-71-9}
\end{table}

\begin{table}[htb]
\centering
\vspace*{4mm}
\begin{tabular}{|r|c|}
\hline
\bfseries  class  & number of nPPs\\
\hline
$x^7$ &  1 \\
$x^7+x^5+44x^3+31x$ &  36 \\
$x^7+2x^5+46x^3+34x$ &   36 \\
\hline
TOTAL & 73  \\
\hline
\end{tabular}
\caption{nPP Equivalence Classes for $q=73$, degree 7, and primitive polynomial $x+20$.}
\label{tbl-EqCl-73-7}
\end{table}

\begin{table}[htb]
\centering
\vspace*{4mm}
\begin{tabular}{|r|c|}
\hline
\bfseries  class  & number of nPPs\\
\hline
$x^7$  &  1 \\
$x^7+x^5+72x^3+75x$ & 39 \\
$x^7+2x^5+74x^3+78x$ &  39  \\
\hline
TOTAL &   79\\
\hline
\end{tabular}
\caption{nPP Equivalence Classes for $q=79$, degree 7, and primitive polynomial $x+25$.}
\label{tbl-EqCl-79-7}
\end{table}

\begin{table}[htb]
\centering
\vspace*{4mm}
\begin{tabular}{|r|c|c|}
\hline
\bfseries  class & $a$ & number of nPPs\\
\hline
$x^9+ax$ & $\{0,2,3,5,6\}$ &  71 \\
$x^9+ax^3+ax$ & $\{2,3,5,8,12,17,23, \quad \quad $  & 1880 \\
  & $\quad 24,26,27,41,45,51\}$  & \\
$x^9+2x^3+ax$ & $\{0,3,5,6,7,9,10,12, \quad \quad $ &  1760  \\
  & $ \quad 14,19,29,30,33,39\}$  &  \\
$x^9+x^2ax^5+3a^2x$ & $\{1,2\}$ &  4860  \\
$x^9+x^6+8ax^5+x^3+48ax^2+15a^2x$ & $\{1,2\}$ & 38880 \\
$x^9+x^6+15ax^5+x^3+55ax^2+29a^2x$ & $\{1,4\}$ & 51840 \\
$x^9+2x^7+7a^2x^4+4x^3+48a^2x^2+67a^{-3}x$ & $\{1,2\}$ & 51840 \\
$x^9+x^6+24ax^5+x^3+64ax^2+47a^2x$ & $\{1,3\}$ & 51840 \\
$x^9+x^6+42ax^5+x^3+2ax^2+3a^2x$ & $\{1,4\}$ & 51840 \\
$x^9+2x^7+4x^3+5x$ &  & 3240 \\
$x^9+2x^7+ax^4+4x^3+42ax^2+11a^{-5}x$ & $\{1,2\}$ & 51840 \\
$x^9+2x^7+4a^2x^4+4x^3+45a^2x^2+33a^{-17}x$ & $\{1,2\}$ & 51840 \\
$x^9+2x^7+10a^2x^4+4x^3+51a^2x^2+57a^{-3}x$ & $\{1,2\}$ & 51840 \\ 
$x^9+2x^7+15a^3x^4+4x^3+56a^3x^2+21a^7x$ & $\{1,3\}$ & 51840 \\
$x^9+2x^7+3x^5+4x^3+5x$ & & 3240 \\
$x^9+2x^7+43x^5+5x$ & & 3240 \\

\hline
TOTAL &  & 471891  \\
\hline
\end{tabular}
\caption{nPP Equivalence Classes for $q=81$, degree 9, and primitive polynomial $x^4+2x^3+2$.}
\label{tbl-EqCl-81-9}
\end{table}

\begin{table}[htb]
\centering
\vspace*{4mm}
\begin{tabular}{|r|c|}
\hline
\bfseries  class  & number of nPPs\\
\hline
$x^7$  &  1 \\
$x^7+x^5+2x^3+59x$ & 44 \\
$x^7+2x^5+4x^3+62x$ &  44  \\
\hline
TOTAL &   89\\
\hline
\end{tabular}
\caption{nPP Equivalence Classes for $q=89$, degree 7, and primitive polynomial $x+76$.}
\label{tbl-EqCl-89-7}
\end{table}

\begin{table}[htb]
\centering
\vspace*{4mm}
\begin{tabular}{|r|r|r|r|r|}
\hline
\bfseries  $q$ & $d$ & class  & primitive polynomial & number of nPPs\\
\hline
32  &  7 & $x^7$ & $x^5+x^3+x^2+x$ & 1\\
& & $x^7+x^5+x$ & & 31  \\
\hline
41 & 7 & $x^7$ & $x+17$ & 1 \\
\hline
53 & 7 & $x^7$ & $x+51$ & 1 \\
& & $x^7+x^5+40x^3+25x$ & & 26 \\
& & $x^7+2x^5+42x^3+28x$ & & 26 \\
\hline
59& 7 & $x^7$ & $x+5$ & 1 \\
& & $x^7+x^5+32x^3+53x$ & & 29 \\
& & $x^7+2x^5+34x^3+56x$ & & 29 \\
& 9 &$x^9$ & & 1 \\
& & $x^9+x^6+41x^3$ & & 59 \\
& & $x^9+x^7+45x^5+22x^3+21x$ & & 29 \\
& & $x^9+x^7+29x^5+4x^3+49x$ & & 29 \\
\hline
61 & 7 & $x^7$ & $x+10$ & 1 \\
& & $x^7+x^5+25x^3+15x$ & & 30 \\
& & $x^7+2x^5+27x^3+18x$ & & 30 \\
\hline
67 & 7 & $x^7$ & $x+17$ & 1 \\
& & $x^7+x^5+45x^3+57x$& & 33 \\
& & $x^7+2x^5+47x^3+60x$ & & 33\\
\hline
83 & 9 & $x^9$ & $x+61$ & 1 \\
& & $x^9+x^6+67x^3$ & & 82 \\
\hline
89 & 9 & $x^9$ & $x+76$ & 1 \\
& & $x^9+x^6+66x^3$ & & 88 \\
\hline
97&7&$x^7$&$x+90$&1 \\
\hline
\end{tabular}
\caption{Number of nPPs in Equivalence Classes for prime powers $q$ and degree $d$, for given primitive polynomial.}
\label{tbl-EqCl-many}
\end{table}

\endgroup

\end{document}